\newsavebox\ideabox
\newenvironment{idea}
  {\begin{equation}
   \begin{lrbox}{\ideabox}
   \begin{minipage}{\dimexpr\columnwidth-2\leftmargini}
   \setlength{\leftmargini}{0pt}%
   \begin{quote}}
  {\end{quote}
   \end{minipage}
   \end{lrbox}\makebox[0pt]{\usebox{\ideabox}}
   \end{equation}}
\date{\vspace{-5ex}}
\title{\textbf{The Power of Filling in Balanced Allocations}\footnote{Some results from the paper were presented at SODA 2022 \cite{OurSODA}.}}
\author[1]{Dimitrios Los}  
\author[1]{Thomas Sauerwald}
\author[2]{John Sylvester}
\affil[1]{Department of Computer Science \& Technology, University of Cambridge, UK\\ \texttt{firstname.lastname@cl.cam.ac.uk}} 
\affil[2]{Department Of Computer Science, University of Liverpool, UK\\
	\texttt{john.sylvester@liverpool.ac.uk}}
\newcommand{\WOne}{{\hyperlink{w1}{\ensuremath{\mathcal{W}}}}\xspace}
\newcommand{\POne}{{\hyperlink{p1}{\ensuremath{\mathcal{P}}}}\xspace}
\begin{document}
 
\maketitle

\begin{abstract}

We introduce a new class of balanced allocation processes which are primarily characterized by ``filling'' underloaded bins. A prototypical example is the \textsc{Packing} process: At each round we only take one bin sample, if the load is below the average load, then we place as many balls until the average load is reached; otherwise, we place only one ball. We prove that for any process in this class the gap between the maximum and average load is $\mathcal{O}(\log n)$ w.h.p.\ for any number of balls $m\geq 1$. For the \textsc{Packing} process, we also provide a matching lower bound. Additionally, we prove that the \textsc{Packing} process is sample-efficient in the sense that the expected number of balls allocated per sample is strictly greater than one. Finally, we also demonstrate that the upper bound of $\mathcal{O}(\log n)$ on the gap can be extended to the \textsc{Memory} process studied by Mitzenmacher, Prabhakar and Shah (2002).

	\medskip

\noindent \textbf{\textit{Keywords---}}  Balls-into-bins, balanced allocations, potential functions, heavily loaded, gap bounds, maximum load, memory, two-choices, weighted balls. \\
\textbf{\textit{AMS MSC 2010---}} 68W20, 68W27, 68W40, 60C05

\end{abstract}

\section{Introduction}

We consider the sequential allocation of $m$ balls (jobs or data items) to $n$ bins (servers or memory cells), by allowing each ball to choose from a set of randomly sampled bins. The goal is to allocate balls efficiently, while also keeping the load distribution balanced. The balls-into-bins framework has found numerous applications in 
hashing, load balancing, routing, but is also closely related to more theoretical topics such as randomized rounding or pseudorandomness (we refer to the surveys~\cite{MR01} and~\cite{W17} for more details).

A classical algorithm is the \DChoice process introduced by Azar, Broder, Karlin \& Upfal~\cite{ABKU99} and Karp, Luby \& Meyer auf der Heide~\cite{KLM96}, where for each ball to be allocated, we sample $d \geq 1$ bins uniformly and then place the ball in the least loaded of the $d$ sampled bins. It is well-known that for the \OneChoice process ($d=1$), the gap between the maximum and average load is $\Theta\bigr( \sqrt{ m/n \cdot \log n } \bigr)$ \Whp, when $m \geq n \log n$. In particular, this gap grows significantly as $m \gg n$. For $d=2$,~\cite{ABKU99} proved that the gap is only $\log_2 \log n+\Oh(1)$ \Whp\ for $m=n$. This result was generalized by Berenbrink,  Czumaj, Steger \& V\"{o}cking~\cite{BCSV06} who proved that the same guarantee also holds for $m \geq n$, which is called the {\em heavily loaded case}. This means that the gap between the maximum and average load remains a slowly growing function in $n$ that is independent of $m$.
This dramatic improvement of \TwoChoice over \OneChoice is widely known as the ``power of two choices''.

It is natural to investigate allocation processes that are less powerful than \TwoChoice in their ability to sample two bins, sample uniformly or distinguish between the load of the sampled bins. Such processes make fewer assumptions than \TwoChoice and can thus be regarded as more sample-efficient and robust. As noted in~\cite{LXKGLG11}, communication is a shortcoming of \TwoChoice in some real-word systems:

\begin{idea}\label{eq:22quote}
	\textit{More importantly, the \textsc{[Two-Choice]} algorithm requires communication between dispatchers and processors at the time of job assignment. The communication time is on the critical path, hence contributes to the increase in response time.}
\end{idea}

One example of such an allocation process is the $(1+\beta)$-process analyzed by Peres, Talwar \& Wieder~\cite{PTW15}, where each ball is allocated using \OneChoice with probability $1-\beta$ and otherwise is allocated using \TwoChoice. The authors proved that for any $\beta \in (0,1]$, the gap is $\Oh\big(\frac{\log n}{\beta}+\frac{\log(1/\beta)}{\beta}\big)$ \Whp; thus for any $\beta=\Omega(\poly(1/n))$, the gap is $\Oh\big(\frac{\log n}{\beta}\big)$. Hence, only a ``small'' fraction of \TwoChoice rounds are enough to inherit the property of \TwoChoice that the gap is independent of $m$. A similar class of adaptive sampling schemes (where, depending on the loads of the samples so far, the ball may or may not sample another bin) was analyzed by Czumaj and Stemann~\cite{CS01}, but their results hold only for $m=n$.

Another important process is \TwoThinning, which was studied in~\cite{IK04,FG18}. In this process, each ball first samples a bin uniformly and based on some criterion (e.g., a threshold on the load) decides whether to allocate the ball there or not. If not, the ball is placed into a {\em second} bin sample, without inspecting its load. In~\cite{FG18}, the authors proved that for $m=n$, there is a \TwoThinning process which achieves a gap of $\Oh\big({ \scriptstyle \sqrt{ \frac{\log n}{\log \log n}}} \big)$ \Whp. A protocol is introduced and analyzed in \cite{AugustineMRU16} which utilizes $d$ adaptive bin samples (on average) and improves significantly on the gap bound of \DChoice. These results present a vast improvement over \OneChoice, however the total number of samples is greater than one per ball. Similar threshold processes have been studied in queuing~\cite{ELZ86}, \cite[Section 5]{M96} and discrepancy theory~\cite{DFG19}.
For values of $m$ sufficiently larger than $n$, \cite{FGL21} and \cite{LS21} prove some lower and upper bounds for a more general class of \emph{adaptive} thinning protocols (here, adaptive means that the choice of the threshold \emph{may} depend on the load configuration). Related to this line of research, the authors of \cite{LS21} also analyze a so-called \Quantile-process, which is a version of \TwoThinning where the ball is placed into a second sample only if the first bin has a load which is at least the median load.

Finally, we mention the \Memory process analyzed by Mitzenmacher, Prabhakar \& Shah~\cite{MPS02}, which is essentially a version of the \TwoChoice process with a cache. At each round, we take a uniform bin sample but we also have access to a cache. Then the ball is placed in the least loaded of the sampled and the cached bin, and after that the cache is updated if needed. It was shown in~\cite{MPS02} that for $m=n$, the process achieves an asymptotically better gap than \TwoChoice. In this work we prove (to the best of our knowledge) the first $\Oh(\log n)$ bound on the gap in the heavily loaded case. Later, in \cite{SODA23}, the authors of this paper improved this to the asymptotically tight $\Oh(\log \log n)$ bound. Luczak \& Norris~\cite{LN13} analyzed the related ``supermarket'' model with memory, in the queuing setting. 

From a more technical perspective, apart from analyzing a large class of natural allocation processes, an important question is to understand how sensitive the gap is to changes in the allocation probability vector. To this end,~\cite{PTW15} formulated general conditions on the probability vector, which, when satisfied in all rounds, imply a small gap bound. These were then later applied not only to the $(1+\beta$)-process, but also to analyze a ``graphical'' allocation model where a pair of bins is sampled by picking an edge of a graph uniformly at random. 

Wieder~\cite{W07} studied the \DChoice process with an $(a, b)$-biased sampling vector $s$, meaning that the probability of sampling bin $i$ is $\frac{1}{an} \leq s_i \leq \frac{b}{n}$. They showed that for the \DChoice process there exist constants $A := A(d), B := B(d)$, so that for every $(a, b)$-biased sampling vector with $a < A$ and $b < B$, the process maintains the $\Oh(\log \log n)$ gap bound, otherwise, there are sampling vectors, for which the process has an $\Omega(\sqrt{\frac{m}{n} \cdot \log n})$ gap for $m \gg n$.

Several variants of balls-into-bins have been studied, including allocations on graphs~\cite{KP06,BF22} and hypergraphs~\cite{G08,GMP20}, balls-into-bins with correlated choices~\cite{W07}, balls-into-bins with hash functions~\cite{CRSW11} and balls-into-bins with deletions~\cite{BK22,CFMMRSU98}.

\paragraph{Brief Summary of Our Results} 
In this work, we address the shortcoming of \TwoChoice observed in~\cite{LXKGLG11} (see the quote \eqref{eq:22quote} above), by studying processes that can allocate more than one balls to a sampled bin. While most of the allocation processes studied before are based on a ``sampling bias'' towards underloaded bins, our new class consists of  processes with very weak sampling requirements (e.g., uniform sampling like in~\OneChoice suffices). However, our class is based on a so-called
``filling operation'', meaning in the most basic form means that once an undeloaded bin is found, it is filled with balls until the average load is reached. This avoids the use of a second sample (like in \TwoChoice, $(1+\beta)$-process or \Thinning) and also does not require the allocator to hold and compare the load of two sampled bins (like in \TwoChoice or $(1+\beta)$-process). Therefore, due to the reduced sampling and communication costs, this operation seems well suited to scenarios where an allocator needs to quickly assign a large number of jobs to a set of servers.

In order to capture not only the prototypcial \Packing process but also a variety of other processes, we formulate two conditions, \POne and \WOne, which give rise to a broader class of allocation processes with this ``filling'' behavior, we call these \Filling processes. Very roughly, \POne  stipulates that a procedure is used to sample a single bin in each round that is not more biased towards overloaded bins than the uniform distribution. Again, very roughly, \WOne states that if we sample an underloaded bin then we allocate the amount of balls that would bring that bin up to the average load, however we may distribute them \textit{almost} arbitrarily over the underloaded bins; otherwise, we allocate one ball. 

As our first main result (\Cref{thm:filling_key}), we prove that if \POne \emph{and} \WOne both hold, then \Whp\ for any round $m$, a gap bound of $\Oh(\log n)$ follows.  Note that from here on, we will generally use $m$ to denote the number of rounds, which may be different (i.e., smaller than) the number of total balls allocated. While it is easy to show that \Packing meets the two conditions, some care is needed to apply the framework to \Memory due to its use of the cache, which creates strong correlations between the allocation of any two consecutive balls. The ability to allocate the balls arbitrarily among underloaded bins, means that the framework directly captures versions of the \Packing process augmented with simple heuristics. For example, if the bins are uniformly accessed but also locally networked, then a bin could spread the balls it receives in a round to neighboring underloaded bins.

We also show that any process satisfying  \POne and \WOne allocates $1+c$ balls (for some constant $c > 0$) per round in expectation (\Cref{thm:sample_efficient}). A direct consequence of this is that \Packing is more sample efficient than \OneChoice, while still achieving an $\Oh(\log n)$ gap. This matches the gap bound of the $(1+\beta)$-process for constant $\beta \in (0, 1)$, which requires strictly more than one sample per ball, demonstrating the ``power of filling'' in balanced allocations.  We further investigate this phenomenon by analyzing two variants of \Packing: $(i)$ \TightPacking, where the filling balls are adversarially allocated to the highest underloaded bins  and $(ii)$ \Packing with an $(a, b)$-biased sampling vector, where the bins may be selected using a sampling vector that majorizes \OneChoice. We show that in contrast to \DChoice, for arbitrary $a, b > 1$ being functions of $n$, the gap of this process is independent of $m$.

For \Packing, we also prove a matching lower bound on the gap of $\Omega(\log n)$ for any $m = \Omega(n \log n)$. Our results on the gap are summarized in \Cref{tab:overview}. We note that this work first appeared as part of the conference paper \cite{OurSODA}, the results on \textsc{Non-Filling} processes from \cite{OurSODA} have also subsequently been refined \cite{los2023meanbiased}.

\begin{table}[h]	\resizebox{\textwidth}{!}{
		\renewcommand{\arraystretch}{1.5}
		\centering
		\begin{tabular}{|c|cc|cc|cc|cc|}
			\hline 
			\multirow{2}{*}{Process} & \multicolumn{4}{c|}{Lightly Loaded Case $m=\Oh(n)$} & \multicolumn{4}{c|}{Heavily Loaded Case $m \geq n$} \\ \cline{2-9}
			& \multicolumn{2}{c|}{Lower Bound} & \multicolumn{2}{c|}{Upper Bound} & \multicolumn{2}{c|}{Lower Bound} & \multicolumn{2}{c|}{Upper Bound}
			\\ \hline 
			\makecell{$(1+\beta)$-process \\ with $\beta = 1/2$} & \multicolumn{4}{c|}{\phantom{[14]}$\qquad \quad\;\,\frac{\log n}{\log \log n}\qquad\quad\;\,$\cite{RS98} } & \multicolumn{4}{c|}{\phantom{[23]}$\qquad \quad\;\,\ \log n\qquad\quad\;\,$\cite{PTW15}} \\ 	
			\hhline{|-|-|-|-|-|-|-|-|-|}
			\multirow{2}{*}{\Memory} & \multicolumn{4}{c|}{\multirow{2}{*}{\phantom{[12]}$\qquad \quad\log \log n\quad \qquad $\cite{MPS02}}} & \multirow{2}{*}{$\log\log n$}  & \multirow{2}{*}{\cite{SODA23}} & \cellcolor{Greenish} $\log n$  & \cellcolor{Greenish} \footnotesize{(Thm~\ref{thm:filling_key})} \\ 
			\cline{8-9}
			& \multicolumn{4}{c|}{} & &  & $\log\log n$  & \cite{SODA23} \\ 
			\hhline{|-|-|-|-|-|-|-|-|-|}
			\Packing & \cellcolor{Greenish}$\frac{\log n}{\log \log n}$ &  \cellcolor{Greenish} \footnotesize{(Lem~\ref{lem:uniform_prob_lower_bound})} & \cellcolor{Greenish}$\frac{\log n}{\log \log n}$& \cellcolor{Greenish}\footnotesize{(Obs~\ref{obs:upper_bound_lightly_loaded})} & \cellcolor{Greenish}$\log n$ & \cellcolor{Greenish}\footnotesize{(Lem~\ref{lem:packing_logn_lb})} & \cellcolor{Greenish}$\log n$ & \cellcolor{Greenish}\footnotesize{(Thm~\ref{thm:filling_key})} \\
			\hhline{|-|-|-|-|-|-|-|-|-|}
			\TightPacking & \cellcolor{Greenish}$\frac{\log n}{\log \log n}$ & \cellcolor{Greenish} \footnotesize{(Lem~\ref{lem:uniform_prob_lower_bound})} & \cellcolor{Greenish}$\frac{\log n}{\log \log n}$ & \cellcolor{Greenish}\footnotesize{(Obs~\ref{obs:upper_bound_lightly_loaded})} &\cellcolor{Greenish}$ \frac{\log n}{\log \log n}$ & \cellcolor{Greenish}\footnotesize{(Lem~\ref{lem:uniform_prob_lower_bound})} & \cellcolor{Greenish}$\log n$ & \cellcolor{Greenish}\footnotesize{(Thm~\ref{thm:filling_key})} \\
			\hhline{|-|-|-|-|-|-|-|-|-|}
			\hline 
			
		\end{tabular}

	}
	\phantom{need to keep the table separate from caption }

	\caption{Overview of the Gap achieved (with probability at least $1-n^{-1}$ for upper bounds and at least a positive constant for lower bounds), by different allocation processes considered in this work (and others). All stated bounds hold asymptotically; upper bounds hold for all values of $m$, while lower bounds may only hold for certain values of $m$. Cells shaded in \hlgreenish{~Green~} are results proved in this paper. 
	}\label{tab:overview}
\end{table}

\paragraph{Proof Overview \& Ideas}
To prove an upper bound on the gap for \Filling processes, we consider an exponential potential function $\Phi^{t}$ with parameter $\alpha > 0$ (a variant of the one used in~\cite{PTW15, S77}). This potential function considers only bins that are overloaded by at least two balls; thus it is blind to the load configuration across underloaded bins. We upper bound $\Ex{ \Phi^{t+1} \, \mid \, \Phi^{t}}$ by an expression that is maximized if the process uses the uniform distribution for picking a bin (see \Cref{lem:filling}). We then use this upper bound to deduce that: $(i)$ in many rounds, the potential drops by a factor of at least $1-\Omega(\frac{\alpha}{n})$, and $(ii)$ in other rounds, the potential increases by a factor of at most $1+\mathcal{O}(\frac{\alpha^2}{n})$ (see \Cref{lem:filling_good_quantile}). Taking $\alpha$ sufficiently small and constructing a suitable 
super-martingale, we conclude that $\Ex{\Phi^{t}}=\Oh(n)$ for all rounds $t \geq 0$. The desired gap bound is then implied by Markov's inequality.
There are several challenges in extending the potential function approach of \cite{PTW15}. The conditions in \cite{PTW15} do not cover allocating a varying number of balls (dependent on the load distribution) in a single round, and some care is needed as in our processes the load of underloaded bins can change by more than a constant amount in each round. Additionally, for the processes considered in \cite{PTW15}, the potential drops in expectation in each round regardless of the load configuration. However, for our processes this is not the case (see Section \ref{sec:potentialgoesup} of the appendix for a concrete configuration which causes the potential to increase in expectation).

\paragraph{Organization}  In \Cref{sec:notation} we define some basic mathematical notation needed for our analysis and introduce some previously studied balanced allocation processes related to our work. In \Cref{sec:framework} we present our framework for \Filling processes and state our $\Oh(\log n)$ gap bound and sample efficiency bounds for \Packing. Following this we apply these general results to the \Packing, \TightPacking, and \Memory processes. In \Cref{sec:analysis_filling}, we prove an $\Oh(\log n)$ bound on the gap for \Filling processes and a bespoke bound for \Packing with an $(a, b)$-biased probability vector (which is not a \Filling process). In \Cref{sec:unfolding_general} we prove the gap bound for the so-called ``unfolding'' of a \Filling process, and then we prove that this applies to \Memory. The derivation of the lower bounds is given in \Cref{sec:lower}. Our bounds on the sample-efficiency (and a related quantity called throughput) of \Filling processes are then proved in \Cref{sec:sampleeff}. In \Cref{sec:experiemental_results} we empirically compare the gaps of the different processes. Finally, in \Cref{sec:conclusions} we conclude by summarizing our main results and pointing to some open problems.

\section{Notation and Preliminaries}\label{sec:notation}

Let $x^t$ denote load vector of the $n$ bins at round $t=0,1,2,\ldots$ (the state after $t$ rounds have been completed). So in particular, $x^0:=(0,\ldots,0)$. In many parts of the analysis, we will assume an arbitrary labeling of the $n$ bins so that their loads at round $t$ are ordered non-increasingly, i.e., 
\[
x_1^t \geq x_2^t \geq \cdots \geq x_n^t.
\]

We now give a formal definition of the classical \DChoice process~\cite{ABKU99,KLM96} for reference.  

\begin{framed}
	\vspace{-.45em} \noindent
	\underline{\DChoice Process:}\\
	\textsf{Iteration:} For each $t \geq 0$, sample $d$ bins $i_1,\ldots, i_d$ with replacement, independently and uniformly. Place a ball in a bin $i_{\min}$ satisfying $x_{i_{\min}}^t = \min_{1\leq j\leq d} x_{i_{j}}^t $, breaking ties arbitrarily.\vspace{-.5em}
\end{framed}
\noindent Mixing \OneChoice with \TwoChoice rounds at a rate $\beta$, one obtains the $(1+\beta)$-process~\cite{PTW15}:
\begin{framed}
	\vspace{-.45em} \noindent
	\underline{$(1+\beta)$-Process:}\\
	\textsf{Parameter:} A probability $\beta \in (0,1]$.\\
	\textsf{Iteration:} For each $t \geq 0$, with probability $\beta$ allocate one ball via the \TwoChoice process, otherwise allocate one ball via the \OneChoice process. \vspace{-.5em}
\end{framed}

Unlike the standard balls-into-bins processes, some of our processes may allocate more than one ball in a single round. To this end we define $W^t := \sum_{i\in[n]} x_i^t$ as the total number of balls that are allocated by round $t$ (for \DChoice we allocate one ball per round, so $W^t=t$; however our processes may allocate more than one ball in each round, that is $W^t \geq t$). Formally, an \textit{allocation process} satisfies that $(x_i^{t+1} - x_i^t) \in \N$ for any bin $i \in [n]$ (we only add balls) and $W^{t+1} \geq W^t + 1$ (we allocate at least one ball) at any round $t \geq 0$. For some processes, it makes sense to define $S^t$ as the number of bins sampled after $t$ rounds, thus for \Packing we have $S^t=t$ for all $t\geq 0$.

We define the gap as $\Gap(t):=\max_{i \in [n]} x_i^{t} - \frac{W^t}{n}$, which is the difference between the maximum load and average load at round $t$. When referring to the gap of a specific process $P$, we write $\Gap_{P}(t)$ but may simply write $\Gap(t)$ if the process under consideration is clear from the context.
Finally, we define the normalized load of a bin $i$ as:
\[
y_i^{t} :=  x_i^t - \frac{W^t}{n}.
\]
Further, let $B_{+}^{t}:=\left\{ i \in [n] \colon y_i^{t} \geq 0 \right\}$ be the set of overloaded bins, and $B_{-}^{t} := [n] \setminus B_{+}^{t}$ be the set of underloaded bins. 

Similarly to \cite{PTW15}, some of the allocation processes we study, sample a bin $i^t$ in each round $t$, using a probability vector $p^t$, where $p_j^{t}$ is the probability that the process samples the $j$-th most heavily loaded bin in round $t$. We denote this sampling a according $p^t$ by $i^t \in_{p^t} [n]$. A special case of a probability vector is the uniform vector of \OneChoice, which is $p_i^t = p_i = \frac{1}{n}$ for all $i  \in [n]$ and $t \geq 0$. A probability vector is called $(a, b)$-biased for $a, b \geq 1$ if $\frac{1}{an} \leq p_i \leq \frac{b}{n}$ for all $i \in [n]$. For two probability vectors $p$ and $q$ (or analogously, for two sorted load vectors), we say that $p$ majorizes $q$ if for all $1 \leq k \leq n$,
$
\sum_{i=1}^{k} p_i \geq \sum_{i=1}^k q_i.
$

We define $\mathfrak{F}^{t}$ as the filtration corresponding to the first $t$ allocations of the process (so in particular, $\mathfrak{F}^{t}$ reveals $x^{t}$ and $i^{t-1}$). In general, \textit{with high probability} (\Whp) refers to probability of at least $1 - n^{-c}$ for some constant $c > 0$. For random variables $Y,Z$ we say that $Y$ is stochastically smaller than $Z$ (or equivalently, $Y$ is stochastically dominated by $Z$)  if $ \Pro{ Y \geq x } \leq \Pro{ Z \geq x }$ for all real $x$. Throughout the paper, we often make use of statements and inequalities which hold only for sufficiently large $n$. For simplicity, we do not state this explicitly.

\section{Our Results on Filling Processes} \label{sec:framework}
In this section we shall present our main results. In the first subsection we present the framework for \Filling processes, then state the gap bound (\Cref{thm:filling_key}) and finally show they are sample-efficient (\Cref{thm:sample_efficient}) under the additional assumption of sampling each bin uniformly. In \Cref{sec:processes} we introduce the \Packing process, which is our prototypical example of a \Filling process, and give a lower bound showing our general gap bound is essentially best possible. The \TightPacking process is also introduced as an example of a more adversarial process that nevertheless still fits our framework, and thus also has a logarithmic gap. Finally in \Cref{sec:memory} we introduce the \Memory process and a method we call ``unfolding'' that can extend the applicability of our gap result. In particular, although \Memory is not a \Filling process we show that it is an unfolding of a suitable \Filling process, and thus we obtain a gap bound for \Memory for a polynomial number of balls. Finally in \Cref{sec:packing_with_negative_bias} we analyze the \Packing process with an $(a, b)$-biased probability vector, which is not a \Filling process but highlights the power of filling, as the gap is still independent of $m$.

\subsection{Filling Processes}
We now formally define the main class of processes studied in this paper. Recall that $B_{-}^{t}:=\left\{ i \in [n] \colon y_i^{t} < 0 \right\}$ is the set of underloaded bins after $t$ rounds have been completed. 

\begin{framed}
	\vspace{-.45em} \noindent
	\underline{\textsc{Filling}~Processes:}\\
	For each round $t \geq 0$,
	we sample a bin $i=i^t$ and then place a certain number of balls to $i$ (or other bins). More formally:
	\medskip 
	
	\noindent\textbf{Condition \hypertarget{p1}{$\mathcal{P}$}}: For each round $t\geq 0$, pick an arbitrary labeling of the $n$ bins such that $y_1^{t} \geq y_2^{t} \geq \cdots \geq y_n^{t}$. Then let $i = i^t \in_{p^t} [n]$, where $p^t$ is any probability vector which is majorized by the uniform distribution (\OneChoice). 
	
	\medskip

	\noindent\textbf{Condition \hypertarget{w1}{$\mathcal{W}$}}: For each round $t\geq 0$, with $i$ being the bin chosen above:
	\begin{itemize}
		\item If $y_i^{t} \geq 0$ then allocate a single ball to bin $i$.
		\item If $y_i^{t} < 0$ then allocate exactly $\lceil - y_i^{t} \rceil + 1 \geq 2$ balls to underloaded bins such that there can be at most two bins $k_1, k_2 \in B_-^t$ where:
		\begin{enumerate}
			\item $k_1 \in B_-^t$ receives $\lceil -y_{k_1}^t \rceil + 1$ balls (i.e., $x_{k_1}^{t+1}= \big\lceil \frac{W^{t}}{n} \big\rceil + 1$),
			\item $k_2 \in B_-^t$ receives $\lceil -y_{k_2}^t \rceil$ balls  (i.e., $x_{k_2}^{t+1}= \big\lceil \frac{W^{t}}{n} \big\rceil $),
			\item each $j \in B_-^t\backslash\{k_1,k_2\}$ receives at most $\lceil -y_{j}^t \rceil - 1$ balls (i.e.,  $x_{j}^{t+1} \leq \big\lceil \frac{W^{t}}{n} \big\rceil -1 $).
		\end{enumerate}
	\vspace{-.5cm}
	\end{itemize} 
\end{framed}

\begin{figure}[ht]
	\centering
	\includegraphics[scale=0.56]{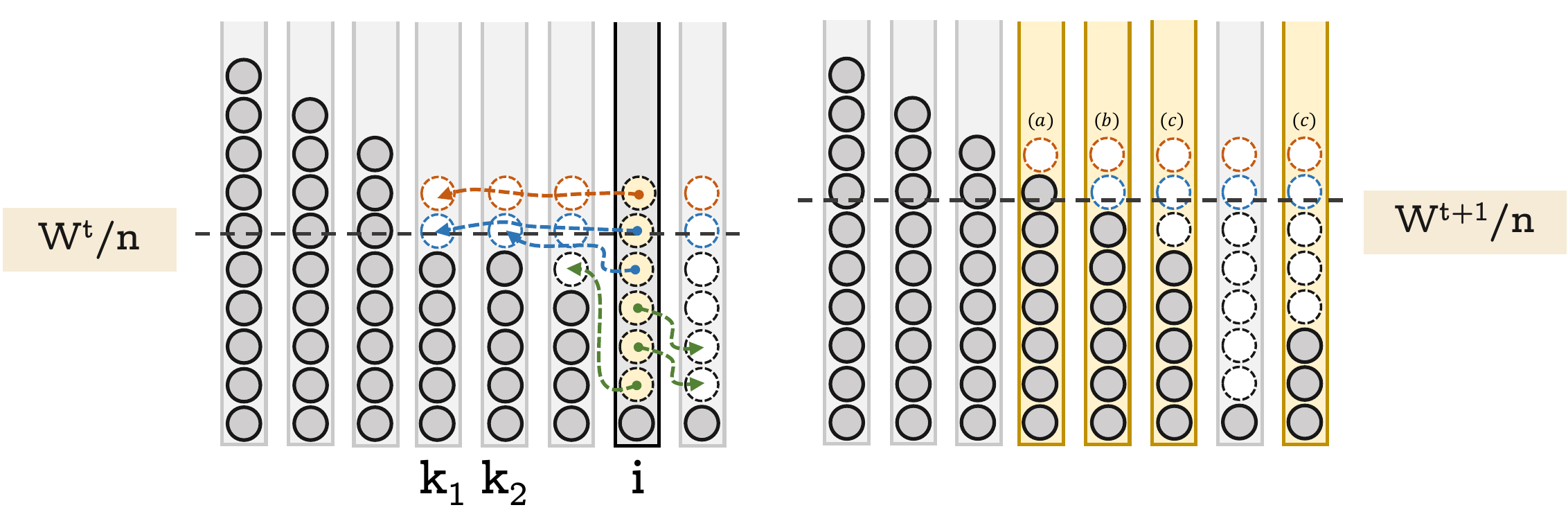}
	\caption{Illustration of one round for a \Filling process. After the underloaded bin $i$ is picked, $\lceil -y_i^{t} \rceil + 1 = 6$ balls are allocated. Only one bin $k_1$ receives $\lceil -y_{k_1}^t \rceil + 1$ balls and only one bin $k_2$ receives $\lceil -y_{k_2}^t \rceil$ balls, i.e., at most one allocated bin attains a load in the orange region and at most one in the blue region. All other bins $j$ receive at most $\lceil -y_{j}^t \rceil - 1$ balls.}
	\label{fig:general_filling_process}
\end{figure}

See \Cref{fig:general_filling_process} for an illustration of the effect of condition \WOne, if an underloaded bin is chosen.

We emphasize that in condition \POne, the distribution $p^{t}$ may not only be time-dependent, but may also depend on the entire history of the process until round $t$, that is, the filtration $\mathfrak{F}^{t}$. We point out that condition \WOne seems both a bit technical and arbitrary. For clarity, observe that within one round condition \WOne can make at most two formerly unloaded bins overloaded; one bin by at most two balls (case $(a)$) and one bin by at most one ball (case $(b)$). This could be further relaxed (and generalized) by allowing a constant number of bins to become overloaded by a constant number of balls. However, this would come at the cost of making the analysis more tedious, while the specific choice of condition \WOne already suffices to cover the processes defined later in this section. 

Also, the allocation used in \WOne may depend on the filtration $\mathfrak{F}^{t}$. Thus the framework also applies in the presence of an adaptive adversary, which directs all the $\lceil -y_i^t \rceil + 1$ balls to be allocated in round $t$ towards the ``most loaded'' underloaded bins (see \TightPacking in \Cref{sec:processes}). At the other end of the spectrum, there are natural processes which have a propensity to place these balls into ``less loaded'' underloaded bins. One specific, more complex example is the \Memory process, where due to the update of the cache after each single ball, the allocation is more skewed towards ``less loaded'' underloaded bins. However, as we shall discuss shortly, \Memory only satisfies \POne and \WOne after a suitable ``folding'' of rounds.

Our main result is that \Filling processes have gap of order $\log n$ with high probability.

\begin{restatable}{thm}{fillingkey}\label{thm:filling_key}
	There exists a constant $C>0$, such that for any allocation process satisfying conditions \POne and \WOne, and any round $m \geq 1$, we have
	\[
	\Pro{ \Gap(m) \leq C \log n} \geq 1-n^{-2}.
	\]
\end{restatable}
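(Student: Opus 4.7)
The plan is to follow the exponential-potential strategy sketched in the proof overview, adapting the framework of Peres, Talwar and Wieder to the multi-ball filling setting. I would define
\[
\Phi^t \ = \ \sum_{i=1}^n e^{\alpha \cdot (y_i^t - 1)_+},
\]
for a sufficiently small absolute constant $\alpha > 0$, where the shift by $1$ (or by any fixed constant) is important because, as the overview stresses, only bins overloaded by at least two balls should contribute; this insensitivity to the configuration among underloaded bins is essential since condition \WOne grants the process considerable freedom in distributing the filling balls there.

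First I would compute $\Ex{\Phi^{t+1} \mid \mathfrak{F}^t}$ by conditioning on the bin $i$ sampled under $p^t$. If $y_i^t \geq 0$, one ball lands on $i$ and the mean rises by $1/n$, so the contribution of $i$ scales by roughly $e^{\alpha(1-1/n)}$ while every other bin's contribution contracts by the factor $e^{-\alpha/n}$. If $y_i^t < 0$, then $\lceil -y_i^t\rceil + 1$ balls are poured into underloaded bins and the average jumps accordingly; consequently every pre-existing overloaded bin's contribution strictly decreases, and condition \WOne guarantees that at most two formerly underloaded bins cross the mean and each by only a constant amount, contributing an additive $O(1)$ to $\Phi^{t+1}$. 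The per-round change is non-increasing in the rank of the sampled bin (sampling a heavier bin is worse), so condition \POne lets me upper-bound the expectation by the value obtained under uniform sampling, which is the content of \Cref{lem:filling}.

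Next I would carry out the dichotomy of \Cref{lem:filling_good_quantile}: combining the above with a Taylor expansion $e^x \leq 1 + x + x^2$ for small $|x|$ and splitting bins by whether they contribute a ``large'' or ``small'' share of $\Phi^t$ yields a bound of the shape
\[
\Ex{\Phi^{t+1} \mid \mathfrak{F}^t} \ \leq \ \Phi^t\left(1 - \frac{c_1 \alpha}{n}\right) + C_2,
\]
where the multiplicative contraction comes from the many overloaded bins whose $y$-value is pushed downward, and $C_2$ absorbs both the $O(1)$ boundary crossings from a filling round and the $O(\alpha^2/n)\cdot\Phi^t$ growth from the sampled-overloaded branch. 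Iterating this supermartingale-type inequality from $\Phi^0 = n$ yields $\Ex{\Phi^t} \leq C' n$ for all $t \geq 0$ once $\alpha$ is chosen as a sufficiently small absolute constant. Markov's inequality at level $C' n \cdot n^2$, combined with the pointwise bound $\Phi^t \geq e^{\alpha(\Gap(t)-1)_+}$ from the heaviest bin, then gives $\Pro{\Gap(m) > C\log n} \leq n^{-2}$ for a large enough $C$, which is exactly the stated conclusion.

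The main obstacle, flagged explicitly in the overview, is that $\Phi^t$ need not drop in expectation in every single round: a configuration in which nearly every bin is only mildly underloaded can make a single filling operation push several bins across the mean, causing $\Phi^t$ to jump. The delicate point is therefore to choose the shift in the definition of $\Phi$ together with the constraints of \WOne (in particular that all but two of the filled bins receive at most $\lceil -y_j^t\rceil - 1$ balls and so remain strictly below the mean) so that those rounds add only $O(1)$ in \emph{absolute} terms to the potential, which the contraction from the ``good'' rounds can then absorb. A secondary technical difficulty is that many bins change simultaneously in a filling round, so one must handle the multiple exponential increments jointly, rather than ball-by-ball as in the classical $(1+\beta)$ analysis.
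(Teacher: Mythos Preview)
Your outline captures the right potential function, the majorization step reducing to uniform sampling (\Cref{lem:filling}), and the final Markov argument. However, the central step is wrong: you claim a single inequality
\[
\Ex{\Phi^{t+1}\mid\mathfrak{F}^t}\ \leq\ \Phi^t\Bigl(1-\frac{c_1\alpha}{n}\Bigr)+C_2
\]
valid in \emph{every} round, with $C_2$ an absolute constant that ``absorbs the $O(\alpha^2/n)\cdot\Phi^t$ growth''. A constant cannot absorb a term proportional to $\Phi^t$, and in fact no per-round multiplicative contraction exists for this potential: the paper exhibits an explicit configuration (one bin at $\sqrt{n}$, $n-\sqrt{n}-1$ bins at $0$, $\sqrt{n}$ bins at $-1$) for which
\[
\Ex{\Phi^{t+1}\mid\mathfrak{F}^t}\ \geq\ \Phi^t\Bigl(1+0.1\cdot\frac{\alpha^2}{n}\Bigr),
\]
regardless of the constant shift in the definition of $\Phi$. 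So the ``bad'' rounds are not merely additive $O(1)$ bumps; they genuinely inflate the potential multiplicatively, and your iteration from $\Phi^0=n$ would diverge.

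What the paper actually does, and what is missing from your plan, is a two-regime analysis followed by a combinatorial counting argument. \Cref{lem:filling_good_quantile} gives two bounds: an unconditional one with factor $(1+c_1\alpha^2/n)$, and a strictly contracting one with factor $(1-c_2\alpha/n)$ valid only under the event $\mathcal{G}^t=\{|B_-^t|\geq n/20\}\cup\{\Delta^t\geq n/10\}$. The crucial new ingredient is \Cref{cor:enough_good_quantiles}, a purely deterministic lemma showing that in any window of $n$ consecutive rounds at least $n/40$ of them satisfy $\mathcal{G}^t$; this is where the specifics of condition \WOne (at most two bins cross the mean per round) are really exploited. One then builds a compensated process $\tilde{\Phi}^s$ that multiplies $\Phi^s$ by $\exp(c_2\alpha G/n - c_1\alpha^2 B/n)$, where $G,B$ count good and bad rounds, proves it is a supermartingale over each length-$n$ window, and only then concludes $\Ex{\Phi^m}=O(n)$. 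Without identifying $\mathcal{G}^t$, proving the $n/40$ lower bound, and constructing this compensated supermartingale, the argument does not close.
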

This result is proven in \Cref{sec:analysis_filling} by analyzing a variant of the exponential potential function $\Phi$ over the overloaded bins and eventually establishing that for any $m \geq 1$,
$
\Ex{ \Phi^m } = \Oh(n).
$
Then a simple application of Markov's inequality yields the desired result for the gap.

Recall that $W^t = \sum_{i\in[n]} x_i^t$ is the total number of balls allocated by round $t$. The \textit{throughput} of a \Filling process at round $t$ is defined as\[
\mu^t := \frac{W^t}{t},
\]
this is the average balls placed per bins sampled during the first $t$ rounds. The following theorem bounds the expected throughput for \Filling processes.

\begin{restatable}{thm}{sampleefficient}\label{thm:sample_efficient}
	There exist constants $c,C >0$ such that for any allocation process satisfying conditions \POne and \WOne and round $m\geq 2$, we have
	\[\ex{\mu^m }\geq 1 + c, \]
	and if additionally the probability vector of the allocation process is uniform at each round (this still satisfies \POne), then for any round $m \geq 2$,  	\[1 + c \leq \ex{\mu^m }\leq C.\]
\end{restatable}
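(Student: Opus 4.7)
The plan is to first compute the expected per-round increment in closed form. Under condition \WOne, a round places exactly one ball if an overloaded bin is sampled and exactly $\lceil -y_{i^t}^t\rceil+1$ balls otherwise, so
\[
\Ex{W^{t+1}-W^t\mid\mathfrak{F}^t}=1+\sum_{i\in B_-^t}p_i^t\,\lceil -y_i^t\rceil.
\]
Since the bins are labeled with $y_i^t$ non-increasing, the weights $\lceil -y_i^t\rceil\cdot\mathbf{1}_{i\in B_-^t}$ are non-decreasing in $i$; together with condition \POne (i.e., $\sum_{i\leq k}p_i^t\leq k/n$ for every $k$), an Abel summation yields $\sum_i p_i^t\lceil -y_i^t\rceil\mathbf{1}_{y_i^t<0}\geq(1/n)\sum_{i\in B_-^t}\lceil -y_i^t\rceil$. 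Combining this with $\lceil -y_i^t\rceil\geq 1$ on $B_-^t$ and, in the opposite direction, $\lceil x\rceil\leq x+1$, gives the two-sided estimate
\[
1+\frac{|B_-^t|}{n}\;\leq\;\Ex{W^{t+1}-W^t\mid\mathfrak{F}^t}\;\leq\;2+\frac{U^t}{n},
\]
where the upper inequality assumes $p^t$ is uniform and $U^t:=\sum_{i\in B_-^t}(-y_i^t)=\sum_{i\in B_+^t}y_i^t$ (using $\sum_i y_i^t=0$).

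\textbf{Upper bound for uniform sampling.} Summing the upper estimate over $t=0,\ldots,m-1$ reduces matters to showing $\Ex{U^t}=\Oh(n)$ uniformly in $t$. The plan is to reuse the exponential-potential bound $\Ex{\Phi^t}=\Oh(n)$ established during the proof of \Cref{thm:filling_key}: the pointwise inequality $y\leq(e^{\alpha y}-1)/\alpha$ valid for $y\geq 0$ gives $\Ex{\sum_{i\in B_+^t}y_i^t}\leq\Ex{\Phi^t}/\alpha=\Oh(n)$, hence $\Ex{U^t}=\Oh(n)$ and $\Ex{\mu^m}\leq C$.

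\textbf{Lower bound and main obstacle.} For the lower bound, summing the lower estimate reduces the task to establishing $\sum_{t=0}^{m-1}\Ex{|B_-^t|}\geq c\,n m$ for an absolute constant $c>0$. In any round where $|B_-^t|\geq n/4$ this is immediate, so the genuine difficulty is the regime $|B_-^t|<n/4$. The plan is a short-window argument: condition \WOne permits at most two underloaded bins to become overloaded in a single round, so $|B_-|$ cannot collapse abruptly; meanwhile the identity $\sum_i y_i^t=0$ combined with $|B_+^t|>3n/4$ forces many overloaded bins to sit within $\Oh(1/n)$ of the average, and the cumulative $\Omega(1/n)$ per-round upward drift of the average should then flip a constant fraction of these bins into $B_-$ within $\Oh(1)$ rounds, producing $\Ex{|B_-^{t+s}|\mid\mathfrak{F}^t}=\Omega(n)$ for some $s=\Oh(1)$. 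Averaging over constant-length windows yields $\sum_t\Ex{|B_-^t|}=\Omega(nm)$. The main technical obstacle is making this ``flip'' estimate quantitative and uniform across all bad configurations: condition \WOne allows an adversary to anchor bins (the $k_1,k_2$ of the \WOne definition) precisely at the rounded-up average, and the integer structure of loads limits when near-average bins actually cross zero. I would address this via a careful case split on $W^t\bmod n$, supplemented by a direct verification of the initial cases $m\in\{2,3\}$ using the deterministic fact $|B_-^1|=n-1$.
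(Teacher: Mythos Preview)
Your per-round identity and the upper-bound argument are essentially the paper's: reduce to uniform $p^t$ via Abel summation, bound the increment by $2+U^t/n$, and use $\Ex{\Phi^t}=\Oh(n)$ to get $\Ex{U^t}=\Oh(n)$. One small slip: $\Phi^t$ only sums over bins with $y_i^t\geq 2$, so $\sum_{i\in B_+^t}y_i^t\leq \Phi^t/\alpha$ is not quite right; you need the extra $+2n$ for the bins with $y_i^t\in[0,2)$, exactly as in the paper's \Cref{eq:exp_delta_bound}. This is harmless.

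The lower bound, however, has a genuine gap. Your key structural claim --- that $|B_+^t|>3n/4$ ``forces many overloaded bins to sit within $\Oh(1/n)$ of the average'' --- is false. Take $4n/5$ bins with $y_i=+1$ and $n/5$ bins with $y_i=-4$: then $|B_-^t|=n/5<n/4$, yet every overloaded bin is at distance $1$ from the average, not $\Oh(1/n)$. For any of these bins to flip to underloaded the average must rise by more than $1$, which requires allocating more than $n$ balls; since each round places at most $\lceil -y_i^t\rceil+1\leq 6$ balls here, this takes $\Omega(n)$ rounds, not $\Oh(1)$. So your short-window argument cannot close, and verifying only $m\in\{2,3\}$ as base cases is then far from enough.

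The paper's remedy is twofold. First, it does \emph{not} reduce to $\sum_t\Ex{|B_-^t|}=\Omega(nm)$; instead it keeps the sharper per-round lower bound
\[
\Ex{W^{t+1}-W^t\mid\mathfrak{F}^t}\;\geq\;1+\max\!\Big\{\tfrac{|B_-^t|}{n},\;\tfrac{\Delta^t}{2n}\Big\},
\]
so that rounds with small $|B_-^t|$ but large $\Delta^t$ still contribute. Second, it proves the deterministic window lemma (\Cref{cor:enough_good_quantiles}): in any interval of $n$ consecutive rounds, at least $n/40$ of them satisfy $|B_-^t|\geq n/20$ \emph{or} $\Delta^t\geq n/10$. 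The drift argument you sketch (near-average bins flip once the average rises) is exactly what underlies that lemma, but it operates over a window of length $\Theta(n)$ and is triggered only when $\Delta^s\leq n/10$; the complementary case $\Delta^t\geq n/10$ is handled by the $\Delta^t/(2n)$ branch of the per-round bound. A separate elementary argument then covers all $m\leq n$. Your plan is salvageable if you (i) replace the $\Oh(1)$ window by an $n$-window, and (ii) retain the $\Delta^t$ term rather than discarding it for $|B_-^t|$ alone.
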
  

Further, recall that $S^t$ is the number of bins sampled by round $t$. We define the \textit{sample efficiency} by  \[ \eta^t  = \frac{W^t}{S^t},\] this is the average balls placed per bin sampled during the first $t$ rounds. Observe that for any $t\geq 1$ we have $\eta^t = 1$ for \OneChoice and $\eta^t = 1/2$ for \TwoChoice deterministically. For \Packing, $S^t = t$ and so $\eta^t = \mu^t$. Therefore, \Cref{thm:sample_efficient} implies that \Packing is more sample-efficient than \OneChoice by a constant factor in expectation. 
\begin{restatable}{cor}{PackingSampleEfficient} \label{cor:packing_sample_efficient}
	There exist constants $c, C > 0$ (given by \Cref{thm:sample_efficient}) such that for any round $m \geq 2$, the the \Packing process satisfies \[
	1 + c \leq \Ex{\eta^m} \leq C.
	\]
\end{restatable}

The key idea in the proof of \Cref{thm:sample_efficient} is the observation that within any window of $n$ rounds, in a constant proportion of the rounds there is either a large number of underloaded bins or the absolute value potential function is linear. If the first event holds in a round, then we pick an underloaded bin with constant probability, thus conditional on this event the expected number of balls increases by more than one since we allocate at least two balls to an underloaded bin. The increase in the expected number of balls allocated conditional on the second event comes from a direct relationship between the number of balls allocated in a round and the absolute value potential.    

For the lightly loaded case $m = n$, we can obtain a tighter upper bound on the gap, by observing that for a bin to have a gap of $g$ it must be chosen as an overloaded bin at least $g-2$ times. Hence, by the maximum load of \OneChoice for $m = n$ (e.g., \cite[Lemma 5.1]{MU2017}), we get:

\begin{obs} \label{obs:upper_bound_lightly_loaded}
	For any allocation process satisfying conditions \POne and \WOne we have
	\[
	\Pro{\Gap(n) \leq 3 \cdot \frac{\log n}{\log \log n}} \geq 1 - n^{-1}.
	\]
\end{obs}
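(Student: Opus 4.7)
The plan is to fix the bin $i$ attaining $\Gap(n) = g$ at round $n$ and decompose the balls it accumulates into two regimes governed by condition \WOne: (a) rounds where $i$ is itself the sampled bin while overloaded, each of which adds exactly one ball to $i$, and (b) rounds where $i$ receives filling balls while underloaded. The crucial observation is that after any round $t$ of type (b), the three clauses of \WOne pin $x_i^{t} \leq \lceil W^t/n \rceil + 1 \leq W^n/n + 2$, regardless of whether $i$ plays the role of $k_1$, $k_2$, or an ``other'' bin in $B_-^t$.

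Let $s^*$ denote the last round $\leq n$ of type (b) for $i$ (with $s^* = 0$ if none). Then $x_i^{s^*} \leq W^n/n + 2$, and after round $s^*$ the bin $i$ can only gain via type-(a) events, each contributing one ball, so writing $C_i$ for the number of such events in $\{s^*+1,\ldots,n\}$ yields $x_i^n \leq x_i^{s^*} + C_i \leq W^n/n + 2 + C_i$, and hence $C_i \geq g - 2$. The statement therefore reduces to showing $\max_{j \in [n]} C_j \leq 3\log n / \log\log n - 2$ with probability at least $1 - n^{-1}$.

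For this last step I would compare the type-(a) events with $n$ rounds of \OneChoice. Condition \POne ensures that the rank-one (currently maximally loaded) bin is sampled with probability at most $1/n$ in every round, and more generally the total probability mass on the set of overloaded bins in round $t$ is at most $|B_+^t|/n$. Using these bounds I plan to couple the sequence of type-(a) events with $n$ rounds of uniform sampling so that each $C_j$ is stochastically dominated by the count of hits on bin $j$ in the coupled \OneChoice process. The maximum load of \OneChoice with $n$ balls into $n$ bins is at most $3\log n / \log\log n$ with probability $\geq 1 - n^{-1}$ by \cite[Lemma~5.1]{MU2017}, and a union bound over $j \in [n]$ then yields the required control on $\max_j C_j$.

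The delicate point I expect is the stochastic-domination argument: \POne does not enforce $p_j^t \leq 1/n$ for non-top positions $j > 1$, so a naive per-round coupling between ``\Filling hits bin $j$ as overloaded'' and ``\OneChoice hits bin $j$'' is not immediately available. The argument should however go through by observing that the specific bin occupying a fixed non-top rank varies from round to round, so the extra probability mass \POne permits on such a rank is distributed across many bins rather than accumulating on a single target bin.
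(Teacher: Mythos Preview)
Your proposal is correct and follows exactly the paper's approach: the paper's entire argument is the one-line remark preceding the Observation, namely that a bin with gap $g$ must have been chosen as an overloaded bin at least $g-2$ times, after which one invokes the \OneChoice maximum-load bound from \cite[Lemma~5.1]{MU2017}. Your decomposition via $s^*$ and $C_i$ is a careful unpacking of that sentence. The coupling subtlety you flag for general \POne (that $p_r^t\le 1/n$ need not hold for ranks $r>1$) is real and is not addressed in the paper either; for the concrete processes \Packing and \TightPacking the sampling is uniform and the issue does not arise, which is presumably why the paper treats the step as immediate.
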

In \Cref{lem:uniform_prob_lower_bound} we prove a matching lower bound for any process which samples one bin uniformly in each round. 
\subsection{The Packing Process} \label{sec:processes}

A natural example of a process satisfying \POne and \WOne is \Packing, which places ``greedily'' as many balls as possible into an underloaded bin (i.e., a bin with load below average).   

\begin{framed}
	\vspace{-.45em} \noindent
	\underline{\Packing~Process:}\\
	\textsf{Iteration:} For each $t \geq 0$, sample a uniform bin $i$, and update its load:
	\begin{equation*} x_{i}^{t+1} =
		\begin{cases}
			x_{i}^{t} + 1 & \mbox{if $x_{i}^{t} \geq  \frac{W^{t}}{n} $}, \\
			\left\lceil \frac{W^{t}}{n} \right\rceil + 1  & \mbox{if $x_{i}^{t} <  \frac{W^{t}}{n} $}.
		\end{cases}
	\end{equation*}\vspace{-1.5em}
\end{framed} 

See \Cref{fig:packing_process} for an illustration of \Packing. It is simple to show that this is a \Filling process.

\begin{figure}[h!]
	\begin{center}
		\includegraphics[scale=0.5]{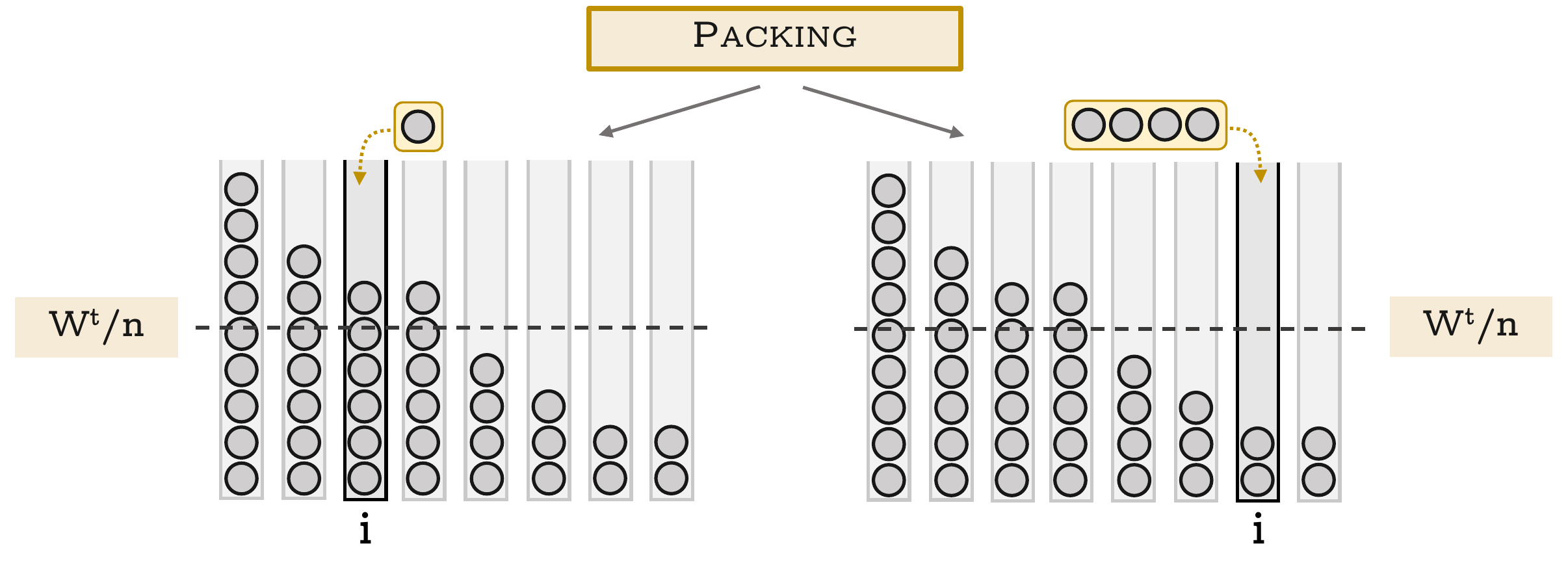}
	\end{center}
	\vspace{-0.3cm}
	\caption{Illustration of the two different possibilities in a single round of the \Packing process: (\textbf{left}) allocating a single ball to an overloaded bin and (\textbf{right}) filling an underloaded bin.}\label{fig:packing_process}
\end{figure}

\begin{lem}
	The \Packing process satisfies conditions \POne and \WOne.
\end{lem}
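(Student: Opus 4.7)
The plan is to verify conditions \POne and \WOne directly from the definition of the \Packing process, treating the two cases (overloaded vs.\ underloaded chosen bin) separately. The main bookkeeping is a small floor/ceiling calculation relating the filled load to $\lceil - y_i^t\rceil + 1$; there is no real obstacle beyond that.

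First I would handle \POne. Since \Packing samples a bin uniformly at random, the probability vector $p^t$ is identically $(1/n, \ldots, 1/n)$ regardless of the labeling used. The uniform distribution is (trivially) majorized by itself, since all partial sums of a sorted $p^t$ equal $k/n$. Thus \POne holds in every round, independently of the history.

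For \WOne, fix a round $t$ and let $i = i^t$ be the sampled bin. If $y_i^t \geq 0$, then by definition \Packing sets $x_i^{t+1} = x_i^t + 1$ and changes no other bin, which exactly matches the first bullet of \WOne. The interesting case is $y_i^t < 0$, where \Packing sets $x_i^{t+1} = \lceil W^t/n \rceil + 1$ and leaves all other bins unchanged. I need to check two things: (a) the number of balls allocated equals $\lceil - y_i^t\rceil + 1$, and (b) the ``only two exceptional bins'' structure holds (here with only one exceptional bin, namely $k_1 = i$, and no $k_2$).

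For (a), the number of balls added to $i$ is
\[
x_i^{t+1} - x_i^t \;=\; \bigl\lceil W^t/n \bigr\rceil + 1 - x_i^t.
\]
Since $x_i^t \in \mathbb{Z}$, the quantity $-y_i^t = W^t/n - x_i^t$ has the same fractional part as $W^t/n$, so $\lceil -y_i^t\rceil = \lceil W^t/n\rceil - x_i^t$ when $W^t/n \notin \mathbb{Z}$, and $\lceil -y_i^t\rceil = W^t/n - x_i^t = \lceil W^t/n\rceil - x_i^t$ when $W^t/n \in \mathbb{Z}$. Either way, the number of balls added equals $\lceil -y_i^t\rceil + 1$, as required, and the final load of $i$ is $\lceil W^t/n\rceil + 1$, matching case (1) of \WOne with $k_1 = i$.

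For (b), no $k_2$ is needed: for every other underloaded bin $j \in B_-^t \setminus \{i\}$ we have $x_j^{t+1} = x_j^t$, and since $x_j^t$ is an integer with $x_j^t < W^t/n$, this gives $x_j^t \leq \lceil W^t/n\rceil - 1$, matching condition (3). This completes the verification of both \POne and \WOne, so \Packing is a \Filling process and in particular \Cref{thm:filling_key} applies to it.
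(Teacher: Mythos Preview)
Your proof is correct and follows the same approach as the paper's own proof, which simply notes that uniform sampling satisfies \POne and that allocating $\lceil -y_i^t\rceil + 1$ balls to bin $i$ (or one ball if $y_i^t \geq 0$) satisfies \WOne. Your version is just more detailed, spelling out the ceiling calculation and the check that the remaining underloaded bins trivially satisfy clause (3); this extra care is fine and all the bookkeeping is right.
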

\begin{proof}
	The \Packing process picks a uniform bin $i$ at each round $t$, thus it satisfies \POne. Furthermore, if $y_i^{t} < 0$, it allocates exactly $\lceil -y_i^{t} \rceil + 1$ balls to bin $i$; otherwise, it allocates one ball to $i$, and thus \WOne is also satisfied.
\end{proof}

The \Packing process is quite similar to \OneChoice in that it samples one bin per round, however, Theorem \ref{thm:filling_key} shows  that it has a $\Oh(\log n)$ gap. We also prove the following lower bound which shows that the upper bound is essentially best possible.

\begin{restatable}{thm}{packlower} \label{lem:packing_logn_lb}
	There exists a constant $\kappa > 0$ such that for any $m \geq \kappa n \log n$ the \Packing process satisfies 	\[
	\Pro{\Gap(m) \geq \frac{\sqrt{\kappa}}{20} \cdot \log n} \geq \frac{1}{2}.
	\]
\end{restatable}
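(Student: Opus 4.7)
The plan is to lower bound the maximum load of \Packing by the maximum number of samples received by any single bin within a window of $M := \kappa n \log n$ rounds ending at round $m$, and to upper bound the average load via the throughput bound of \Cref{thm:sample_efficient}. Since every \Packing round places at least one ball into the sampled bin, the maximum load dominates the maximum sample count; meanwhile, by binomial anti-concentration the maximum sample count over the window exceeds its mean $M/n = \kappa \log n$ by $\Omega(\sqrt{\kappa}\,\log n)$ with constant probability. For $\kappa$ sufficiently small, this deviation dominates the slack (at most linear in $\kappa$) incurred from the throughput upper bound on the average load.

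Concretely, I would fix $m \geq \kappa n \log n$ and let $T_i$ count the samples of bin $i$ in the window $[m-M, m]$, so that $T_i \sim \mathrm{Bin}(M, 1/n)$ with mean $\mu = \kappa \log n$, jointly independent of $\mathfrak{F}^{m-M}$. Classical anti-concentration for the maximum of $n$ such binomials (Cram\'er-type lower tails, noting $n^{1 - c_1^2/2}\to\infty$ whenever $c_1 < \sqrt{2}$) delivers an absolute constant $c_1$ close to $\sqrt{2}$ with
\[
\Pro{ \max_i T_i \geq \kappa \log n + c_1 \sqrt{\kappa}\,\log n } \geq 3/4.
\]
Let $i^*$ attain this maximum. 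Since each \Packing sample adds at least one ball, $x_{i^*}^m \geq x_{i^*}^{m-M} + T_{i^*}$. Using $\Ex{W^m - W^{m-M}} \leq C M$ from \Cref{thm:sample_efficient} together with Markov's inequality, $W^m - W^{m-M} \leq 4CM$ holds with probability at least $3/4$, giving
\[
\Gap(m) \geq y_{i^*}^m \geq y_{i^*}^{m-M} + \bigl(c_1 \sqrt{\kappa} - (4C-1)\kappa\bigr) \log n.
\]

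The key remaining step is a lower bound on $y_{i^*}^{m-M}$. Since the window samples are independent of $\mathfrak{F}^{m-M}$, the argmax bin $i^*$ is, marginally, uniformly distributed over $[n]$, enabling an averaging argument. By \Cref{thm:filling_key}, with probability at least $1 - n^{-2}$ every bin satisfies $y_i^{m-M} \leq C_0 \log n$ for some constant $C_0$; combined with the identity $\sum_i y_i^{m-M} = 0$, this gives $|\{i \colon y_i^{m-M} < -kC_0 \log n\}| \leq n/k$, and hence $\Pro{y_{i^*}^{m-M} \geq -kC_0\log n} \geq 1 - 1/k - n^{-2}$ for any $k > 0$. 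Choosing $\kappa$ a sufficiently small positive constant and $k$ a moderate constant so that $c_1\sqrt{\kappa} - (4C-1)\kappa - kC_0 \geq \sqrt{\kappa}/20$, and taking a union bound over the favorable events above, yields $\Pro{\Gap(m) \geq (\sqrt{\kappa}/20)\log n} \geq 1/2$.

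The hardest part will be lower bounding $y_{i^*}^{m-M}$: the bin $i^*$ is determined by the fresh window samples, and a priori its normalized load at the start of the window could be quite negative, which would swamp the gain of $\Theta(\sqrt{\kappa}\log n)$ earned inside the window. The crucial observation is that by the symmetry of uniform sampling within the window, $i^*$ is marginally uniform over $[n]$, so the upper bound on the gap from \Cref{thm:filling_key} combined with $\sum_i y_i = 0$ supplies a clean averaging argument that forces at most a $1/k$ fraction of bins to be extremely underloaded.
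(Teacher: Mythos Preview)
Your overall strategy---windowing, binomial anti-concentration for the maximum sample count, and a Markov bound on the throughput---matches the paper's approach. The decisive difference is in how you control $y_{i^*}^{m-M}$, and here your argument breaks down.

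You invoke \Cref{thm:filling_key} to get $y_i^{m-M} \leq C_0 \log n$ for all $i$, and combine with $\sum_i y_i^{m-M} = 0$ to conclude that at most $n/k$ bins satisfy $y_i^{m-M} < -kC_0 \log n$. That reasoning is correct, but the resulting bound $y_{i^*}^{m-M} \geq -kC_0 \log n$ is too weak: your final inequality reads
\[
c_1\sqrt{\kappa} - (4C-1)\kappa - kC_0 \;\geq\; \frac{\sqrt{\kappa}}{20},
\]
and this cannot be satisfied. To keep the probability $1 - 1/k$ usable you need $k$ bounded below by a moderate constant (say $k \geq 8$), so $kC_0$ is a fixed positive constant independent of $\kappa$. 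But $c_1\sqrt{\kappa} \to 0$ as $\kappa \to 0$, so making $\kappa$ small drives the left-hand side negative. Conversely, making $\kappa$ large enough that $c_1\sqrt{\kappa} > kC_0$ forces the $(4C-1)\kappa$ term to dominate $c_1\sqrt{\kappa}$. There is no choice of constants that closes the argument.

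The paper avoids this by using the much sharper control coming from $\Ex{\Delta^{t_0}} = \Oh(n)$ (\Cref{eq:exp_delta_bound}). By Markov, $\Delta^{t_0} \leq 8\tilde{c}n$ with probability $\geq 7/8$, and on that event at most $n/16$ bins can have $y_i^{t_0} < -128\tilde{c}$. Hence the maximally-sampled bin $i^*$ (uniform over $[n]$ by symmetry, exactly as you argue) satisfies $y_{i^*}^{t_0} \geq -128\tilde{c}$---an $\Oh(1)$ deficit, not $\Theta(\log n)$---with probability $\geq 15/16$. Now the $\sqrt{\kappa}\log n$ gain from the window genuinely dominates for small $\kappa$. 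Replacing your appeal to \Cref{thm:filling_key} with this $\Delta$-potential bound fixes the proof.
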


The \Packing process arises naturally when one is trying to achieve a small gap from few bin queries or random samples. In contrast, the following process is rather contrived, as whenever an underloaded bin is chosen, balls can be placed into (possibly different) underloaded bins that have the highest load. However, it is interesting that even this process achieves a small gap, as we shall show it is a \Filling process.  Furthermore, experiments suggest that this process frequently leads to load configurations where the lightest bin has a normalized load of $-\omega(\log n)$. We leave it as an open problem to establish this theoretically. If this is the case then it would show that the hyperbolic cosine potential with constant smoothing parameter used in \cite{PTW15} cannot be used directly to deduce an $\mathcal{O}(\log n)$ gap bound for \Filling processes.   
\begin{framed}
	\vspace{-.45em} \noindent
	\underline{\TightPacking~Process:}\\
	\textsf{Iteration:} For each $t \geq 0$, sample a uniform bin $i$, and update:\vspace{-.2cm }
	\begin{equation*}
		\left\{ \!\!\!	\begin{tabular}{ l l   }
			$x_{i}^{t+1} = x_{i}^{t} + 1$ &  \mbox{\phantom{shift}if $x_{i}^{t} \geq  \frac{W^{t}}{n} $}, \vspace{.25cm }\\
			\text{allocate $\lceil - y_i^t\rceil + 1$ balls one by one into the bins $\ell$} &  \multirow{3}{*}{\mbox{\phantom{shift}if $x_{i}^{t} <  \frac{W^{t}}{n}$,}}   \\ 
			\text{with highest loads such that $x_{\ell}^{t+1} < \frac{W^t}{n}$, except for} &   \\
			\text{one bin $j$ that gets $x_j^{t+1} = \big\lceil \frac{W^{t}}{n} \big\rceil + 1$.} &
		\end{tabular}\right. 
	\end{equation*}\vspace{-1.5em}
\end{framed}

An equivalent (and more formal) description of \TightPacking is the following: Assume the bin loads are decreasingly sorted  $x_1^{t} \geq x_2^{t} \geq \cdots \geq x_{n}^t$. If the selected bin $i\in [n]$ is overloaded, then we allocate one ball in $i$. Otherwise, we update the load of the maximally loaded underloaded bin $j \in [n]$ at round $t$ to $x_j^{t+1} = \big\lceil \frac{W^t}{n} \big\rceil + 1$. 
Then the remaining $(\lceil - y_i^t \rceil + 1) - (\lceil -y_j^t \rceil +1) = \lceil - y_i^t \rceil - \lceil -y_j^t \rceil \geq 0$ balls (if there are any), are allocated to bins $j+1,j+2,\ldots,j+\ell$ for some integer $\ell \geq 0$, such that all bins $k \in [j+1,j+\ell-1]$ have $x_k^{t+1} = \big\lceil \frac{W^t}{n} \big\rceil - 1$ and $x_{j + \ell}^{t+1} \in [ x_{j + \ell}^t, \big\lceil \frac{W^t}{n} \big\rceil )$. See \Cref{fig:tight_packing_process} for an illustration of the \TightPacking process. 
\begin{figure}[H]
	\begin{center}
		\includegraphics[scale=0.5]{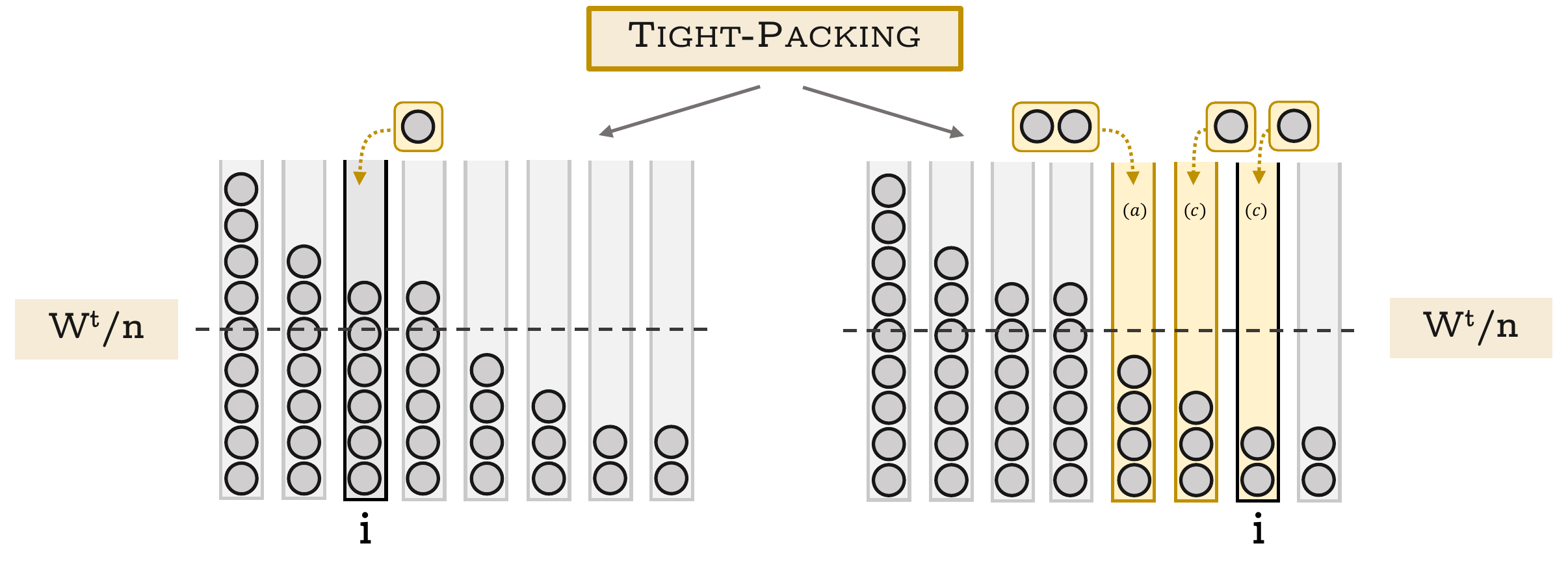}
	\end{center}
	\vspace{-0.3cm}
	\caption{Illustration of the two different possibilities in a single round of the \TightPacking process: (\textbf{left}) allocating a single ball to an overloaded bin and (\textbf{right}) allocating as many balls as the underload of the selected bin, to the most underloaded bins. Note that only one of the three bins where balls were allocated, attains a load of $\geq \lceil W^t/n\rceil$.}\label{fig:tight_packing_process}
\end{figure}

As promised we now show that this is also a \Filling process.

\begin{lem}
	The \TightPacking process satisfies conditions \POne and \WOne.
\end{lem}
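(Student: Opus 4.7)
The plan is to verify conditions \POne and \WOne directly from the equivalent formal description of \TightPacking given just before the lemma. Condition \POne is immediate: \TightPacking samples a bin uniformly at random in each round, so its sampling vector $p^t$ is the uniform distribution, which majorizes itself.

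For \WOne, I split on whether the sampled bin $i$ is overloaded. If $y_i^t \geq 0$ then exactly one ball is placed in $i$, matching the first bullet. If $y_i^t < 0$ then $\lceil -y_i^t \rceil + 1$ balls are allocated in total, which matches the count required by \WOne. I propose to take $k_1 := j$, the maximally loaded underloaded bin, which \TightPacking raises to load $\lceil W^t/n \rceil + 1$ and therefore receives exactly $\lceil -y_j^t \rceil + 1$ balls; the slot $k_2$ is left unused, which is permitted by the ``at most'' clause.

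It remains to check that every other underloaded bin receives at most $\lceil -y_k^t \rceil - 1$ balls. The bins that receive any balls are $j+1, \ldots, j+\ell$, and since bins are sorted non-increasingly, the inequality $x_{j+1}^t \leq x_j^t < W^t/n$ places all of them in $B_-^t$. The intermediate bins $j+1, \ldots, j+\ell-1$ end at load $\lceil W^t/n \rceil - 1$, receiving exactly $\lceil -y_k^t \rceil - 1$ balls; the last bin $j+\ell$ ends at an integer load in $[x_{j+\ell}^t, \lceil W^t/n \rceil - 1]$, hence receives at most $\lceil -y_{j+\ell}^t \rceil - 1$ balls. Any underloaded bin outside $\{j, \ldots, j+\ell\}$ receives zero balls, and because $y_k^t < 0$ forces $\lceil -y_k^t \rceil \geq 1$, the trivial bound $0 \leq \lceil -y_k^t \rceil - 1$ holds.

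There is no real obstacle here; the argument is a direct bookkeeping check once the formal description of \TightPacking is in place. The only subtle items are the sorted-order observation that certifies every recipient is underloaded, and the edge case of underloaded bins that receive no balls at all.
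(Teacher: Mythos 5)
Your proposal is correct and follows essentially the same approach as the paper: verify \POne directly from uniform sampling, then check \WOne by identifying one bin ($j$, which you call $k_1$) that receives $\lceil -y_j^t\rceil +1$ balls as case $(a)$, leaving the $k_2$ slot unused, and observing that every other recipient bin receives at most $\lceil -y_k^t\rceil -1$ balls as required by case $(c)$. Your version merely spells out the sorted-order argument and the zero-ball edge case in more detail than the paper does.
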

\begin{proof}
	The \TightPacking process picks a uniform bin $i$ at each round $t$, thus it satisfies \POne.
	Further, the allocation satisfies the following properties regarding the allocation of balls.
	
	First, in case of $y_i^{t} < 0$, then:
	$(i)$ we allocate
	exactly $\lceil -y_i^{t} \rceil + 1$ balls, $(ii)$ one bin receives $\lceil - y_j^{t} \rceil + 1$ balls (satisfying $(a)$), 
	and $(iii)$ every other bin $j$ receives a number of balls between $[0, \lceil -y_j^{t} \rceil - 1]$ (satisfying $(c)$).
	
	Secondly, in case of $y_i^{t} \geq 0$ we place one ball to $i$. Thus \WOne is also satisfied.
\end{proof}

\subsection{Memory and Unfolding}\label{sec:memory}

The \Memory process was introduced by Mitzenmacher, Prabhakar \& Shah~\cite{MPS02} and works under the assumption that the address $b$ of a single bin can be stored or ``cached''. The process is essentially a \TwoChoice process where the second sample is replaced by the bin in the cache. 

\begin{framed}
	\vspace{-.45em} \noindent
	\underline{\Memory Process:}\\
	\textsf{Iteration:} For each $t \geq 0$, sample a uniform bin $i$, and update its load (or of cached bin $b$):\vspace{-0.3cm}
	\begin{equation*}
		\begin{cases}
			x_{i}^{t+1} = x_{i}^{t} + 1  & \mbox{if $x_{i}^{t} <  x_{b}^{t} $} \qquad \mbox{(also update cache $b=i$)}, \\
			x_{i}^{t+1} = x_{i}^{t} + 1  & \mbox{if $x_{i}^{t} =  x_{b}^{t} $}, \\
			x_{b}^{t+1} = x_{b}^{t} + 1 & \mbox{if $x_{i}^{t} >  x_{b}^{t}$}.
		\end{cases}
	\end{equation*}\vspace{-1.5em}
\end{framed}

See \Cref{fig:memory_process} for an illustration of the \Memory process.

As mentioned above, \Memory does not satisfy conditions \POne and \WOne directly. The issue is that \Memory allocates only {\em one} ball at each round whereas, due to condition \WOne, a \Filling process may place several balls into underloaded bins in a single round. 
To overcome this issue, we define a so-called \textit{unfolding} of a \Filling process. The \Filling process proceeds in rounds $0,1,\ldots$, whereas the unfolding is a coupled process, which is encoded as a sequence of \emph{atomic allocations} (each allocation places exactly one ball into a bin).
Using this unfolding along with the gap bound for a \Filling process satisfying \POne and \WOne, we can then bound the number of atomic allocations with a large gap.

\begin{figure}[H]
	\begin{center}
		\includegraphics[scale=0.5]{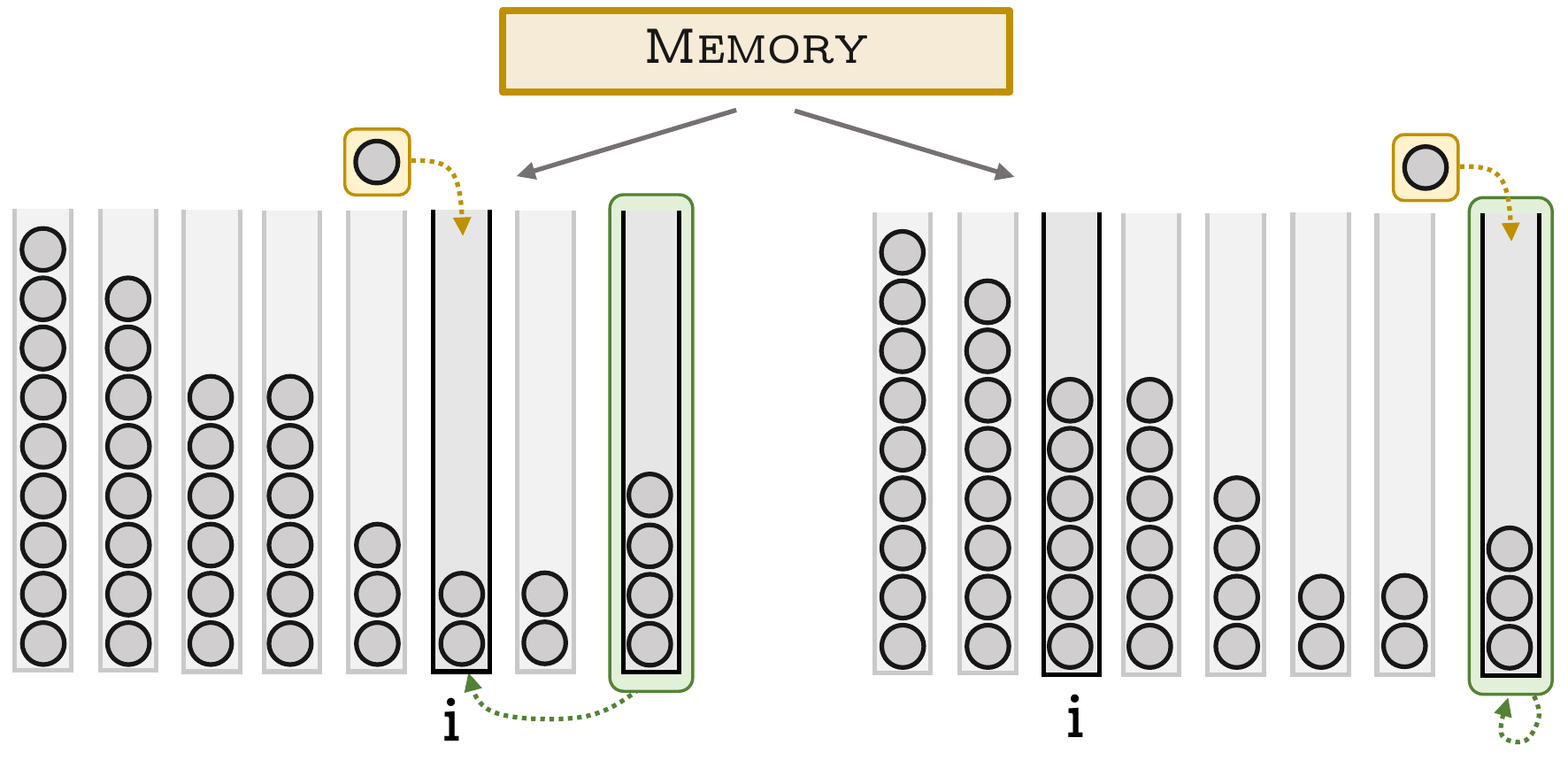}
	\end{center}
	\vspace{-0.3cm}
	\caption{Illustration of the two different possibilities in a single round of the \Memory process: (\textbf{left}) allocating to the sampled bin and (\textbf{right}) updating the cache or allocating to the cache (shown in green).}\label{fig:memory_process}
\end{figure}

First, we define unfolding formally:\label{folding} \def\unfolding{Let $x^{t}$ be the load vector of a \Filling process $P$, and $\widehat{x}^{\,t}$ be the load vector of the unfolding of $P$ called $U=U(P)$. Initialize $x^{0}:=\widehat{x}^{\,0}$ as the all-zero vector, corresponding to the empty load configuration. For every round $t \geq 1$, where $P$ has allocated $W^{t-1}$ balls in the previous rounds, we create atomic allocations  $A(t):=\{W^{t-1}+1,\ldots,W^{t}\}$ for the process $U$, where $W^{t} = W^{t-1}+\min\left\{\lceil -y_i^t \rceil + 1, \; 1\right\}$, such that for each $s \in A(t)$ a single ball is allocated in the process $U$.
	Additionally, we have $x^{t}:=\widehat{x}^{\,W^{t}}$, i.e., the load distribution of $P$ after round $t$ corresponds to load distribution of $U$ after atomic allocation $W^{t} \geq t$.}\unfolding\ Note that the unfolding of a process is not (completely) unique, as the allocations in $[W^{t-1}+1,W^{t}]$ of $U(P)$ can be permuted arbitrarily.

The next result shows that \Memory can be considered as a filling process after it has been ``unfolded''.

\begin{restatable}{lem}{memorycompressed}\label{lem:memory_compressed}
	There is an allocation process $P$ satisfying conditions \POne and \WOne, such that for a suitable unfolding $U=U(P)$, the allocation process $U$ is an instance of \Memory.
\end{restatable}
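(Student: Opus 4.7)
My plan is to construct $P$ by tightly coupling it with \Memory at the atomic level. The state of $P$ carries both the load vector $x^t$ and the \Memory cache $b^t$, and each round of $P$ is declared to be a contiguous block of atomic Memory steps, simulated internally using fresh uniform samples. Since the remark following the definition of unfolding says that the order of atomic allocations inside a $P$-round is free, we take this order to be \Memory's natural sequence, so that $U(P)$ coincides with \Memory step-by-step by construction; the real task is only to choose the block boundaries and the $P$-sample $i^t$ so that conditions \POne and \WOne hold.

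Concretely, at the start of $P$-round $t$ I would simulate one internal Memory step with fresh uniform sample $j$, let $d:=\arg\min(x_j,x_{b^t})$ with ties broken toward $j$ denote the destination of the first ball, and set $i^t:=d$. If $d$ is overloaded at round $t$ (which requires both $j$ and $b^t$ to be overloaded) then the $P$-round consists of just that single atomic step; otherwise $d$ is underloaded and the round is extended until exactly $\lceil -y_d^t\rceil+1$ atomic allocations have been performed. Under this rule the total number of balls placed per round is precisely the count mandated by \WOne.

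For \POne one computes directly that $\Pr[i^t=\ell\mid\mathfrak{F}^t]$ is zero on every bin strictly heavier than $b^t$, equal to $1/n$ on each bin $\ell\neq b^t$ with $x_\ell\le x_{b^t}$, and equal to $(1+|\{j:x_j>x_{b^t}\}|)/n$ on $b^t$ itself; under the decreasing-load labeling of \POne this places all its mass on the lightest ranks and its top-$k$ cumulative sum is at most $k/n$, which is exactly the required majorization. For \WOne the overloaded case is immediate. In the filling case the first ball destination $d$ is the natural candidate for the special bin $k_1$: in a ``clean'' run where the cache stays at $d$ throughout the block, $d$ receives all $\lceil -y_d^t\rceil+1$ balls and ends at $\lceil W^t/n\rceil+1$, and any bin that a sub-case B or C event of \Memory happens to push to exactly $\lceil W^t/n\rceil$ plays the role of $k_2$.

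The main obstacle will be verifying \WOne in runs that are not clean. \Memory's cache can in principle be replaced several times inside a single block of length $\lceil -y_d^t\rceil+1$, since each sub-case C event installs a strictly lighter bin as the new cache and each such intermediate cache can itself accumulate balls, so naively more than two bins could end up at load $\ge\lceil W^t/n\rceil$. The key structural fact to exploit is that \Memory's cache is always the lightest sample seen so far, so each cache replacement strictly decreases the cache's load at that instant; this tightly bounds the number of replacements within a single block and the load each intermediate cache can accumulate before the $P$-round terminates. A careful case analysis along these lines then shows that within one $P$-round at most two bins end up at load $\ge\lceil W^t/n\rceil$ and that every other recipient was underloaded enough at the start of the round to obey the $\lceil -y_j^t\rceil-1$ bound required by clause $(c)$ of \WOne.
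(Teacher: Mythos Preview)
Your construction is exactly the paper's: couple $P$ with \Memory, let $i^t$ be the destination of the first atomic \Memory allocation (the lighter of cache and uniform sample), and declare a $P$-round to be one atomic step if $i^t$ is overloaded and $\lceil -y_{i^t}^t\rceil+1$ atomic steps otherwise. Your \POne computation is correct and more explicit than the paper's one-line assertion that the distribution is majorised by \OneChoice.

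Where you diverge is in the plan for \WOne. The paper does not attempt the clean/unclean case split or the explicit identification of $k_1,k_2$; it uses only the single monotonicity fact that the cache load can rise by at most one per atomic allocation. Since after step $1$ the cache has normalised load at most $y_{i^t}^t+1$, by induction at the start of atomic step $j$ the cache (and hence the destination of ball $j$) has normalised load at most $y_{i^t}^t+(j-1)$. This immediately gives that the first $\lceil -y_{i^t}^t\rceil$ balls land in bins that are underloaded at that moment (hence also at round $t$) and the final ball lands in a bin with normalised load below $1$ at round $t$, which the paper takes as sufficient for \WOne. The whole worry about multiple cache replacements inside a block is thus sidestepped: the bound on the cache load already absorbs every replacement, because a replacement can only \emph{decrease} the cache load. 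Your ``careful case analysis'' is therefore unnecessary for the argument the paper actually runs.
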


Now, applying \Cref{thm:filling_key} to the unfolding of a \Filling process, and exploiting that in a balanced load configuration not too many atomic allocations can be created through unfolding, we obtain the following result:

\def\sloweddown{
	For any constant $c> 0$ there exists a constant $C>0$ such that for any allocation process $P$ satisfying conditions \POne and \WOne, and for any number of atomic allocations $m \geq 1$, with probability at least $1-n^{-2}$, any unfolding $U=U(P)$ satisfies
	\[
	\bigl| \left\{ t \in [m] \colon \Gap_{U}(t) \geq C \cdot \log n \right\} \bigr| \leq n^{-c} \cdot m \log m.
	\]}

\begin{restatable}{lem}{sloweddown}\label{lem:slowed-down} 
	Fix any constant $c> 0$. Then for any allocation process $P$ satisfying conditions \POne and \WOne, there is a constant $C=C(c)>0$ such that for any number of atomic allocations $m \geq 1$, with probability at least $1-n^{-2}$, any unfolding $U=U(P)$ satisfies
	\[
	\bigl| \left\{ t \in [m] \colon \Gap_{U}(t) \geq C \cdot \log n \right\} \bigr| \leq n^{-c} \cdot m \log m.
	\]
\end{restatable}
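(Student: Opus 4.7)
The plan is to apply \Cref{thm:filling_key} to the underlying \Filling process $P$ and then transfer the bound to the unfolding $U$ via a pathwise comparison of gaps, combined with a Markov-type argument to control possible accumulation of bad atomic steps.

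First, I would sharpen \Cref{thm:filling_key}. Its proof establishes that the exponential potential $\Phi^r := \sum_{j} e^{\alpha y_j^r} \cdot \mathbf{1}\{y_j^r \geq 0\}$ satisfies $\Ex{\Phi^r} = \Oh(n)$ uniformly in $r$ for some $\alpha > 0$ independent of $c$. Since $\Phi^r \geq e^{\alpha \Gap_P(r)}$ whenever $\Gap_P(r) \geq 0$, Markov's inequality delivers the tail bound $\Pro{\Gap_P(r) \geq g} \leq \Oh(n) \cdot e^{-\alpha g}$ for every round $r$. Choosing $C_0 = C_0(c) := (c+4)/\alpha$ yields $\Pro{\Gap_P(r) \geq C_0 \log n} \leq \Oh(n^{-(c+3)})$.

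Next, I would establish a pathwise comparison: if atomic step $s$ of $U$ lies inside the batch of round $t=t(s)$ of $P$ (i.e.\ $W^{t-1} < s \leq W^{t}$), then $\Gap_U(s) \leq \Gap_P(t-1) + 2$. The reason is that condition \WOne allows at most one bin in $B_{-}^{t-1}$ to reach load $\lceil W^{t-1}/n\rceil + 1$, and if an overloaded bin is sampled then only one ball is placed; in both cases the maximum bin load grows by at most $2$ during the batch, while the divisor $s/n$ in $\Gap_U(s)$ is already at least $W^{t-1}/n$. Consequently, setting $C := C_0 + 3$, the event $\{\Gap_U(s) \geq C \log n\}$ forces $\{\Gap_P(t(s)-1) \geq C_0 \log n\}$, and this holds for \emph{any} unfolding of $P$.

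The third step bounds the expected number of bad atomic steps. Let $N := |\{s \in [m] : \Gap_U(s) \geq C \log n\}|$ and let $k_r := W^r - W^{r-1}$ denote the number of balls placed in round $r$ of $P$. The comparison above gives
\[
N \leq \sum_{r \,:\, W^r \leq m} k_r \cdot \mathbf{1}\{\Gap_P(r-1) \geq C_0 \log n\}.
\]
A direct calculation from condition \WOne shows $\Ex{k_r \,\mid\, \mathfrak{F}^{r-1}} \leq \Gap_P(r-1) + 3$, because the expected underload of a uniformly chosen bin is at most the total overload divided by $n$, which is bounded by $\Gap_P(r-1)$. Combining this linear conditional bound with the exponential tail from Step 1 via integration by parts gives $\Ex{k_r \cdot \mathbf{1}\{\Gap_P(r-1) \geq C_0 \log n\}} \leq \Oh(n^{-(c+3)} \log n)$, and summing over the at most $m$ rounds of $P$ yields $\Ex{N} \leq \Oh(m \cdot n^{-(c+3)} \log n)$.

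Finally, Markov's inequality with failure probability $n^{-2}$ gives $N \leq \Oh(m \cdot n^{-(c+1)} \log n)$ with probability at least $1 - n^{-2}$. For $m \geq n$ this is absorbed by $n^{-c} m \log m$ since $\log n \leq \log m$; for $m < n$, a direct union bound over the at most $m \leq n$ rounds already gives $\Pro{\exists r \leq m : \Gap_P(r) \geq C_0 \log n} \leq m \cdot n^{-(c+3)} \leq n^{-2}$, so $N = 0$ with probability at least $1 - n^{-2}$ and the bound holds trivially. The main obstacle is the third step: $k_r$ and the indicator of a bad round are correlated and must be controlled together, but the linearity of $\Ex{k_r \mid \mathfrak{F}^{r-1}}$ in $\Gap_P(r-1)$ combined with the exponential tail of $\Gap_P$ produces only a logarithmic overhead, comfortably absorbed by the $\log m$ factor appearing in the statement.
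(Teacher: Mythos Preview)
Your overall strategy is sound and differs from the paper's in an interesting way: the paper applies Markov's inequality once to the cumulative sum $\sum_{t\le m}\Phi^t$ and then extracts both a bound on the number of bad rounds and on $w^t\le c_\alpha n\log\Phi^t$ from that single event, whereas you bound $\Ex{N}$ directly by summing tail estimates on $\Gap_P$ over rounds and then apply Markov at the end. Both routes yield the same $n^{-c}m\log m$ bound; yours is arguably more elementary since it never introduces the aggregate potential $\sum_t\Phi^t$.

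There is, however, a genuine gap in Step~5. The bound $\Ex{k_r\mid\mathfrak{F}^{r-1}}\le \Gap_P(r-1)+3$ is justified only for the \emph{uniform} probability vector: your argument ``the expected underload of a uniformly chosen bin is at most the total overload divided by $n$'' explicitly assumes uniformity. Condition~\POne allows any $p^t$ majorised by the uniform vector, and such $p^t$ can place \emph{more} mass on underloaded bins, not less. In the extreme case $p^t=(0,\ldots,0,1)$ one gets $k_r=\lceil -y_n^{r-1}\rceil+1$ deterministically, and since $-y_n^{r-1}$ can be as large as $\Delta^{r-1}/2\le n\,\Gap_P(r-1)$, the conditional expectation is off by a factor of~$n$.

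The fix is immediate and leaves your structure intact: replace Step~5 by the deterministic bound $k_r\le n\,\Gap_P(r-1)+2$ (valid because $k_r\le \lceil -y_n^{r-1}\rceil+1$ and $-y_n^{r-1}\le\sum_{y_i>0}y_i\le n\,\Gap_P(r-1)$). The integration in Step~6 then gives $\Ex{k_r\cdot\mathbf{1}\{\Gap_P(r-1)\ge C_0\log n\}}=\Oh(n^{-(c+2)}\log n)$, so $\Ex{N}=\Oh(m\,n^{-(c+2)}\log n)$, and Markov with failure probability $n^{-2}$ yields $N=\Oh(m\,n^{-c}\log n)$, which is absorbed by $n^{-c}m\log m$ for $m\ge n$ exactly as you argue. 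Your separate treatment of $m<n$ is fine.
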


Hence for any $m$, all but a polynomially small fraction of the first $m$ atomic allocations in any unfolding of a \Filling process (e.g., \Memory) have a logarithmic gap. This behavior matches the one of the original \Filling process, under the limitation that we cannot prove a small gap that holds for an arbitrarily large {\em fixed} atomic allocation $m$. However, if $m$ is polynomial in $n$, that is, $m \leq  n^{c}$ for some constant $c>0$, then \Cref{lem:slowed-down} implies directly that with high probability, the gap at atomic allocation $m$ (and at all atomic allocations before) is logarithmic.

We note however that for the specific case of \Memory, a $\mathcal{O}(\log n)$ gap bound is not tight. Since this paper was submitted, the authors proved a $\mathcal{O}(\log\log n)$ gap bound for the \Memory process which holds \Whp\ for any number $m\geq 1$ of balls \cite{SODA23}.

\subsection{The Packing Process with a Biased Probability Vector} \label{sec:packing_with_negative_bias}

Finally, we also consider the \Packing process with an $(a, b)$-biased sampling vector $p$, meaning that the probability of sampling bin $i$ satisfies $\frac{1}{an} \leq p_i \leq \frac{b}{n}$. Such a vector could even have a bias to place towards overloaded bins. For \Packing the sampling vector coincides with the probability vector, so we will refer to it as probability vector from this point forward.

\begin{framed}
	\vspace{-.45em} \noindent
	\underline{\Packing~process with an $(a, b)$-biased probability vector $p^t$:}\\
	\textsf{Parameter:} At any step $t$, probability vector $p^t := p(\mathfrak{F}^t)$ satisfying $\frac{1}{an} \leq p_i^t \leq \frac{b}{n}$ for all $i \in  [n]$. \\
	\textsf{Iteration:} For each $t \geq 0$, sample a bin $i \in_p [n]$, and update its load:
	\begin{equation*} x_{i}^{t+1} =
		\begin{cases}
			x_{i}^{t} + 1 & \mbox{if $x_{i}^{t} \geq  \frac{W^{t}}{n} $}, \\
			\Big\lceil \frac{W^{t}}{n} \Big\rceil + 1  & \mbox{if $x_{i}^{t} <  \frac{W^{t}}{n} $}.
		\end{cases}
	\end{equation*}\vspace{-1.em}
\end{framed}

Note that this process is not necessarily a \Filling process as its probability vector may not be majorized by the uniform distribution. Such probability vectors may occur, for instance, if the load information is noisy or outdated. Next we show that as long as $a, b$ are functions of $n$, the \Packing process has a gap bound that is independent of $m$. 
\begin{restatable}{pro}{newnegbias}\label{Pro:negbias}
	
	Consider the \Packing process with any $(a, b)$-biased probability vector with $a := a(n) > 1$ and $b := b(n) > 1$. Then, for any round $m \geq 1$, we have
	\[
	\Ex{\Delta^m} \leq 6an^3,
	\]
	and hence,
	\[
	\Pro{\Gap(m) \leq 6an^4} \geq 1 - n^{-1}.
	\]
\end{restatable}
This result is very simple to prove and we do not claim that the bound is tight (as it only makes use of the lower bound on the $p_i$'s), however we include it as it illustrates the \textit{power of filling}. In particular, in~\cite{W07} it was shown that for \TwoChoice for sufficiently large constants $a, b>1$ (taking $a,b=10$ suffices), there exist $(a, b)$-biased sampling distributions, for which the gap will grow at least as fast as \OneChoice, that is $\Gap(m) = \Omega(\sqrt{\frac{m}{n}\log n })$ \Whp\ for $m\gg n$. However, the \Packing process has also the power to ``fill'' underloaded bins, and it is this power that allows it to have a bounded gap -- despite the best efforts of its unruly probability vector.

\section{Upper Bounds on the Gap}\label{sec:analysis_filling}

In this section we prove our upper bounds on the gap. We begin in \Cref{sec:filling_potential} by defining the exponential potential function and analyzing its behavior (depending on some other constraints). Then in \Cref{sec:complete_filling_proof} we use a super-martingale argument to show that the exponential potential function decreases, which eventually yields the desired gap bound in \Cref{thm:filling_key}. Finally in \Cref{sec:biasneg} we bound the gap of the \Packing with an $(a, b)$-biased probability vector, by considering the absolute value potential.

\subsection{Potential Function Analysis}\label{sec:filling_potential}
We consider a version of the \textit{exponential potential function} $\Phi$ which only takes bins into account whose load is at least two above the average load. This is defined for any round $t \geq 0$ by

\[ 
\Phi^t := \Phi^t(\alpha) := \sum_{i: y_{i}^t\geq 2} \exp\left( \alpha \cdot  y_i^t \right) =  \sum_{i=1}^n \exp\left( \alpha \cdot  y_i^t \right)\cdot  \mathbf{1}_{\{ y_{i}^{t}\geq 2\}} ,
\]
where we recall that $y_i^t = x_i^t - \frac{W^t}{n}$ is the normalized load of  bin $i$ at round t and $\alpha > 0$ is a sufficiently small constant to be fixed later. Let $\Phi^t_i= \exp\left( \alpha \cdot  y_i^t \right)\cdot  \mathbf{1}_{\{ y_{i}^{t}\geq 2\}}$ and thus $\Phi^t=\sum_{i=1}^n\Phi^t_i$. 
We will also use the \textit{absolute value potential}: for any round $t \geq 0$,
\[
\Delta^{t}:=\sum_{i=1}^{n} \left| y_i^t \right|.
\]

The next lemma provides a useful upper bound on the expected change of $\Phi$ over one round. It establishes that to bound $\ex{\Phi^{t + 1} \mid \mathfrak{F}^t}$ from above, we may assume the probability vector $p^{t}$ is uniform.

\begin{lem}\label{lem:filling}Consider any allocation process satisfying conditions \POne and \WOne. Then, for $\Phi := \Phi(\alpha)$ with any $\alpha > 0$ and for any round $t \geq 0$, 
	\[
	\ex{\Phi^{t + 1} \mid \mathfrak{F}^t} \leq \frac{1}{n}  \sum_{i=1}^n \Phi_i^{t} \cdot\left(  \sum_{j \colon y_j^{t} < 1}  
	e^{\frac{-\alpha (\lceil -y_j^t \rceil + 1)}{n}}  +  e^{-\frac{\alpha}{n}} \cdot (|B_{\geq 1}^{t}|-1) +  e^{\alpha-\frac{\alpha}{n}} \right) + e^{3\alpha},
	\]
	where $B_{\geq 1}^t$ denotes the set of bins with load at least $1$. 
\end{lem}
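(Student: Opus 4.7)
The plan is to analyze $\ex{\Phi^{t+1}\mid\mathfrak{F}^t}$ by decomposing $\Phi^{t+1}=\sum_{j=1}^n \Phi_j^{t+1}$ and splitting the sum into bins $j$ with $y_j^t\ge 2$ (which already contribute to $\Phi^t$) and bins with $y_j^t<2$ (which can only contribute via a ``transition'' in this round).

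For a fixed bin $j$ with $y_j^t\ge 2$, I would first compute the multiplicative factor $\Phi_j^{t+1}/\Phi_j^t$ obtained upon sampling bin $i_0=i^t$. If $i_0=j$, one ball is added to $j$ and the average rises by $1/n$, giving factor $e^{\alpha(1-1/n)}$; if $i_0\neq j$ is overloaded ($y_{i_0}^t\ge 0$) the factor is at most $e^{-\alpha/n}$; and if $i_0$ is underloaded, condition \WOne forces exactly $\lceil -y_{i_0}^t\rceil+1$ balls to be allocated \emph{to underloaded bins only}, leaving $x_j^t$ unchanged while raising the average by $(\lceil -y_{i_0}^t\rceil+1)/n$, so the factor is at most $e^{-\alpha(\lceil -y_{i_0}^t\rceil+1)/n}$. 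Introducing the rank-indexed sequence $A_i := e^{-\alpha/n}$ when $y_i^t\ge 0$ and $A_i := e^{-\alpha(\lceil -y_i^t\rceil+1)/n}$ when $y_i^t<0$ lets me compress these three cases into
\[
\ex{\Phi_j^{t+1}\mid\mathfrak{F}^t} \;\le\; \Phi_j^t\cdot \Bigl(\sum_{i_0=1}^n p_{i_0}^t A_{i_0} \;+\; p_j^t\cdot(e^{\alpha(1-1/n)}-e^{-\alpha/n})\Bigr),
\]
isolating the ``jump'' that occurs when bin $j$ itself is sampled.

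The crux is a double application of majorization via condition \POne. Since loads are decreasingly sorted and $\lceil -y_i^t\rceil+1 \ge 2$ for $i\in B_-^t$, the sequence $(A_i)$ is non-increasing in $i$; combined with $p^t$ being majorized by the uniform vector, Abel summation yields $\sum_{i_0} p_{i_0}^t A_{i_0} \le \tfrac{1}{n}\sum_{i_0} A_{i_0}$. Summing the previous bound over $j$ with $y_j^t\ge 2$, the jump term contributes $(e^{\alpha(1-1/n)}-e^{-\alpha/n})\sum_j \Phi_j^t p_j^t$; since $(\Phi_j^t)_j$ is also non-increasing in $j$, a second majorization step gives $\sum_j \Phi_j^t p_j^t \le \Phi^t/n$. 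Collecting, the total contribution from bins with $y_j^t\ge 2$ is at most $\tfrac{\Phi^t}{n}\bigl(\sum_i A_i + e^{\alpha(1-1/n)}-e^{-\alpha/n}\bigr)$. To match the lemma's form I would split $\sum_i A_i$ into the bins with $y_i^t<1$ (for which the formula $e^{-\alpha(\lceil -y_i^t\rceil+1)/n}$ equals $A_i$ in both the underloaded and the $y_i^t\in[0,1)$ subcases, since $\lceil -y_i^t\rceil=0$ in the latter) and the $|B_{\ge 1}^t|$ bins with $y_i^t\ge 1$ each contributing $e^{-\alpha/n}$; then absorbing one $e^{-\alpha/n}$ into the jump term produces $e^{\alpha-\alpha/n}$ paired with the reduced count $|B_{\ge 1}^t|-1$.

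For the transitioning bins $j$ with $y_j^t<2$ and $y_j^{t+1}\ge 2$, I would use \WOne to rule out underloaded bins: the most ``filled'' recipient $k_1$ satisfies $x_{k_1}^{t+1}=\lceil W^t/n\rceil+1$ while $W^{t+1}\ge W^t+2$, so $y_{k_1}^{t+1}\le 2-2/n<2$, and the remaining underloaded bins end below the average by construction. Hence only overloaded bins with $y_j^t\in[1+1/n,2)$ can transition, and only when $i_0=j$, in which case $\Phi_j^{t+1}=e^{\alpha(y_j^t+1-1/n)}<e^{3\alpha}$; summing over such $j$ with $\sum_j p_j^t\le 1$ bounds the total by $e^{3\alpha}$, giving the additive term. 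The main obstacle is verifying the monotonicity of $(A_i)$ across the boundary between $B_+^t$ and $B_-^t$, which reduces to the crucial observation that $\lceil -y_i^t\rceil+1\ge 2$ for $i\in B_-^t$ so that $A_i\le e^{-2\alpha/n}<e^{-\alpha/n}$; without this, the first majorization step would fail and the uniform-distribution upper bound could not be extracted.
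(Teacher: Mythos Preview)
Your proposal is correct and follows essentially the same approach as the paper: both proofs condition on the sampled bin, derive per-bin multiplicative factors, apply majorization (via \POne) to replace $p^t$ by the uniform vector, and bound the contribution of bins that cross the threshold $y_j^t=2$ by the additive $e^{3\alpha}$. The only organizational differences are that the paper packages the two majorization inequalities into a single application on the combined function $h(i)=f(i)+g(i)\cdot\Phi^t$ (rather than your two separate steps on $(A_i)$ and $(\Phi_j^t)$), and it embeds the transition bound inside its Case~3 via an additive $+e^{2\alpha}$ correction instead of handling those bins in a separate paragraph as you do; neither difference affects the substance of the argument.
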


\begin{proof}Recall that the filtration $\mathfrak{F}^t$ reveals the load vector $x^t$. Throughout this proof, we consider the labeling chosen by the process in round $t$ such that $x_1^{t} \geq x_2^{t} \geq \cdots \geq x_n^{t}$ and $p^t$ being majorized by \OneChoice (according to \POne). We emphasize that for this labeling, it is  quite possible that $x_i^{t+1}$ will not be non-increasing in $i \in [n]$.

	To begin, using $\Phi_i^{t+1}=e^{\alpha y_{i}^{t+1}} \cdot \mathbf{1}_{\{ y_{i}^{t+1}\geq 2\}}$, we can split $\Phi^{t+1}$ over the $n$ bins as follows,
	\[
	\Ex{ \Phi^{t+1} \, \mid \, \mathfrak{F}^t}   = \sum_{j=1}^n \Ex{
		\Phi_{j}^{t+1}
		\, \mid \, \mathfrak{F}^t }.
	\]
	Now consider the effect of picking bin $i$ for the allocation in round $t$ to the potential $\Phi^{t+1}$. Note that bin $i$ is chosen with probability equal to $p_i^t$. Observe that if bin $i$ satisfies $y_i^t<1$, then it receives at most $\lceil -y_i^t\rceil +1$ balls in round $t$, thus $y_{i}^{t+1}\geq 2$ if and only if $y_i^{t}\geq 1$. 
	Using condition \WOne, we distinguish between the following three cases based on how allocating to $i$ changes $\Phi_{j}^{t}$ for $j \neq i$ and for $j=i$:
	\medskip 
	
	\noindent\textbf{Case 1.A} [$y_i^{t} < 1$, $j\neq i$, $y_j^t < 1$].
	We will allocate $\lceil - y_i^{t} \rceil +1$ many balls to bins $k$ with $y_k^t <1$ (not necessarily to $i$) subject to \WOne. This increases the average load by $(\lceil - y_i^{t} \rceil + 1)/n$. Since $y_j^t < 1$, $\Phi_j^t =0 $. Further, by condition \WOne we can increase the load of $y_j^t$ by at most $\lceil - y_j^t \rceil + 1$, hence, $\Phi_j^{t+1}=0$. Therefore,
	$
	\Phi_{j}^{t+1} =  \Phi_j^{t} \cdot e^{\frac{-\alpha (\lceil -y_i^t \rceil + 1)}{n}}$ for $j \neq i$. 
	
	\medskip
	
	\noindent\textbf{Case 1.B} [$y_i^{t} < 1$, $j\neq i$, $y_j^t \geq 1$].
	As in Case 1a, we will allocate $\lceil - y_i^{t} \rceil +1$ many balls to bins $k$ with $y_k^t <1$ (not necessarily to $i$) subject to \WOne, which increases the average load by $(\lceil - y_i^{t} \rceil + 1)/n$.
	Additionally, bin $j$ will receive no balls (by condition \WOne), thus $\Phi_{j}^{t+1} =  \Phi_j^{t} \cdot e^{\frac{-\alpha (\lceil -y_i^t \rceil + 1)}{n}}$ for $j \neq i$. 
	
	\medskip 
	
	\noindent\textbf{Case 2} [$y_i^{t} \geq 1$, $j\neq i$]. We allocate one ball to $i$, which increases the average load by $1/n$, and thus $\Phi_{j}^{t+1} =  \Phi_j^{t} \cdot e^{-\frac{\alpha}{n}}$ for $j \neq i$, which again also holds for bins $j$ that do not contribute. 
	
	\medskip 
	
	\noindent\textbf{Case 3} [$j=i$].
	Finally, we consider the effect on $\Phi_{i}^{t+1}$. Again if $y_i^{t} < 1$, then $\Phi_{i}^t=0$ and $\Phi_{i}^{t+1}=0$. Otherwise, we have $y_i^{t} \geq 1$ and we allocate one ball to $i$, and thus
	\[
	\Phi_{i}^{t+1} =  e^{ \alpha \cdot (y_{i}^{t}+
		1) - \frac{\alpha}{n}}\mathbf{1}_{\{ y_{i}^{t+1}\geq 2\}} = e^{ \alpha \cdot y_i^{t}}\mathbf{1}_{\{ y_{i}^{t+1}\geq 2\}} \cdot e^{ \alpha - \frac{\alpha}{n}} \leq 
	(e^{ \alpha \cdot y_i^{t}}\mathbf{1}_{\{ y_{i}^{t}\geq 2\}} +e^{2\alpha} ) \cdot e^{ \alpha - \frac{\alpha}{n}},
	\]
	where the $+e^{2\alpha}$ is added to account for the case where $1\leq y_i^{t} <2 $ and so $\Phi_{i}^t =0$ but $0<\Phi_{i}^{t+1} \leq e^{3\alpha -\alpha/n}$. Thus in this case, $\Phi_{i}^{t+1}\leq \Phi_{i}^{t}e^{ \alpha - \frac{\alpha}{n}} +  e^{3 \alpha - \frac{\alpha}{n}}\leq \Phi_{i}^{t}e^{ \alpha - \frac{\alpha}{n}} +  e^{3 \alpha }.$
	\medskip 
	
	By aggregating the three cases above, and observing that $\sum_{i=1}^n p_i^t\cdot e^{3 \alpha }=e^{3 \alpha }$ , we see that 
	\begin{align}
		\sum_{j=1}^n \ex{\Phi_{j}^{t + 1} \mid \mathfrak{F}^t}    &=
		\sum_{i=1}^n p_i^t \sum_{j=1}^n \ex{\Phi_{j}^{t + 1} \mid \mathfrak{F}^t, \;\text{Bin $i$ is selected at round $t$}}   \notag  \\  &\leq \sum_{i=1}^n  p_i^t \cdot \mathbf{1}_{\{y_i^t < 1\}}\sum_{j=1}^n \Phi_j^{t}  \cdot e^{\frac{-\alpha (\lceil -y_i^t \rceil + 1)}{n}}  + \sum_{i =1}^n  p_i^t \cdot \mathbf{1}_{\{y_i^t \geq 1\}}\sum_{j \neq i} \Phi_j^{t} \cdot e^{-\frac{\alpha}{n}} \notag \\ &\qquad + \sum_{i=1}^n p_i^t
		\cdot  \mathbf{1}_{\{y_i^t \geq 1\}}\cdot  \Phi_i^{t} \cdot e^{\alpha-\frac{\alpha}{n}}+ e^{3 \alpha }. \label{eq:firstbd} 
	\end{align}
	
	We will now rewrite \eqref{eq:firstbd} in order to establish that it is maximized if $p$ is the uniform distribution.
	Adding $\sum_{i=1}^n p_i^{t} \cdot \mathbf{1}_{\{y_i^t \geq 1\}}\cdot \Phi_i^{t} \cdot e^{-\frac{\alpha}{n}}$ to the middle sum (corresponding to Case 2) in \eqref{eq:firstbd} and subtracting it from the last sum (corresponding to Case 3) transforms \eqref{eq:firstbd} into
	\begin{align*}
		&\sum_{i=1}^n  p_i^t  \mathbf{1}_{\{y_i^t < 1\}}\sum_{j=1}^n \Phi_j^{t}   e^{\frac{-\alpha (\lceil -y_i^t \rceil + 1)}{n}}  + \sum_{i =1}^n  p_i^t  \mathbf{1}_{\{y_i^t \geq 1\}}\sum_{j=1}^n \Phi_j^{t}  e^{-\frac{\alpha}{n}}  + \sum_{i=1}^n p_i^t
		\Phi_i^{t}  \mathbf{1}_{\{y_i^t \geq 1\}}e^{ -\frac{\alpha}{n}}\left(e^{\alpha}-1\right) + e^{3 \alpha}\notag  \\
		& = \sum_{i=1}^n  p_i^t \!\left( \! \! \Bigg(\underbrace{ \mathbf{1}_{\{y_i^t < 1\}}   e^{\frac{-\alpha (\lceil -y_i^t \rceil + 1)}{n}} + \mathbf{1}_{\{y_i^t \geq 1\}}\cdot e^{-\frac{\alpha}{n}} }_{g(i)} \Bigg) \!\sum_{j=1}^n  \Phi_j^{t} +   \underbrace{ \Phi_i^{t} \cdot \mathbf{1}_{\{y_i^t \geq 1\}}e^{ -\frac{\alpha}{n}}\left(e^{\alpha}-1\right) }_{f(i)} \right)  + e^{3 \alpha } .
	\end{align*}

	Recall that $y_1^{t} \geq y_2^{t} \geq \cdots \geq y_n^{t}$, which implies $\Phi_1^{t} \geq \Phi_2^{t} \geq \cdots \geq \Phi_n^{t}\geq 0$. Thus $f(i)$ and $g(i)$ are non-negative and non-increasing in $i$ and $\sum_{j=1}^{n} \Phi_j^{t}\geq 0 $. Consequently, the function $h(i) = f(i)+ g(i)\cdot \sum_{j=1}^n \Phi_j^{t}$ is non-negative and non-increasing in $i$. Note that by condition \POne, for any $k \in [n]$ it holds that $\sum_{i=1}^{k} p_i^t \leq \frac{k}{n}$. Thus we can apply \Cref{lem:quasilem} which implies $ \sum_{i=1}^n p_i^t \cdot h(i) \leq \sum_{i=1}^n \frac{1}{n} \cdot h(i)$. Applying this to the above, rearranging, and splitting $f(i)$ gives 
	\begin{equation}\begin{aligned}\label{eq:nearlythere!}
			\ex{\Phi^{t + 1} \mid \mathfrak{F}^t}&\leq \frac{1}{n} \sum_{i=1}^n   \mathbf{1}_{\{y_i^t < 1\}}   e^{\frac{-\alpha (\lceil -y_i^t \rceil + 1)}{n}} \sum_{j=1}^n  \Phi_j^{t}  +  \frac{1}{n} \sum_{i=1}^n  \mathbf{1}_{\{y_i^t \geq 1\}}\cdot e^{-\frac{\alpha}{n}}  \sum_{j=1}^n  \Phi_j^{t} \\&\qquad -\frac{1}{n}\sum_{i=1}^n  \mathbf{1}_{\{y_i^t \geq 1\}} \Phi_i^{t} \cdot e^{ -\frac{\alpha}{n}} +    \frac{1}{n}\sum_{i=1}^n   \mathbf{1}_{\{y_i^t \geq 1\}}\Phi_i^{t} \cdot e^{\alpha -\frac{\alpha}{n}}   + e^{3 \alpha }
	\end{aligned}\end{equation} Now observe that combining the second and third terms above gives
	\begin{align}&\frac{1}{n} \sum_{i=1}^n  \mathbf{1}_{\{y_i^t \geq 1\}}\cdot e^{-\frac{\alpha}{n}}  \sum_{j=1}^n  \Phi_j^{t} -\frac{1}{n}\sum_{i=1}^n  \mathbf{1}_{\{y_i^t \geq 1\}} \Phi_i^{t} \cdot e^{ -\frac{\alpha}{n}} \notag \\ 
		&\qquad = \frac{1}{n}\cdot e^{-\frac{\alpha}{n}}\cdot \sum_{i=1}^n  \sum_{j=1}^n  \mathbf{1}_{\{y_i^t \geq 1\}}\cdot  \Phi_j^{t} \cdot \mathbf{1}_{\{j\neq i\}} \notag \\
		&\qquad =   \frac{1}{n}\cdot e^{-\frac{\alpha}{n}}\cdot  \sum_{i=1}^n \Phi_i^{t} \sum_{j=1}^n  \mathbf{1}_{\{y_j^t \geq 1\}}\cdot \mathbf{1}_{\{j\neq i\}} \notag \\
		&\qquad =  \frac{1}{n}\cdot e^{-\frac{\alpha}{n}}\cdot  \sum_{i=1}^n \Phi_i^{t} \cdot (|B_{\geq 1}^{t}|-1), \label{eq:reshuffle} \end{align} where the last line follows since $ \Phi_i^{t}  = e^{\alpha y_i^t} \cdot \mathbf{1}_{\{y_{i}^t \geq 2\}}$. 
	
	Now, substituting \eqref{eq:reshuffle} into \eqref{eq:nearlythere!}, exchanging the first double summation and using the bound $\mathbf{1}_{\{y_i^t\geq 1 \}}\leq 1 $ on the last sum, and finally grouping terms gives the following  
	\begin{equation*} 
		\ex{\Phi^{t + 1} \mid \mathfrak{F}^t} \leq \frac{1}{n}  \sum_{i=1}^n  \Phi_i^{t} \left( \sum_{j=1}^n\mathbf{1}_{\{y_j^t < 1\}}  e^{\frac{-\alpha (\lceil -y_j^t \rceil + 1)}{n}} +  (|B_{\geq 1}^{t}|-1)e^{ -\frac{\alpha}{n}} +     e^{\alpha -\frac{\alpha}{n}} \right)   + e^{3 \alpha },\end{equation*}  
	as claimed.  
\end{proof}

Let $\mathcal{G}^t$ be the event that at round $t\geq 0$ either there are at least $n/20$ underloaded bins or there is a an absolute value potential of at least $n/10$. In symbols this is given by
\begin{equation}\label{eq:eventGt}
	\mathcal{G}^t := \left\{B_{-}^{t} \geq n/20 \right\}\cup \left\{ \Delta^t \geq n/10\right\}. 
\end{equation}
The next lemma provides two estimates on the expected exponential potential $\Phi^{t+1}$ in terms of $\Phi^{t}$. The first estimate holds for any round and it establishes that the process does not perform worse than \OneChoice, meaning that the potential increases by a factor of at most $(1+\Oh(\alpha^2/n))$. The second estimate is stronger for rounds where we have a lot of underloaded bins or a large value of the absolute value potential. This stronger estimate states that the potential decreases by a factor of $(1-\Omega(\alpha/n))$ in those rounds. Note that as we show in~\Cref{clm:counterexample}, the potential may increase in expectation for certain load configurations, so it seems hard to prove a decrease without looking at several rounds.

\begin{lem}
	\label{lem:filling_good_quantile}
	There exist constants $c_1,c_2 >0$ such that for any  $0 < \alpha < 1/100$, if $\Phi := \Phi(\alpha)$ is the potential of any allocation process satisfying conditions \POne and \WOne, then for any round $t \geq 0$
	\begin{align*}
		(i) & \qquad \Ex{ \Phi^{t+1} \, \mid \,\mathfrak{F}^t} \leq  \left( 1 
		+ \frac{c_1 \alpha^2}{n} \right) \cdot \Phi^{t} 
		+ e^{3 \alpha}, \\
		(ii) & \qquad \Ex{ \Phi^{t+1} \, \mid \, \mathfrak{F}^t, \mathcal{G}^t } \leq \left(1 - \frac{c_2 \alpha}{n} \right) \cdot \Phi^{t}  + e^{3\alpha}.
	\end{align*}
\end{lem}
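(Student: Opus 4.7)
The plan is to apply \Cref{lem:filling} and then extract sharp enough bounds on its key multiplier. Since the bracketed expression in \Cref{lem:filling} does not depend on the summation index $i$, it factors out of the sum, so both parts reduce to proving suitable bounds on
\[
S \;:=\; \sum_{j:\,y_j^t<1} e^{-\alpha k_j/n} \;+\; (|B_{\geq 1}^t|-1)\,e^{-\alpha/n} \;+\; e^{\alpha-\alpha/n} \qquad (k_j:=\lceil-y_j^t\rceil+1),
\]
namely $S/n \leq 1 + c_1\alpha^2/n$ always, and $S/n \leq 1 - c_2\alpha/n$ whenever $\mathcal{G}^t$ occurs.

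The central manipulation is to split $S$ against a ``\OneChoice baseline''. Using that $k_j=1$ when $y_j^t\in[0,1)$ while $k_j\geq 2$ when $y_j^t<0$, regrouping the three terms gives the identity
\[
S \;=\; \underbrace{(n-1)e^{-\alpha/n} + e^{\alpha-\alpha/n}}_{=:\,S_{\mathsf{OC}}} \;-\; \underbrace{\sum_{j:\,y_j^t<0}\!\bigl(e^{-\alpha/n} - e^{-\alpha k_j/n}\bigr)}_{=:\,\mathrm{Sav}\,\geq\,0}.
\]
For part (i) I would just throw away $\mathrm{Sav}\geq 0$ and expand $S_{\mathsf{OC}}/n = e^{-\alpha/n}(1+(e^\alpha-1)/n)$ to second order; the $-\alpha/n$ from $e^{-\alpha/n}$ exactly cancels the $+\alpha/n$ inside $(e^\alpha-1)/n$, leaving only $O(\alpha^2/n)$-sized remainders, and the standard bounds $e^\alpha-1-\alpha\leq \alpha^2$ and $e^{-\alpha/n}\leq 1-\alpha/n+\alpha^2/(2n^2)$ (valid since $\alpha<1$) deliver $S_{\mathsf{OC}}/n \leq 1 + c_1\alpha^2/n$ for a small absolute constant $c_1$.

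For part (ii) the extra ingredient is a lower bound on $\mathrm{Sav}$ under $\mathcal{G}^t$. Writing $e^{-\alpha/n} - e^{-\alpha k_j/n} = e^{-\alpha/n}(1 - e^{-\alpha(k_j-1)/n})$ and applying the elementary inequality $1-e^{-x}\geq \min(x,1)/2$ for $x\geq 0$ together with $e^{-\alpha/n}\geq 1/2$, each summand of $\mathrm{Sav}$ is at least $\tfrac{1}{4}\min(\alpha(k_j-1)/n,1)$; a single bin with $\alpha(k_j-1)/n\geq 1$ already delivers $\mathrm{Sav}\geq 1/4$, while otherwise summing gives
\[
\mathrm{Sav} \;\geq\; \frac{\alpha}{4n}\sum_{j:\,y_j^t<0}\lceil-y_j^t\rceil \;\geq\; \frac{\alpha}{4n}\cdot\max\!\bigl(|B_-^t|,\,\Delta^t/2\bigr),
\]
by the pointwise estimate $\lceil-y_j^t\rceil\geq \max(1,|y_j^t|)$ (combined with $\sum_j \max(a_j,b_j)\geq \max(\sum a_j,\sum b_j)$). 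Since $\mathcal{G}^t$ forces either $|B_-^t|\geq n/20$ or $\Delta^t\geq n/10$, the maximum is at least $n/20$ in either case, so $\mathrm{Sav}\geq \alpha/80$ and combining with part (i) yields $S/n \leq 1 - \tfrac{\alpha}{n}\bigl(\tfrac{1}{80}-c_1\alpha\bigr)$, which is of the desired form for $\alpha<1/100$.

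The main obstacle is calibrating the constants: the $\alpha^2/n$ ``cost'' from the \OneChoice baseline must be dominated by the $\alpha/n$ ``saving'' produced by the filling, which forces $c_1\alpha$ to be noticeably smaller than the saving's constant ($1/80$ above). This is precisely the role of the hypothesis $\alpha<1/100$, and it is the reason the lemma would fail for larger $\alpha$. A minor technical wrinkle is the ``extreme underload'' branch $\alpha(k_j-1)/n\geq 1$, in which the $\min(x,1)/2$ estimate degenerates to its constant side; this is resolved immediately by observing that a single such bin already contributes $\Omega(1)$ to $\mathrm{Sav}$, more than enough to beat $c_2\alpha$.
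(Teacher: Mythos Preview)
Your argument is correct. Part~(i) coincides with the paper's: both drop the ``saving'' term and Taylor-expand $(n-1)e^{-\alpha/n}+e^{\alpha-\alpha/n}$, observing the first-order cancellation. Your identity $S=S_{\mathsf{OC}}-\mathrm{Sav}$ is exactly the paper's regrouping in \eqref{eq:A_ibdd} written differently.

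For part~(ii) you take a genuinely different route. The paper treats the two clauses of $\mathcal{G}^t$ separately: when $|B_-^t|\geq n/20$ it uses only $k_j\geq 2$ and expands $|B_-^t|\,e^{-2\alpha/n}+(|B_+^t|-1)e^{-\alpha/n}+e^{\alpha-\alpha/n}$; when $\Delta^t\geq n/10$ it invokes Schur-concavity (\Cref{clm:schur}) to push all the negative mass into a single bin, producing the term $e^{-\alpha/20}$. You instead bound $\mathrm{Sav}$ from below in one stroke via $1-e^{-x}\geq \tfrac12\min(x,1)$ and the pointwise estimate $\lceil -y_j^t\rceil\geq\max(1,|y_j^t|)$, which immediately gives $\mathrm{Sav}\geq \tfrac{\alpha}{4n}\max(|B_-^t|,\Delta^t/2)\geq \alpha/80$ under $\mathcal{G}^t$. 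This avoids the Schur-convexity lemma entirely and handles both clauses uniformly; the price is that the resulting $c_2$ is tighter against $c_1\alpha$ (you need $c_1<1/(80\alpha)$, which just clears with $c_1\approx 1$ and $\alpha<1/100$), whereas the paper's case-by-case expansions leave a bit more room in the constants. Your ``extreme underload'' branch (some $\alpha(k_j-1)/n\geq 1$) is a harmless technicality and is handled correctly.
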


\begin{proof}Recall that, as before, we fix the labeling chosen by the process in round $t$ such that $x_1^{t} \geq x_2^{t} \geq \cdots \geq x_n^{t}$, thus it is possible that $x_i^{t+1}$ may not be non-increasing in $i \in [n]$ and $p^t$ being majorized by \OneChoice (according to \POne). 
	
	Let $A_i$ be the bracketed term in the expression for $\ex{\Phi^{t + 1} \mid \mathfrak{F}^t} $ in \Cref{lem:filling}, given by 
	\begin{equation}\label{eq:Ai}A_i=   \sum_{j \colon   y_j^{t} < 1}
		e^{\frac{-\alpha (\lceil -y_j^t \rceil + 1)}{n}}  +  (|B_{\geq 1}^{t}| -1)\cdot e^{-\frac{\alpha}{n}} +   e^{\alpha-\frac{\alpha}{n}}.\end{equation}
	Observe that $-\alpha (\lceil -y_j^t \rceil + 1) \leq -\alpha$ whenever $y_j^{t} < 1$ and thus 
	\[A_i \leq    \sum_{j \colon   y_j^{t} < 1}
	e^{-\frac{\alpha}{n}}  +  (|B_{\geq 1}^{t}| -1)\cdot e^{-\frac{\alpha}{n}} +   e^{\alpha-\frac{\alpha}{n}} = e^{-\frac{\alpha}{n}} \cdot \left[ n-1 + e^{\alpha} \right].
	\]Applying the Taylor estimate $e^{z} \leq 1+z+z^2$, which holds for any $z \leq 1$, twice gives 
	\begin{align*}A_i&\leq  \left(1 - \frac{\alpha}{n} +  \frac{\alpha^2}{n^2}  \right) \left( n + \alpha + \alpha^2 \right) \\ 
		&= n \cdot \left(1 - \frac{\alpha}{n} +  \frac{\alpha^2}{n^2} \right) \left( 1 + \frac{\alpha}{n} +  \frac{\alpha^2}{n} \right) \\ 
		&\leq n \cdot \left(1 + \frac{c_1\alpha^2}{n}  \right),
	\end{align*}
	for some constant $c_1 >0$. The first statement in the lemma now follows as \Cref{lem:filling} gives 
	\begin{align*}
		\ex{\Phi^{t + 1} \mid \mathfrak{F}^t  } &\leq \frac{1}{n}  \sum_{i=1}^n \Phi_i^{t} \cdot A_i + e^{3\alpha} \\
		&\leq \frac{1}{n} \cdot n\left( 1 + \frac{c_1 \alpha^2}{n}  \right)\cdot  \sum_{i=1}^n \Phi_i^{t} + e^{3\alpha}\\ 
		&\leq\left( 1 +\frac{c_1 \alpha^2}{n} \right)\Phi^{t} + e^{3\alpha}. \end{align*} 
	We shall now show the second statement of the lemma. By splitting sums in \eqref{eq:Ai} we have 
	\begin{align} 
		A_i&=  \sum_{j \in B_{-}^t}
		e^{\frac{-\alpha (\lceil -y_j^t \rceil + 1)}{n}} + \sum_{j \colon 0\leq y_j^{t} < 1}
		e^{\frac{-\alpha (\lceil -y_j^t \rceil + 1)}{n}}  +  (|B_{\geq 1}^{t}| -1)\cdot e^{-\frac{\alpha}{n}} +   e^{\alpha-\frac{\alpha}{n}}\notag \\
		&=  \sum_{j \in B_{-}^t}
		e^{\frac{-\alpha (\lceil -y_j^t \rceil + 1)}{n}} +(|B_{+}^{t}|-1)\cdot e^{-\frac{\alpha}{n}}+   e^{\alpha-\frac{\alpha}{n}}\label{eq:A_ibdd}\\
		&\leq |B_{-}^t|\cdot e^{-\frac{2\alpha}{n}} +  (|B_{+}^{t}|-1)\cdot e^{-\frac{\alpha}{n}} +    e^{\alpha-\frac{\alpha}{n} }.\label{eq:A_ibdd2} \end{align}
	We shall now first assume that $|B_-^t| \geq n/20$. Recall the bound $e^x\leq 1 + x + 0.6 \cdot x^2 $ which holds for any $x\leq 1/2$. If $\alpha<1/2$, we can apply this bound to \eqref{eq:A_ibdd2}, giving
	\begin{align} 
		A_i&\leq e^{-\frac{\alpha}{n}}\cdot \left(|B_{-}^t|\cdot \left(1 -\frac{\alpha}{n} + \frac{6\alpha^2}{10n^2}\right) +  (|B_{+}^{t}|-1)+    1 +\alpha  + \frac{6\alpha^2}{10} \right) \notag  \\
		&\leq  \left(1 -\frac{\alpha}{n} + \frac{6\alpha^2}{10n^2}\right)\cdot n\left( 1 -\frac{\alpha}{20n} + \frac{6\alpha^2}{200n^2} +   \frac{\alpha}{n}  + \frac{6\alpha^2}{10n} \right)  \notag\\
		&= n\cdot\left(1 -\frac{\alpha(1-12\alpha)}{20n} + \Oh\!\left(\frac{\alpha^2}{n^2} \right)  \right). \label{eq:1stcondredux} \end{align}  
	We now assume that $|\Delta^{t}| \geq n/10$. 
	Observe that by Schur-convexity (see~\Cref{clm:schur}) and the assumption on $|\Delta^{t}|$ we have 
	\begin{align*}
		\sum_{j \in B_{-}^t}
		e^{\frac{-\alpha (\lceil -y_j^t \rceil + 1)}{n}} &\leq e^{-\frac{\alpha}{n}} \sum_{j \in B_{-}^t}e^{\frac{\alpha  y_j^t }{n}} \\ &\leq  e^{-\frac{\alpha}{n}} \cdot \left( (|B_{-}^{t}|-1) \cdot e^{-\frac{\alpha}{n}\cdot 0 } + 1 \cdot e^{\frac{\alpha}{n} \cdot \sum_{j \in B_{-}^t} y_i^{t} } \right) \\ &\leq (|B_{-}^{t}|-1) \cdot e^{-\frac{\alpha}{n}} +  e^{-\alpha/20}, \end{align*}
	where we used the fact that $\sum_{j \in B_{-}^t} y_i^{t}= -\frac{1}{2} \Delta^t$.
	Applying this and the bound $e^x\leq 1 + x + 0.6 \cdot x^2 $, for $x\leq 1/2$, to \eqref{eq:A_ibdd} gives  
	\begin{align}
		A_i & \leq  (|B_{-}^{t}|-1) \cdot e^{-\frac{\alpha}{n}} +  e^{-\alpha/20}   +   (|B_{+}^{t}| -1)\cdot e^{-\frac{\alpha}{n}}+   e^{\alpha}\notag \\
		& = (n-2) \cdot e^{-\frac{\alpha}{n}} +  e^{-\alpha/20}  +   e^{\alpha}\notag \\
		&\leq (n-2)\cdot \left(1 -\frac{\alpha}{n} + \frac{6\alpha^2}{10n^2}\right) + \left(1 -\frac{\alpha}{20} +  \frac{6\alpha^2}{4000} \right)+ \left(1 + \alpha  +  \frac{6\alpha^2}{10}\right)\notag \\
		&= n\left( 1 - \frac{\alpha(200-2406\alpha)}{4000n}  + \Oh\!\left(\frac{\alpha^2 }{n^2}\right)\right). \label{eq:2ndcondredux}
	\end{align}
	Thus we see by \eqref{eq:1stcondredux} and \eqref{eq:2ndcondredux} that if $\mathcal{G}^t$ holds and we take $\alpha<1/100$ and $n$ sufficiently large, then there exists some constant $c_2>0$ such that $A_i \leq n(1-c_2\alpha/n)$. Thus  \Cref{lem:filling} gives 
	\begin{align*}
		\ex{\Phi^{t + 1} \mid \mathfrak{F}^t, \mathcal{G}^t } &\leq \frac{1}{n}  \sum_{i=1}^n \Phi_i^{t} \cdot A_i + e^{3\alpha}\\ &\leq \frac{1}{n} \cdot n\left(1-\frac{c_2\alpha}{n}\right)\cdot  \sum_{i=1}^n \Phi_i^{t} + e^{3\alpha}\\ &\leq\left(1-\frac{c_2\alpha}{n}\right)\Phi^{t} + e^{3\alpha}, \end{align*} as claimed. \end{proof}

The next lemma shows that the event $\mathcal{G}^t$ given by \eqref{eq:eventGt} holds for sufficiently many rounds.

\begin{lem}\label{cor:enough_good_quantiles}
	Consider any allocation process satisfying conditions \POne and \WOne. For every integer $t_0 \geq 1$, there are at least $n/40$ rounds $t \in [t_0,t_0+n]$ with $(i)$ $\Delta^{t} \geq n/10$ or $(ii)$ $|B_{-}^{t}| \geq n/20$.
\end{lem}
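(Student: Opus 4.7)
I argue by contradiction: assume that fewer than $n/40$ of the rounds $t\in[t_0,t_0+n]$ satisfy $\mathcal{G}^t$, so more than $39n/40$ are ``bad'' in the sense that both $|B_-^t|<n/20$ and $\Delta^t<n/10$.

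First I pin down the structure of a bad round. Because $\Delta^t=2\sum_{i\in B_+^t}y_i^t<n/10$, at most $n/20$ bins have $y_i^t\geq 1$, which together with $|B_-^t|<n/20$ forces at least $9n/10$ bins into the ``plateau'' $y_i^t\in[0,1)$. Since loads are integers and $\bar x^t\in\tfrac1n\mathbb{Z}$, all these plateau bins sit at the single integer value $k:=\lceil\bar x^t\rceil$. Writing $\bar x^t=k-r/n$ with $r\in\{0,1,\dots,n-1\}$, the plateau alone contributes at least $(9n/10)(r/n)$ to $\Delta^t/2$, so $\Delta^t\geq 9r/5$ and hence $r<n/18$. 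Thus every bad round has $\bar x^t$ inside some ``bad zone'' $(k-1/18,k]$ of width $1/18$. Since $\bar x^t$ is monotone non-decreasing and increases by at least $1/n$ per round, any maximal consecutive run of bad rounds must stay in a single bad zone, so it has length at most $n/18$.

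The main step is to show that the good streak that follows a bad streak is much longer than $n/40$. During the bad streak, plateau bins at $x=k$ can only be lost to Case A (the sampled overloaded bin, of which there is exactly one per round), because Case B, by \WOne, places balls only into underloaded bins and hence cannot touch the plateau. So at the end of a bad streak of length $L\leq n/18$ at least $9n/10-n/18\geq 4n/5$ bins remain at load $k$. Once $\bar x^t$ exits the bad zone (i.e., crosses strictly above $k$), all these surviving plateau bins have $y<0$, giving $|B_-^{t}|\geq 4n/5-O(1)$. From that point on, inspecting \WOne in Case B, only the two distinguished bins $k_1,k_2$ can possibly leave $B_-^t$ in a single round; every other refilled bin ends at $x\leq\lceil W^t/n\rceil -1<\bar x^{t+1}$ and so stays in $B_-^t$. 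Since Case A does not reduce $|B_-^t|$ at all, $|B_-^t|$ shrinks by at most $2$ per round, so at least
\[
\frac{4n/5 - n/20}{2}=\frac{3n}{8}
\]
additional rounds are needed before $|B_-^t|$ falls below $n/20$. Throughout those rounds $|B_-^t|\geq n/20$, so by condition $(ii)$ every one of them is good.

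To conclude, any window $[t_0,t_0+n]$ must intersect such a good streak: either $t_0$ already lies in a good streak, or $t_0$ lies in a bad streak of length $\leq n/18$ and the subsequent good streak starts before the window ends. In either situation the window contains at least $\min(3n/8,\,n-n/18)\geq n/40$ good rounds, contradicting the assumption. The main technical obstacle will be the careful per-round bookkeeping using \WOne to justify the two quantitative claims above: that during a bad streak at most one plateau bin is destroyed per round, and that during the subsequent good streak at most two bins leave $B_-^t$ per round.
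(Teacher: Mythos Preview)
Your plan is workable and shares the paper's core insight—that condition \WOne lets at most two underloaded bins become overloaded per round—but you layer on considerably more structure than needed, and two steps are loosely argued. The paper's proof is a one-paragraph direct argument: if $\Delta^{s}\le n/10$ at some round $s$, then at least $n/2$ bins satisfy $|y_i^{s}|<1/5$; over the next $9n/40$ rounds at most $2\cdot 9n/40=9n/20$ of these bins can be ``touched'' (sampled while overloaded, or promoted to overloaded as $k_1,k_2$ in \WOne), and the remaining $\ge n/2-9n/20=n/20$ untouched bins are forced into $B_-^{t}$ once $t\ge s+n/5$ (since the average has risen by at least $1/5$). No integer-load structure, no bad zones, no streak decomposition. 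The lemma follows because either $\Delta^{t}\ge n/10$ throughout $[t_0,t_0+n/40]$, or some such $s$ exists early enough that $[s+n/5,s+9n/40]$ lands inside the window.

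Your route exploits integrality to pin down a sharper plateau of $\ge 9n/10$ bins and hence a much longer good stretch ($3n/8$ versus $n/40$), which is nice but not needed here. Two places require tightening. First, ``end of a bad streak'' is not the same event as ``$\bar x^{t}$ exits the bad zone'': a bad streak can terminate because $\Delta$ nudges just above $n/10$ while $\bar x^{t}$ is still $\le k$, and the $O(1)$ you subtract does not cover the potentially $\Theta(n)$ additional rounds before the average actually crosses $k$. The fix is to track plateau survival from the \emph{first} bad round $t_1$ until the crossing time $t_3$; since $\bar x$ rises by $\ge 1/n$ per round and has at most $1/18$ to travel, $t_3-t_1\le n/18$ regardless of how bad and good rounds interleave inside the zone, so your $9n/10-n/18\ge 4n/5$ bound holds at time $t_3$. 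Second, your final wrap-up does not handle the case where $t_0$ sits in a good streak that ends almost immediately: you need to fall through to the first bad round in the window and rerun the argument from there (then either that first bad round occurs late, giving many initial good rounds, or it occurs early and the $3n/8$ good rounds after $t_3$ fit inside the window). Both gaps are easily patched, but the paper's direct argument sidesteps them entirely.
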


\begin{proof} 
	We claim that if $\Delta^{s} \leq n/10$ for some round $s$, then for each round $t \in [s+n/5,s+9n/40]$ we have $|B_{-}^t| \geq n/20$ (deterministically). The lemma follows from this claim, since then we either have $\Delta^{t} \geq n/10$ for all $t \in [t_0,t_0+n/40]$, or, thanks to the claim there is a $s \in [t_0,t_0+n/40]$ such that for all $t \in [s+n/5,s+9n/40]$ (this interval has length $n/40$) we have $|B_{-}^t| \geq n/20$. 
	
	To establish the claim, assume we are at any round $s$ where $\Delta^{s} \leq n/10$. Then at most $n/2$ bins $i$ satisfy $|y_i^{s}| \geq 1/5$, and so at least $n/2$ bins satisfy $|y_i^{s}| < 1/5$; let us call this latter set of bins $\mathcal{B}:=\left\{ i \in [n]: |y_i^s| < 1/5 \right\}.$
	In the rounds $[s,s+9n/40]$, we can choose at most $9n/40$ bins in $\mathcal{B}$ that are overloaded (at the time when chosen), and then we place exactly one ball into them. Furthermore, in each round $t \in  [s,s+9n/40]$ we can take at most two bins in $\mathcal{B}$ which are underloaded at round $t$ and make them overloaded. Hence it follows that at least $n/2-2\cdot 9n/40 = n/20$ of the bins in $\mathcal{B}$ are not chosen in the interval $[s,s+9n/40]$. Consequently, these bins must be all underloaded in the interval $[s+n/5,s+9n/40]$.
\end{proof}

\subsection{Completing the Proof of Theorem~\ref{thm:filling_key}}\label{sec:complete_filling_proof}

We now introduce a new potential function $\tilde{\Phi}^t$ which is the product of $\Phi^t$ with two additional terms (and an additive centering term). These multiplying terms have been chosen based on the one round increments in the two statements in Lemmas \ref{lem:filling_good_quantile} and \ref{cor:enough_good_quantiles}. The purpose of this is that using Lemmas \ref{lem:filling_good_quantile} and \ref{cor:enough_good_quantiles} we can show that $\tilde{\Phi}^t$ is a super-martingale. We then use the super-martingale  property to bound the exponential potential at an arbitrary round.

Recall the event $\mathcal{G}^t := \left\{B_{-}^{t} \geq n/20 \right\}\cup \left\{ \Delta^t \geq n/10\right\}$ from \eqref{eq:eventGt}.
Now fix an arbitrary round $t_0 \geq 0$. Then, for any $s > t_0$, let $G_{t_0}^s$ be the number of rounds $r \in [t_0,s)$ satisfying $\mathcal{G}^r$, and let $B_{t_0}^s:=(s-t_0)-G_{t_0}^s$. Further, let the constants $c_1,c_2>0$ be as in \Cref{lem:filling_good_quantile}, let $c_3 := 2 e^{3 \alpha} \exp( c_2 \alpha)>0$, and then define the sequence $(\tilde{\Phi}^s)_{s \geq t_0} := (\tilde{\Phi}^s)_{s \geq t_0}(\alpha, t_0)$ with $\tilde{\Phi}^{t_0} := \Phi^{t_0}(\alpha)$, and for any $s > t_0 $,
\begin{equation}\label{eq:tildephi}
	\tilde{\Phi}^{s} := \Phi^{s}(\alpha) \cdot \exp\left( +\frac{c_2 \alpha}{n} \cdot G_{t_0}^{s-1} \right) \cdot \exp\left( -\frac{c_1 \alpha^2}{n} \cdot B_{t_0}^{s-1} \right) - c_3  \cdot (s-t_0).
\end{equation}
The next lemma proves that the sequence $(\tilde{\Phi}^{s})_{s \geq t_0}$, forms a super-martingale:

\begin{lem}
	\label{lem:filling_supermartingale}
	Let $0<\alpha<1/100$ be an arbitrary but fixed constant, and $t_0 \geq 0$ be an arbitrary integer. Then, for the potential $\tilde{\Phi}:=\tilde{\Phi}(\alpha, t_0)$ and any $s \in [t_0,t_0+n]$ we have	\[
	\ex{ \tilde{\Phi}^{s+1}  \mid \mathfrak{F}^s} \leq \tilde{\Phi}^{s}.
	\]
\end{lem}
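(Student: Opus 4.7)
The plan is to verify the super-martingale property by conditioning on whether the event $\mathcal{G}^s$ holds, since the indicator of $\mathcal{G}^s$ is $\mathfrak{F}^s$-measurable (it depends only on the load vector $x^s$). Write $M^s := \exp\!\left( \frac{c_2 \alpha}{n} G_{t_0}^{s-1} - \frac{c_1 \alpha^2}{n} B_{t_0}^{s-1} \right)$, so that $\tilde{\Phi}^s = \Phi^s \cdot M^s - c_3 (s-t_0)$. Since $s \in [t_0, t_0+n]$, we have $G_{t_0}^{s-1} + B_{t_0}^{s-1} = s - t_0 \leq n$, and hence the crude bound $M^s \leq \exp(c_2 \alpha)$ uniformly.

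First I would handle the case where $\mathcal{G}^s$ holds. Here $G_{t_0}^{s} = G_{t_0}^{s-1} + 1$ and $B_{t_0}^{s} = B_{t_0}^{s-1}$, so $M^{s+1} = M^s \cdot \exp(c_2 \alpha/n)$. Combining this with the second estimate in Lemma~\ref{lem:filling_good_quantile} gives
\[
\ex{\Phi^{s+1} M^{s+1} \mid \mathfrak{F}^s} \;\leq\; M^s \cdot e^{c_2\alpha/n}\left[\left(1 - \tfrac{c_2 \alpha}{n}\right)\Phi^s + e^{3\alpha}\right]\;\leq\; M^s \Phi^s + e^{c_2 \alpha/n} \cdot e^{c_2\alpha} \cdot e^{3\alpha},
\]
where the multiplicative drift cancels via the elementary inequality $(1-u)e^u \leq 1$ for $u \geq 0$. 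For the case where $\mathcal{G}^s$ fails, $M^{s+1} = M^s \cdot \exp(-c_1 \alpha^2/n)$; the first estimate in Lemma~\ref{lem:filling_good_quantile} and the inequality $(1+u) e^{-u} \leq 1$ yield
\[
\ex{\Phi^{s+1} M^{s+1} \mid \mathfrak{F}^s} \;\leq\; M^s \Phi^s + e^{c_2 \alpha} \cdot e^{3\alpha}.
\]

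Combining both cases, $\ex{\Phi^{s+1} M^{s+1} \mid \mathfrak{F}^s} \leq M^s \Phi^s + e^{c_2 \alpha/n} \cdot e^{c_2 \alpha} \cdot e^{3\alpha}$. By the definition $c_3 = 2 e^{3\alpha} \exp(c_2\alpha)$ and since $\alpha < 1/100$ makes $e^{c_2 \alpha/n} \leq 2$, the additive residual is at most $c_3$. Subtracting $c_3(s+1-t_0)$ from both sides and comparing with $\tilde{\Phi}^s = \Phi^s M^s - c_3 (s - t_0)$ then gives $\ex{\tilde{\Phi}^{s+1} \mid \mathfrak{F}^s} \leq \tilde{\Phi}^s$, as required.

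The only real subtlety is the bookkeeping: the exponential correction factors in $M^s$ are precisely calibrated to match the multiplicative drifts from Lemma~\ref{lem:filling_good_quantile}, while $c_3$ is calibrated so that $M^s \cdot e^{3\alpha}$ stays below $c_3$ over the window $[t_0, t_0+n]$. This is exactly the reason the statement is restricted to $s \in [t_0, t_0+n]$: beyond this horizon the worst-case bound $M^s \leq \exp(c_2\alpha)$ would degrade and a different choice of $c_3$ would be needed. No technical obstacle is expected; the proof is essentially a one-step calculation per case, with the main care being to track which exponential factor compensates which drift term.
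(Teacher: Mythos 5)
Your proposal is correct and takes essentially the same approach as the paper: a case split on the indicator of $\mathcal{G}^s$, cancellation of the multiplicative drift via $(1\mp u)e^{\pm u}\leq 1$ against the calibrated exponential correction factors, and the uniform bound on that correction factor over the window $[t_0,t_0+n]$ to control the additive residual by $c_3$. The only cosmetic difference is that you package the correction factor explicitly as $M^s$ and track it forward, whereas the paper states a ``suffices to prove'' inequality and then verifies it in the two cases; the calculations and constants are identical.
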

\begin{proof}
	First, using the definition $\tilde{\Phi}^{s}$ from \eqref{eq:tildephi}, we rewrite  $\ex{\tilde{\Phi}^{s+1} \mid \mathfrak{F}^s}$ to give 
	\begin{align*}
		\lefteqn{ \ex{\tilde{\Phi}^{s+1} \mid \mathfrak{F}^s} } \\
		&=  \ex{ \Phi^{s+1}  \mid \mathfrak{F}^s}  \cdot \exp\left( \frac{c_2 \alpha}{n} \cdot G_{t_0}^{s} \right) \cdot \exp\left( - \frac{c_1 \alpha^2}{n} \cdot B_{t_0}^{s} \right)  - c_3 \cdot (s+1-t_0) \\ 
		& = \ex{ \Phi^{s+1} \mid \mathfrak{F}^s} \cdot \exp\left(\frac{\alpha}{n} \cdot ( c_2 \cdot \mathbf{1}_{\mathcal{G}^s} -c_1 \alpha\cdot(1 - \mathbf{1}_{\mathcal{G}^s}) ) \right) \\&\qquad  \cdot \exp\left( \frac{c_2 \alpha}{n} \cdot G_{t_0}^{s - 1} \right) \cdot \exp\left( - \frac{c_1 \alpha^2}{n} \cdot B_{t_0}^{s - 1} \right)  - c_3 - c_3 \cdot (s-t_0).
	\end{align*}

	We claim that it suffices to prove 
	\begin{align}\label{eq:sufficient}
		\ex{ \Phi^{s+1} \mid \mathfrak{F}^s} \cdot \exp\left(\frac{\alpha}{n} \cdot ( c_2 \cdot \mathbf{1}_{\mathcal{G}^s} -c_1 \alpha\cdot  (1 - \mathbf{1}_{\mathcal{G}^s}) \right) &\leq \Phi^{s} + c_3 \cdot \exp\left(- c_2 \alpha  \right).
	\end{align}

	Indeed, observe that $G_{t_0}^{s-1} \leq s-t_0 \leq n$, and so assuming \eqref{eq:sufficient} holds we have 
	\begin{align*}
		\ex{\tilde{\Phi}^{s+1} \mid \mathfrak{F}^s}     &\leq   \left( \Phi^{s} + c_3 \cdot \exp\left(- c_2 \alpha  \right) \right) 
		\cdot \exp\left( \frac{c_2 \alpha}{n} \cdot G_{t_0}^{s - 1} \right) \cdot \exp\left( - \frac{c_1 \alpha^2}{n} \cdot B_{t_0}^{s - 1} \right)\\ &\qquad  -c_3 - c_3 \cdot (s-t_0) \\
		&= \Phi^s \cdot \exp\left( \frac{c_2 \alpha}{n} \cdot G_{t_0}^{s - 1} \right) \cdot \exp\left( - \frac{c_1 \alpha^2}{n} \cdot B_{t_0}^{s - 1} \right) \\
		&\qquad+ c_3 \cdot \exp\left( - \frac{c_1 \alpha^2}{n} \cdot B_{t_0}^{s - 1} \right) -c_3 - c_3 \cdot (s-t_0) \\
		&\leq \tilde{\Phi}^s.
	\end{align*}
	To show \eqref{eq:sufficient}, we consider two cases based on whether $\mathcal{G}^s$ holds. 
	
	\smallskip 
	
	\noindent\textbf{Case 1} [$\mathcal{G}^s$ holds]. By \Cref{lem:filling_good_quantile}~$(ii)$ we have
	\[
	\ex{\Phi^{s+1} \mid \mathfrak{F}^s, \mathcal{G}^s} \leq \Phi^s \cdot \left(1 - \frac{c_2 \alpha}{n} \right) + e^{3\alpha} \leq \Phi^s \cdot \exp\left(- \frac{c_2 \alpha}{n} \right) +  e^{3\alpha} .
	\]
	Hence, if $\mathcal{G}^s$ holds then the left hand side of \eqref{eq:sufficient} is equal to  
	\begin{align*}
		\ex{\Phi^{s+1} \mid \mathfrak{F}^s, \mathcal{G}^s } \cdot \exp\left( \frac{\alpha}{n} \cdot c_2 \right) &\leq \left( \Phi^s \cdot \exp\left(- \frac{c_2 \alpha}{n} \right) + e^{3 \alpha} \right) \cdot \exp\left( \frac{\alpha}{n} \cdot c_2 \right) \\
		&\leq \Phi^s + 2 e^{3 \alpha} \\
		&= \Phi^s + c_3 \cdot \exp\left( -c_2 \alpha \right), 
	\end{align*}
	where the last line holds by definition of $c_3 = 2 e^{3 \alpha} \exp( c_2 \alpha)$.
	
	\smallskip 
	
	\noindent	\textbf{Case 2} [$\mathcal{G}^s$ does not hold]. \Cref{lem:filling_good_quantile}~$(i)$ gives the  unconditional bound 
	\[
	\ex{\Phi^{s+1} \mid \mathfrak{F}^s, \neg \mathcal{G}^s} \leq \Phi^s \cdot \left(1 + \frac{c_1 \alpha^2}{n} \right) + e^{3\alpha} \leq \Phi^s \cdot \exp\left(\frac{c_1 \alpha^2}{n} \right) + e^{3\alpha}.
	\]
	Thus, if $\mathcal{G}^s$ does not hold the left hand side of \eqref{eq:sufficient} is at most
	\[
	\ex{ \Phi^{s+1} \mid \mathfrak{F}^s, \neg \mathcal{G}^s } \cdot \exp\left( \frac{\alpha}{n} \cdot (-c_1 \alpha) \right) \leq \Phi^s + e^{3 \alpha} \leq \Phi^s + c_3 \cdot \exp\left( -c_2 \alpha \right),
	\] 
	which establishes \eqref{eq:sufficient} and the proof is complete.
\end{proof}

Combining \Cref{cor:enough_good_quantiles}, which shows that a constant fraction of rounds satisfy $\mathcal{G}^t$, with \Cref{lem:filling_supermartingale} establishes a multiplicative drop of $\Ex{ \Phi^{s} }$ (unless it is already linear), thus $\Ex{ \Phi^{m}}=\Oh(n)$. This is formalized in the proof below.

\begin{lem}\label{lem:exppot}
	There exists a constant $c_6>0$, such that for any allocation process satisfying conditions \POne and \WOne, and any round $m \geq 1$, we have $\Ex{\Phi^{m}} \leq c_6 \cdot n.$
\end{lem}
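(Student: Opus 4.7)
The plan is to combine the super-martingale property from \Cref{lem:filling_supermartingale} with the deterministic count of good rounds from \Cref{cor:enough_good_quantiles} to obtain a one-window contraction of the form $\Ex{\Phi^{t_0+n}} \leq \gamma^{-1} \Ex{\Phi^{t_0}} + O(n)$ for some constant $\gamma > 1$, and then to iterate this bound across disjoint windows of length $n$.

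First I would fix $t_0 \geq 0$ and apply \Cref{lem:filling_supermartingale} to conclude that $\Ex{\tilde{\Phi}^{t_0+n} \mid \mathfrak{F}^{t_0}} \leq \tilde{\Phi}^{t_0} = \Phi^{t_0}$. The key observation is that at the endpoint $s = t_0+n$, \Cref{cor:enough_good_quantiles} supplies the pointwise bound $G_{t_0}^{s-1} \geq n/40$, and consequently $B_{t_0}^{s-1} \leq n$; both of these hold deterministically. Hence the exponential multipliers appearing in $\tilde{\Phi}^{t_0+n}$ can be bounded pointwise to give
\begin{equation*}
\tilde{\Phi}^{t_0+n} \geq \Phi^{t_0+n} \cdot \exp\left( \frac{c_2 \alpha}{40} - c_1 \alpha^2 \right) - c_3 n.
\end{equation*}
Choosing $\alpha$ sufficiently small (in addition to $\alpha < 1/100$) so that $\gamma := \exp(c_2\alpha/40 - c_1\alpha^2) > 1$, and then taking expectations, yields the one-window contraction $\Ex{\Phi^{t_0+n}} \leq \gamma^{-1}\Ex{\Phi^{t_0}} + \gamma^{-1} c_3 n$.

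To conclude, I would write $m = kn + r$ with $0 \leq r < n$, iterate the contraction $k$ times starting from $\Phi^0 = 0$ to bound $\Ex{\Phi^{kn}}$ by a convergent geometric series of order $c_3 n/(\gamma-1)$, and then use the unconditional estimate in \Cref{lem:filling_good_quantile}$(i)$ inductively over the remaining $r < n$ rounds to carry the $O(n)$ bound through to $\Ex{\Phi^{m}}$. The main obstacle I anticipate is precisely the first step: one really needs \Cref{cor:enough_good_quantiles} to hold deterministically at the endpoint of the window so that the random exponential factor $\exp(c_2\alpha G/n - c_1\alpha^2 B/n)$ can be replaced by the pointwise constant $\gamma$. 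Without such deterministic control, one would be left estimating the expectation of a product of correlated random multipliers with $\Phi^{t_0+n}$, and the contraction would not factor through cleanly.
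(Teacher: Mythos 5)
Your proposal is correct and follows essentially the same route as the paper: apply the super-martingale bound over a window of length $n$, use the deterministic lower bound on the number of good rounds from \Cref{cor:enough_good_quantiles} to replace the random exponential factors in $\tilde{\Phi}^{t_0+n}$ by a fixed constant $\gamma>1$ (this is precisely what the paper does when it invokes $G_{t_0}^{t_0+n-1}\geq n/40$ deterministically), iterate the resulting affine contraction over successive windows via a geometric-series bound, and finish by propagating through the final $r<n$ rounds with \Cref{lem:filling_good_quantile}$(i)$. The only cosmetic difference is that you use the crude bound $B_{t_0}^{s-1}\leq n$ where the paper uses $\leq 39n/40$; the constant $\gamma$ changes accordingly but the argument is unaffected.
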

\begin{proof} 
	For any integer $t_0 \geq 1$, first consider rounds $[t_0,t_0+n]$. We will now fix the constant $\alpha := \min \{ 1/101, 1/20 \cdot c_2/c_1 \}$ in the exponential potential function $\Phi^t$ (thus this is also fixed in $\tilde{\Phi}^t)$. By \Cref{lem:filling_supermartingale}, $\tilde{\Phi}$ forms a super-martingale over $[t_0,t_0+n]$, and thus
	\begin{equation}\label{eq:supmart}
		\Ex { \tilde{\Phi}^{t_0+n} \; \Big| \; \mathfrak{F}^{t_0} } \leq \tilde{\Phi}^{t_0} = \Phi^{t_0}.
	\end{equation}
	Recalling the the definition \eqref{eq:tildephi} of $\tilde{\Phi}$ and $c_3 := 2 e^{3 \alpha} \exp( c_2 \alpha)>0$, we see that \eqref{eq:supmart} implies 
	\[
	\Ex{ \Phi^{t_0+n} \cdot \exp\left( +\frac{c_2 \alpha}{n} \cdot G_{t_0}^{t_0+n-1} \right) \cdot \exp\left( -\frac{c_1 \alpha^2}{n} \cdot B_{t_0}^{t_0+n-1} \right) - c_3 \cdot n
		\; \,\bigg| \,\; \mathfrak{F}^{t_0} } \leq \Phi^{t_0}.
	\]
	Rearranging this, and using that by~\Cref{cor:enough_good_quantiles}, $G_{t_0}^{t_0+n-1} \geq n/40$ holds deterministically, we obtain for any $t_0 \geq 1$
	\begin{align*}
		\Ex{ \Phi^{t_0+n} \, \mid \, \mathfrak{F}^{t_0} } &\leq \left( \Phi^{t_0} + c_3 \cdot n \right )
		\cdot \exp\left(  -c_2 \alpha \cdot \frac{1}{40} + c_1 \alpha^2 \cdot \frac{39}{40}\right), \\
		\intertext{and now using $\alpha = \min \{ 1/101, (1/40) \cdot c_2/c_1 \}$ and defining $c_4 := c_2/40^2$  yields} 
		\Ex{ \Phi^{t_0+n} \, \mid \, \mathfrak{F}^{t_0} } &\leq \left( \Phi^{t_0} + c_3 \cdot n \right ) \cdot \exp\left( -c_4 \alpha \right) \\
		&= \Phi^{t_0} \cdot \exp\left( -c_4 \alpha \right)
		+ c_3 \exp(-c_4 \alpha) \cdot n.
	\end{align*}
	It now follows by the second statement in \Cref{lem:geometric_arithmetic} with $a:=\exp(-c_4\alpha) < 1$ and $b:= c_3 \exp(-c_4 \alpha) \cdot n$ that for any integer $k \geq 1$,
	\begin{align} \label{eq:decrease_over_k_batches}
		\Ex{ \Phi^{n \cdot k}} &\leq 
		\Phi^{0} \cdot \exp\left( - c_4 \alpha \cdot k \right) + \frac{c_3 \exp(-c_4 \alpha) \cdot n}{1 - \exp(-c_4 \alpha)} \leq
		c_5 \cdot n,
	\end{align}
	for some constant $c_5 > 0$ as $\Phi^{0} \leq n$ holds deterministically.
	
	Hence for any number of rounds $t = k \cdot n + r$, where $k \geq 0$ and $1 \leq r < n$, we use~\Cref{lem:filling_good_quantile} (first statement) iteratively to conclude that
	\begin{align*}
		\Ex{ \Phi^{n \cdot k + r}  } & = \Ex{\Ex{ \Phi^{n \cdot k + r} \, \mid \, \mathfrak{F}^{n \cdot k } }}\notag \\
		& \leq \Ex{ \Phi^{n \cdot k}} \cdot \left(1 + \frac{c_1 \alpha^2}{n} \right)^{r} 
		+ n \cdot \left(1 + \frac{c_1 \alpha^2}{n} \right)^{n} \cdot e^{3\alpha} \notag \\
		&\leq c_5 \cdot n \cdot \exp( c_1 \alpha^2) + n \cdot \exp( c_1 \alpha^2) \cdot e^{3 \alpha}\\ &\leq c_6 \cdot n, 
	\end{align*}
	for some constant $c_6:=c_6(c_1, c_2) > 0$, as claimed.
\end{proof}

Having established $\Ex{ \Phi^{m}}=\Oh(n)$ in the previous lemma, proving the gap bound is simple

\fillingkey*

\begin{proof} It follows directly from Lemma \ref{lem:exppot} and Markov's inequality that for any $m \geq 1$,
	\[
	\Pro{ \Phi^{m} \leq c_6 \cdot n^{3} } \geq 1-n^{-2}.
	\]
	Since $\Phi^{m} \leq c_6 \cdot n^{3}$ implies $\Gap(m)=\Oh(\log n)$, the proof is complete.
\end{proof}
In addition to a gap bound that holds w.h.p.~in $n$ (the number of bins) we also obtain the following bound which holds w.h.p.~in $m$ (the number of balls). A similar guarantee with an exceptional probability given as a function of the number of balls was recently given by~\cite{BK22}.

\setcounter{thm}{\value{lem}}
\begin{thm}
	\label{thm:highprobinm}
	There exists a constant $C > 0$ such that for any allocation process satisfying conditions \POne and \WOne, and any round $m \geq n$, we have 
	\[
	\Pro{\Gap(m) \leq C \log m} \geq 1 - m^{-2},
	\]
	and further, this implies that
	\[
	\Pro{\bigcap_{m \in [n, \infty)} \left\{\Gap(m) \leq C \cdot \log m \right\}} \geq 1 - \frac{1}{2} n^{-1}.
	\]   
\end{thm}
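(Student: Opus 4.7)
Both statements flow directly from Lemma~\ref{lem:exppot}, which provides the uniform-in-$m$ expectation bound $\Ex{\Phi^m} \leq c_6 n$, where $\Phi$ is the exponential potential with the parameter $\alpha > 0$ fixed in the proof of that lemma. The plan is to recycle this single expectation bound in two ways: first to derive a per-round high-probability guarantee by choosing a Markov threshold that scales polynomially in $m$, and then to chain these per-round estimates together via a union bound over all $m \geq n$.

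For the first statement I would apply Markov's inequality with threshold $K_m := 4 c_6 \cdot n \cdot m^2$, which gives
\[
\Pro{\Phi^m > K_m} \leq \frac{c_6 n}{K_m} = \frac{1}{4 m^2} \leq m^{-2}.
\]
Since $\exp(\alpha\, y_{i^*}^m) \cdot \mathbf{1}_{\{y_{i^*}^m \geq 2\}} \leq \Phi^m$ for the bin $i^*$ attaining $\Gap(m) = y_{i^*}^m$, the event $\{\Phi^m \leq K_m\}$ forces either $\Gap(m) < 2$ or
\[
\alpha \cdot \Gap(m) \leq \log K_m = \log(4 c_6) + \log n + 2 \log m.
\]
Because $m \geq n$ we have $\log n \leq \log m$, so the right-hand side is at most $\alpha C \log m$ for a suitable constant $C = C(\alpha, c_6)$, establishing $\Gap(m) \leq C \log m$ with probability at least $1 - m^{-2}$.

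For the second statement I would apply a union bound to the countable family $\{\Gap(m) > C \log m\}_{m \geq n}$, using the sharper per-round estimate $(4m^2)^{-1}$ rather than the weaker $m^{-2}$ appearing in the first statement:
\[
\Pro{\bigcup_{m \geq n} \{\Gap(m) > C \log m\}} \leq \sum_{m=n}^{\infty} \frac{1}{4 m^2} \leq \frac{1}{4(n-1)} \leq \frac{1}{2n},
\]
where the middle inequality uses the telescoping estimate $m^{-2} \leq (m-1)^{-1} - m^{-1}$ and the final step holds for $n \geq 2$.

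There is no substantial obstacle; the only bookkeeping detail is to pin down a single constant $C$ that serves both statements, which amounts to choosing the Markov threshold large enough so that the first-statement probability bound $m^{-2}$ actually holds with a useful slack. Everything else is an immediate consequence of Lemma~\ref{lem:exppot} and no new probabilistic input is required.
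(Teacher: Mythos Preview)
Your proposal is correct and follows essentially the same approach as the paper: apply Markov's inequality to the expectation bound $\Ex{\Phi^m} \leq c_6 n$ from Lemma~\ref{lem:exppot} with a threshold of order $n m^2$, read off the gap bound, and then take a union bound over $m \geq n$. The only cosmetic differences are that the paper uses the threshold $c_6 n m^2$ (and an integral comparison for the tail sum), whereas you introduce an extra factor of $4$ in the threshold to obtain the slack $(4m^2)^{-1}$ and then telescope; both routes land on the same conclusion.
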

\begin{proof}
	It follows directly from \Cref{lem:exppot} and Markov's inequality that for any $m \geq n$,
	\[
	\Pro{ \Phi^{m} \leq c_6 \cdot n \cdot m^{2} } \geq 1-m^{-2}.
	\]
	Since $\Phi^{m} \leq c_6 \cdot n \cdot m^{2} \leq c_6 \cdot m^3$ implies $\Gap(m)=\Oh(\log m)$, the first statement follows.
	
	For the second statement, by taking the union bound over all steps $n, n+1, \ldots$, we have that
	\[
	\Pro{\bigcap_{m \in [n, \infty)} \left\{ \Gap(m) \leq C \cdot \log m \right\}} \geq 1 - \sum_{m \in [n, \infty)} t^{-2} \geq 1 - \int_{n-1}^{\infty} t^{-2} \,\mathrm{dt} \geq 1 - \frac{1}{2}n^{-1},
	\]
	using that the function $f(t) = t^{-2}$ is convex for $t > 0$.
\end{proof}

\subsection{Proof of Proposition~\ref{Pro:negbias}}\label{sec:biasneg}

The \Packing process with a non-uniform probability vector is not a \Filling process and so our general $\Oh(\log n)$ gap bound does not apply. Thus we now give a short and basic proof that the bound does not diverge with $m$.

\newnegbias*

\begin{proof}
	We will analyze the change of the absolute value potential over an arbitrary round $t$. We consider two cases based on the load of the sampled bin $i \in [n]$:

	\textbf{Case 1 [$y_i^t \geq 0$]:} When an overloaded bin is allocated to, then it contributes $+1$ to the absolute value potential and each underloaded bin contributes $1/n$. So, since there are at most $n-1$ underloaded bins
	\[
	\Delta^{t+1} - \Delta^t \leq 1 + \frac{1}{n} \cdot (n-1) \leq 2.
	\]
	
	\textbf{Case 2 [$y_i^t < 0$]:} When an underloaded bin is allocated to, then the sampled bin $i$ contributes at most $-\lceil - y_i^t \rceil + 3$ to the change in potential. In addition, the average changes by $(\lceil - y_i^t \rceil + 1)/n$. As a result (of the average change alone), each overloaded bin contributes $-(\lceil - y_i^t \rceil + 1)/n$ and each underloaded bin contributes $(\lceil - y_i^t \rceil + 1)/n$. Since there is always at least one overloaded bin (that will also remain overloaded after the change of the average), the aggregate change when allocating to an underloaded bin can be bounded as follows,
	\[
	\Delta^{t+1} - \Delta^t \leq -\lceil - y_i^t \rceil + 3 -\frac{\lceil - y_i^t \rceil + 1}{n} + (n-1) \cdot \frac{\lceil - y_i^t \rceil + 1}{n} = 2\cdot \frac{y_i^t}{n} + 4 -\frac{2}{n}  \leq   \frac{2y_i^t}{n} + 4.
	\]
	Hence, by combining these two cases, the expected change of the absolute value potential is given by
	\[
	\Ex{\Delta^{t+1} - \Delta^t \mid x^t} \leq \sum_{i : y_i^t \geq 0} 2 \cdot p_i^t + \sum_{i : y_i^t < 0} \Big(\frac{2y_i^t}{n} + 4\Big) \cdot p_i^t \leq - \frac{\Delta^t}{an^3} + 6,
	\]
	using that $p_i^t \geq \frac{1}{an}$ and for the bin $i$ with the minimum load at round $t$, $y_i^t \leq -\frac{\Delta^t}{2n}$. Using induction, we show that $\Ex{\Delta^t} \leq 6an^3$. As $\Delta^0 = 0$ and assuming $\Ex{\Delta^t} \leq 6an^3$, we have 
	\[
	\Ex{\Delta^{t+1}} = \Ex{\Ex{\Delta^{t+1} \mid \Delta^t}} \leq \Ex{\Delta^t}  \cdot \Big(1 - \frac{1}{an^3}\Big) + 6 \leq 6an^3 - \frac{6an^3}{an^3} + 6  = 6an^3.
	\]The final part of the result now follows from an application of Markov's inequality and using that $\Gap(m) \leq \Delta^m$.
\end{proof}

\section{Unfolding General Filling Processes}\label{sec:unfolding_general}
In this section, we prove our results on (general) unfoldings of \Filling processes, though first we will recall the definition of the unfolding of a process from Page \pageref{folding}:

\unfolding

We now show that our notion of ``unfolding'' can be applied to capture the \Memory process.

\memorycompressed*

\begin{proof}
	As in the definition of unfolding, we denote the load vector of $P$ after round $t$ by $x^t$, and the load vector of a suitable unfolding $U(P)$ after the $s$-th atomic allocation by $\widehat{x}^{\,s}$. We also denote the corresponding normalized load vectors by $y^t$ and $\widehat{y}^{\,s}$ respectively. 
	
	We will construct by induction, a coupling between a suitable allocation process $P$, satisfying \POne and \WOne, and an unfolding $U(P)$ which follows the distribution of \Memory. That is for every round $t \geq 0$ of $P$, there exists a (unique) atomic allocation $s=a(t) \geq t$ in $U(P)$, such that $x^t=\widehat{x}^{\,a(t)}$, and $U(P)$ is an instance of \Memory. 
	
	Assume that for a suitable unfolding of the process $P$, the load configuration of $P$ after round $t$ equals the load configuration of $U(P)$ after atomic allocation $s=a(t)$, i.e., $x^t = \widehat{x}^{\,a(t)}$. In case the cache is empty (which happens only at the first round $s=0$), \Memory will sample a uniform bin $i$ (which satisfies \POne). If the cache is not empty, \Memory will take as bin $i$ the least loaded of the bin in the cache and a uniformly chosen bin. This produces a distribution vector that is majorized by \OneChoice (thus satisfies \POne, again). Thus we may couple the two instances such that process $P$ samples the same bin $i$ in round $t$ and atomic allocation $a(t)$, respectively. We continue with a case distinction concerning the load of bin $i$ at round $t$:
	
	\textbf{Case 1} [The bin $i$ is overloaded, i.e., $y_i^{t} = \widehat{y}_i^{\,a(t)} \geq 0$]. Then \Memory and $P$ both place one ball into bin $i$, satisfying \WOne. Further, $P$ proceeds to the next round and $U(P)$ proceeds to the next atomic allocation, which means that the coupling is extended.
	\smallskip
	
	\textbf{Case 2} [The bin $i$ is underloaded, i.e., $y_i^{t} = \widehat{y}_i^{\,a(t)} < 0$]. Then we can deduce by definition of \Memory that it will place the next $\big\lceil -y_i^{a(t)} \big\rceil +1$ balls in some way that it deterministically satisfies the following conditions:
	$(i)$ the first $\big\lceil -y_i^{a(t)} \big\rceil $ balls are placed into bins which have a normalized load $<0$ at the atomic allocation $a(t)$, $(ii)$ one ball is placed into a bin with normalized load $<1$ at the atomic allocation $a(t)$. This follows since bin $i$ gets stored in the cache and at each atomic allocation $ j= W^{t-1} + 1, \dots , W^t $ the process has access to a cached bin with normalized load at most $y_i^{a(t)} +j-1$. This satisfies \WOne so we can continue the coupling. 
	
	We have thus constructed a process $P$, such that some unfolding $U=U(P)$ of $P$ is an instance of  \Memory. \end{proof}

We now restate and prove the general gap bound for the unfolding of the processes. 

\sloweddown*

\begin{proof}

	We will re-use the constants $\alpha \in (0,1)$ and $c_6 > 0$ given by Lemma \ref{lem:exppot}. To begin, define
	\[
	\mathcal{B}:= \left| \left\{ t \in [1,m] \colon \Phi^t \geq c_6 \cdot n^{6+c}  \right\}  \right|,
	\] 
	which is the number of ``bad'' rounds of the \Filling process $P$.  Let $w^{t}:=W^{t+1}-W^{t}$ denote the  number of balls allocated in round $t$. We continue with a case distinction for each round $t$ whether $t \in \mathcal{B}$ holds:
	
	\begin{itemize}
		\item \textbf{Case 1} [$t \not\in \mathcal{B}$].
		By definition, for a round $t \not\in B$ we have $\Phi^{t} < c_6 n^{6+c}$.  Further, $\Phi^t < c_6 n^{6+c}$ implies $\Gap_{P}(t)< 2\cdot \frac{6+c}{\alpha} \cdot \log n$, for sufficiently large $n$. Now let $a(t),a(t)+1,\ldots,a(t)+w^t-1$, $w^t:=\lceil -y_i^t \rceil+1$ be the atomic allocations in $U(P)$ corresponding to round $t$ in $P$. Since all allocations of $U(P)$ are to bins with normalized load at most $1$ before the allocation, we conclude $\Gap_{U(P)}(s) \leq \max\{\Gap_{P}(t),2\} < 2\cdot \frac{6+c}{\alpha} \cdot \log n$ for all $s \in [a(t),a(t)+w^t-1]$.
		
		\item \textbf{Case 2} [$t \in \mathcal{B}$]. 
		We will use that for any $0 < \alpha < 1$ and $t \geq 0$ we have 
		\[
		\Delta^t \leq   2n \cdot \left( \frac{1}{\alpha}\log \Phi^t + 1\right).
		\]
		To see this, observe that $\sum_{i\in B_+^t}y_i^t = -
		\sum_{i\in B_-^t}y_i^t$ and thus \begin{equation}\label{eq:delbdd}\Delta^t\leq 2\sum_{i\in B_+^t}y_i^t\leq 2\sum_{i\in B_+^t}y_i^t\mathbf{1}_{y_i^t\geq 2} + 2n.\end{equation} Now, note that since $ \Phi^t =  \sum_{i\in[n] : y_{i}^{t}\geq 2} \exp\left( \alpha \cdot  y_i^t \right)$, if $ \Phi^t  \leq \lambda$ then  $y_i^t \leq \frac{1}{\alpha} \cdot \log \lambda$ for all $i\in[n]$ with $y_{i}^t\geq 2$. Thus by \eqref{eq:delbdd} we have $\Delta^t\leq 2n\cdot (1/\alpha)\log\Phi^t + 2n$ as claimed.
		
		So for a round $t \in \mathcal{B}$, we have 
		\begin{align}
			w^{t} \leq \Delta^{t} + 1 \leq \frac{2n}{\alpha}\log \Phi^t + 2n + 1 \leq c_\alpha n\log \Phi^t, \label{eq:weight_bound}
		\end{align}
		which holds deterministically for some constant $c_\alpha > 0$. Again, every such round $t \in \mathcal{B}$ of $P$ corresponds to the atomic allocations $a(t),a(t)+1,\ldots,a(t)+w^{t}-1$ in $U(P)$, and we will (pessimistically) assume that the gap in all those rounds is large, i.e., at least $2\cdot \frac{6+c}{\alpha} \cdot \log n$.
	\end{itemize}
	
	Let $C:= 2\cdot \frac{6+c}{\alpha}$. Then, by the above case distinction and \eqref{eq:weight_bound}, we can upper bound the number of rounds $s$ in $[1,m]$ of $U(P)$ where $\Gap_{U(P)}(s) < C\cdot \log n$ does not hold as follows:
	\begin{align}
		\left|\left\{ s \in [1,m] \colon \Gap_{U(P)}(s) \geq C\cdot \log n \right\}\right| \leq \sum_{t=1}^m \mathbf{1}_{t \in \mathcal{B}} \cdot w^{t} \leq  c_{\alpha} n  \cdot \sum_{t=1}^m \mathbf{1}_{t \in \mathcal{B}} \cdot \log \Phi^t  . \label{eq:bad_times}
	\end{align}
	Next define the sum of the  exponential potential function over rounds $1$ to $m$ as
	\[
	\Phi := \sum_{t=1}^{m} \Phi^t.
	\]
	
	Then by Lemma \ref{lem:exppot}, there is a constant $c_6 > 0$ such that $\Ex{ \Phi^t } \leq c_6 \cdot n$, and hence 
	\[
	\Ex{ \Phi } = \sum_{t=1}^m \Ex{ \Phi^t } \leq m \cdot c_6 \cdot n.
	\]
	By Markov's inequality,
	\begin{equation}
		\Pro{  \Phi \leq m \cdot c_6 \cdot n^{3} } \geq 1 - n^{-2}.\label{eq:markov_inequality}
	\end{equation}
	Note that conditional on the above event occurring, the following bound holds deterministically:
	\begin{equation}\label{eq:bbound}
		| \mathcal{B} | \leq \frac{ m c_6 n^{3} }{ c_6 n^{6+c} } \leq m \cdot n^{-3-c}.   
	\end{equation} 
	Also observe that we have  
	\begin{equation}\label{eq:logsumbd}\sum_{t=1}^m \mathbf{1}_{t \in \mathcal{B}} \cdot  \log \Phi^t =   \sum_{t \in \mathcal{B}} \log \Phi^t \leq   \sum_{t \in \mathcal{B}}   \log \Phi \leq    |\mathcal{B}|  \cdot  \log \Phi.  \end{equation}
	Thus, if the event $\Phi \leq m \cdot c_6 \cdot n^3$ occurs, then by \eqref{eq:bad_times}, \eqref{eq:bbound} and \eqref{eq:logsumbd} we have  
	\begin{align*}
		\left|\left\{ s \in [1,m] \colon \Gap_{U(P)}(s) \geq C \cdot \log n \right\}\right| &\leq c_{\alpha}n \cdot  |\mathcal{B}| \cdot   \log \Phi \\   
		&\leq c_{\alpha}n \cdot  (m \cdot n^{-3-c}) \cdot   \log\left(  m \cdot c_6 \cdot n^3\right)\\
		&\leq c_{\alpha}' m \cdot n^{-2-c}  \cdot \left( \log  m  + \log n \right)\\
		&\leq  m \cdot n^{-c}  \cdot \log  m ,
	\end{align*}
	where the third inequality is for some constant $c'_\alpha$ depending on $\alpha$ and the last inequality holds since  $\alpha>0$ is a small but fixed constant thus $c_\alpha'$ is constant. Since the inequality holds for \emph{any} unfolding of $P$, once the event in \eqref{eq:markov_inequality} occurs, we obtain the result.
\end{proof}

\section{Lower Bounds on the Gap}\label{sec:lower}

In this section we shall prove several lower bounds for \Filling processes. In \Cref{sec:uniform_lb}, we prove an $\Omega(\frac{\log n}{\log \log n})$ lower bound for the \Packing and \TightPacking processes for $m=\Oh(n)$ rounds. In \Cref{sec:packing_tight_lb}, for the \Packing process, we prove a tight $\Omega(\log n)$ lower bound for any $m = \Omega(n \log n)$ (see \Cref{tab:overview} for a concise overview of our lower and upper bounds). 
\subsection{Lower Bound for Uniform Processes} \label{sec:uniform_lb}

Since \Packing and \TightPacking use a uniform probability vector, the result below immediately yields a gap bound of $\Omega(\frac{\log n}{\log \log n})$ for these processes for $m=\Oh(n)$ rounds. 

\begin{lem} \label{lem:uniform_prob_lower_bound}
	For any allocation process with a uniform probability vector that satisfies condition \WOne we have \[
	\Pro{\Gap\left(\frac{n}{2}\right) \geq \frac{1}{2} \cdot \frac{\log n}{\log \log n}} \geq 1 - o(1).
	\]
\end{lem}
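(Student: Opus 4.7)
The plan is a two-phase argument that exploits how lightly loaded the regime $m=n/2$ is. First, I will show that by round $n/4$ a constant fraction of the bins already have load at least $2$; call this random set $B$. Second, I will show that among the $n/4$ uniform samples taken in rounds $(n/4,n/2]$, some bin of $B$ is re-sampled at least $k:=\tfrac12\cdot\log n/\log\log n$ times, forcing its final load above $k+2$ and hence (since the average load never exceeds $1$) producing a gap of at least $k$.

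The structural backbone of both phases is the following deterministic observation. A simple induction shows $W^t\leq 2t\leq n$ for all $t\leq n/2$: whenever $W^t\leq n$, the number of balls placed in the next round is at most $\lceil W^t/n\rceil+1\leq 2$. It follows that for $t\geq 1$, ``underloaded'' is synonymous with ``load exactly $0$'', and every filling round places exactly $\lceil W^t/n\rceil+1=2$ balls. Inspecting the quotas in \WOne with $\lceil -y_j^t\rceil=1$ for every underloaded $j$, the candidate $k_1$ must receive $2$ balls, $k_2$ must receive $1$ ball, and every other underloaded bin at most $0$. The only way to total $2$ balls is for a $k_1$ to exist, receive both balls, and for $k_2$ to be absent. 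Thus every filling round creates exactly one new load-$2$ bin, while every non-filling round increments an already-overloaded bin by $1$.

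For Phase~1, let $F^t$ denote the number of filling rounds in $[0,t-1]$; the structural observation gives $|\{j:x_j^t\geq 1\}|\leq F^t+1$ (the ``$+1$'' being the bin sampled at round $0$). Hence while $F^t\leq n/4$, the conditional probability that the next round is a filling round is at least $3/4-o(1)$. Stochastic domination by an i.i.d.\ $\mathrm{Bern}(2/3)$ sequence followed by a Chernoff bound yields $F^{n/4}\geq n/10$ with probability $1-e^{-\Omega(n)}$; the set $B$ of load-$\geq 2$ bins at round $n/4$ contains all of these filled bins, so $|B|\geq n/10$ with the same probability. Crucially, the process's choices of $k_1$ up to round $n/4$ are measurable with respect to the samples in $[0,n/4]$, so the $n/4$ independent uniform samples in $(n/4,n/2]$ are independent of $B$.

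For Phase~2, every $j\in B$ is overloaded and remains so (overloaded bins cannot be filling targets), so by case~1 of \WOne, $x_j^{n/2}=2+S_j$ where $S_j$ counts the samples of $j$ in $(n/4,n/2]$. Each $S_j\sim\mathrm{Bin}(n/4,1/n)$, and the standard lower bound $\binom{n/4}{k}\geq(n/(4k))^k$ gives $\Pro{S_j\geq k}\geq n^{-1/2-o(1)}$ for $k=\tfrac12\log n/\log\log n$. Setting $X=|\{j\in B:S_j\geq k\}|$ yields $\Ex{X}\geq n^{1/2-o(1)}$; since the indicators $\mathbf{1}_{S_j\geq k}$ for $j\in B$ are negatively correlated (components of a multinomial), $\mathrm{Var}(X)\leq\Ex{X}$, and Chebyshev's inequality gives $\Pro{X\geq 1}=1-o(1)$. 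Combined with $W^{n/2}/n\leq 1$, this yields $\Gap(n/2)\geq k+2-1\geq k$. The main conceptual obstacle is precisely the independence of future samples from $B$ used to transition between the two phases; an adversarial allocation strategy might naively hope to correlate $B$ with future samples, but the measurability of all its choices with respect to the past rules this out.
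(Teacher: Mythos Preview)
Your proof is correct and follows the same two-phase architecture as the paper: first show that a constant fraction of bins become overloaded by round $n/4$, then argue that the uniform samples in $(n/4,n/2]$ hit one of these bins $\Theta(\log n/\log\log n)$ times. Your structural reading of \WOne in the regime $W^t\leq n$ (each filling round creates exactly one load-$2$ bin, so at most $F^t+1$ bins are nonempty) is somewhat more explicit than the paper, which simply observes that sampling an empty bin converts at least one underloaded bin into a permanently overloaded one and then invokes the Method of Bounded Differences for Phase~1 concentration where you use stochastic domination by i.i.d.\ Bernoullis plus Chernoff. For Phase~2, the paper first uses Chernoff to guarantee $\Theta(n)$ samples land in the overloaded set, couples those with a \OneChoice instance on $\Theta(n)$ balls into $\Theta(n)$ bins, and cites the Raab--Steger max-load bound; you instead estimate the binomial tail $\Pro{S_j\geq k}$ directly and close with negative correlation and Chebyshev. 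These are interchangeable choices; your route is a bit more self-contained, the paper's a bit shorter via the citation. (One cosmetic point: $x_j^{n/2}\geq 2+S_j$ rather than equality, since $j$ may be re-sampled already during Phase~1, but this only helps.)
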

\begin{proof}
	We begin with the simple observation that for any round $t \leq n/2$, we have $W^t \leq 2t - 1$ (so in particular, $W^{n/2} \leq n$, i.e., at round $n/2$ we have at most $n$ balls allocated in total.) For $t = 1$, the statement is true, since the sampled bin is overloaded and so we allocate exactly one ball. Assuming $W^{t} \leq 2t -1$ holds for some $t \leq n/2 -1$, then at the beginning of round $t + 1$, the average load is at most $(2t - 1)/n < 1$. Hence even if we sample an empty bin $i \in [n]$, then $\lceil - y_i^t \rceil \leq 1$, so we can allocate at most $\lceil - y_i^t \rceil + 1 \leq 2$ balls. Hence, $W^{t+1} \leq W^t + 2 \leq 2(t+1) - 1$, which completes the induction.

	Next note that in the first $n/4$ rounds, whenever we sample an empty bin $i^t$, we turn at least one underloaded bin (which may be different from $i^t$) into an overloaded bin, and the bin remains overloaded at least until round $n/2$. Using a standard concentration inequality (e.g., Method of Bounded Differences), it follows that with probability $1-n^{-\omega(1)}$, we create at least $n/16$ overloaded bins during the first $n/4$ rounds. In the following, let us denote such a set of bins $\mathcal{B}$, and w.l.o.g.~assume $|\mathcal{B}|=n/16$.
	
	Consider now the rounds $n/4+1,n/4+2,\ldots,n/2$. Whenever a bin from $\mathcal{B}$ is sampled, its load is incremented by $1$. Further, with probability $|\mathcal{B}|/n = 1/16$, a bin from $\mathcal{B}$ is sampled. Using a Chernoff bound, with probability $1-n^{-\omega(1)}$, we sample a bin from $\mathcal{B}$ in the rounds $n/4+1,\ldots,n/2$ at least $n/128$ times. Thus with probability $1-2 n^{-\omega(1)}$, we can couple the allocations of at least $n/128$ balls with a \OneChoice process with $n/128$ balls into $n/16$ bins.

	By e.g.,~\cite[Theorem 1]{RS98}, in the \OneChoice process with $n/128$ balls into $n/16$ bins, with probability at least $1-o(1)$ there is a bin $i \in [n]$ which will receive at least $  \frac{2}{3}\cdot \frac{\log n}{\log \log n}$ balls. Hence, at the end of round $n/2$, with probability at least $1 - o(1)$, we have that\[
	\Gap(n/2) \geq \frac{2}{3}\cdot \frac{\log n}{\log \log n} - \frac{W^{n/2}}{n} \geq \frac{2}{3}\cdot   \frac{\log n}{\log \log n} - 1 \geq  \frac{1}{2}\cdot  \frac{\log n}{\log \log n},  
	\]as claimed.
\end{proof}

The next lower bound applies to a class of allocation processes, where $(i)$ balls can only be allocated to the (uniformly) sampled bin, but $(ii)$ the number of allocated balls is allowed to depend on $\mathfrak{F}^t$ (as well as the bin sample).

\begin{lem} \label{lem:packing_lower_bound}
	Consider any allocation process, which at each round $t \geq 0$, picks a bin $i^t$ uniformly at random. Furthermore, assume that at any round $t \geq 0$ the allocation process increments the load of bin $i^t$ by some function $f^t \geq 1$, which may depend on $\mathfrak{F}^{t}$ and $i^t$. Then, 
	\[
	\Pro{ \Gap\left(\frac{n}{2}\right) \geq \frac{1}{10}\cdot \frac{\log n}{\log \log n} } \geq 1-o(1).
	\]
\end{lem}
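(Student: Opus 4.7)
My plan is to exploit two different ``candidate'' bins whose loads must be large, and then do a simple case analysis on the size of the average load $W^{n/2}/n$. The key point is that the adversary's freedom in choosing $f^t$ (based on $\mathfrak{F}^t$ and $i^t$) can boost the average arbitrarily, but this very freedom forces the gap to be large through a different route.

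\emph{Step 1 (OneChoice baseline).} Since the bins $i^0, \ldots, i^{n/2-1}$ are sampled independently and uniformly, the sampling sequence is distributed exactly as a \OneChoice process with $n/2$ balls into $n$ bins. By a standard balls-into-bins computation (essentially~\cite{RS98}, as used in the proof of~\Cref{lem:uniform_prob_lower_bound}), with probability $1-o(1)$ there exists a bin $i^\star$ that is sampled at least $L := \tfrac{2}{3}\cdot \tfrac{\log n}{\log\log n}$ times in the first $n/2$ rounds. Since $f^t \geq 1$ at every round, each such sampling contributes at least $1$ to the load of $i^\star$, so deterministically $x_{i^\star}^{n/2} \geq s_{i^\star}^{n/2} \geq L$. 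This gives a candidate bin whose load is large regardless of how the adversary picks $f^t$.

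\emph{Step 2 (Heaviest-round bin).} By pigeonhole over the $n/2$ rounds, there exists a round $t^{\star\star}\in[0,n/2-1]$ with
\[
f^{t^{\star\star}} \;\geq\; \frac{W^{n/2}}{n/2} \;=\; \frac{2W^{n/2}}{n}.
\]
Let $j^\star := i^{t^{\star\star}}$ be the bin sampled in that round; since the full increment $f^{t^{\star\star}}$ is added to it, $x_{j^\star}^{n/2} \geq 2W^{n/2}/n$. This provides a second candidate bin whose load scales with the average.

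\emph{Step 3 (Case split on the average).} Condition on the high-probability event from Step~1, and compare $W^{n/2}/n$ to $L/2$. If $W^{n/2}/n \leq L/2$, then using the bin from Step~1,
\[
\Gap(n/2) \;\geq\; x_{i^\star}^{n/2} - \tfrac{W^{n/2}}{n} \;\geq\; L - \tfrac{L}{2} \;=\; \tfrac{L}{2}.
\]
If instead $W^{n/2}/n > L/2$, then using the bin from Step~2,
\[
\Gap(n/2) \;\geq\; x_{j^\star}^{n/2} - \tfrac{W^{n/2}}{n} \;\geq\; \tfrac{2W^{n/2}}{n} - \tfrac{W^{n/2}}{n} \;=\; \tfrac{W^{n/2}}{n} \;>\; \tfrac{L}{2}.
\]
In either case $\Gap(n/2) \geq L/2 = \tfrac{1}{3}\cdot\tfrac{\log n}{\log\log n} \geq \tfrac{1}{10}\cdot \tfrac{\log n}{\log\log n}$, and the event holds with probability $1-o(1)$.

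The only mildly delicate step is Step~1, which needs the standard OneChoice lower bound on the maximum load for $n/2$ balls into $n$ bins; everything else is a deterministic consequence of $f^t\geq 1$ and a one-line pigeonhole. The conceptual content is that the adversary cannot simultaneously (i) keep all individual $f^t$ small enough that the most-sampled bin does not dominate the average, and (ii) keep the running average small enough that the bin receiving the largest single $f^t$ does not produce a gap at least the average itself.
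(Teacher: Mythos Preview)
Your proof is correct and follows essentially the same approach as the paper: identify two candidate bins (the most-sampled bin via the \OneChoice maximum-load lower bound, and the bin receiving the largest single increment), then do a two-case analysis. The only cosmetic difference is that the paper pivots its case split on $f^{*}:=\max_{t} f^{t}$ and uses $W^{n/2}/n \leq f^{*}/2$, whereas you pivot directly on the value of $W^{n/2}/n$; since the pigeonhole/averaging step gives $f^{t^{\star\star}}\geq 2W^{n/2}/n$, the two formulations are equivalent.
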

\begin{proof}

	Recall the fact (see, e.g.,~\cite[Theorem 1]{RS98}) that in a \OneChoice process with $n/2$ balls into $n$ bins, with probability at least $1-o(1)$ there is a bin $i \in [n]$ which will be chosen at least $\kappa \cdot  \frac{\log n}{\log \log n}$ times during the first $n/2$ allocations, where we can take $\kappa=1/5$. Hence in our process
	\[
	x_i^{n/2} \geq  \left(\kappa \cdot \frac{\log n}{\log \log n}\right) \cdot 1,
	\]for some $i\in[n]$ w.h.p.\ as at least one ball is allocated at each round. 
	
	Let us define $f^{*}:=\max_{1 \leq t \leq n/2} f^t$ to be the largest number of balls allocated in one round. Then, clearly, $W^{n/2} \leq (n/2) \cdot f^{*}$. Furthermore, there must be at least one bin $j \in [n]$ which receives $f^*$ balls in one of the first $n/2$ rounds. Thus for any such bin, 
	\[
	x_j^{n/2} \geq f^*,
	\]
	and therefore the gap is lower bounded by
	\begin{align*}
		\Gap(n/2) &\geq \max \left\{x_i^{n/2}, x_j^{n/2} \right\} - \frac{W^{n/2}}{n} \geq \max \left\{ \kappa \cdot \frac{\log n}{\log \log n}, f^{*} \right\} - \frac{f^{*}}{2}.
	\end{align*}
	If $f^{*} \geq \kappa \cdot \frac{\log n}{\log \log n}$, then the lower bound is $\frac{1}{2} f^{*} \geq \frac{\kappa}{2} \cdot \frac{\log n}{\log \log n}$. Otherwise, $f^{*} < \kappa \cdot \frac{\log n}{\log \log n}$, and the lower bound is at least
	$ \kappa \cdot \frac{\log n}{\log \log n} - \frac{\kappa}{2} \cdot \frac{\log n}{\log \log n} = \frac{ \kappa}{2} \cdot \frac{\log n}{\log \log n}$, where we recall $\kappa=1/5$.
\end{proof}

\subsection{An Improved Lower Bound for Packing} \label{sec:packing_tight_lb}

In this section we prove a lower bound that is tight up to a multiplicative constant for the \Packing process when  $m = \Omega(n \log n)$. The result is proven using the following two technical lemmas that will also be useful when proving the upper bound in Theorem \ref{thm:sample_efficient}, the result on the throughput of a process.   The first lemma concerns the absolute value potential $\Delta^t$. 
\begin{lem}\label{eq:exp_delta_bound}	 There exists a constant $\tilde{c}>0$, such that for any allocation process satisfying conditions \POne and \WOne, and any round $t \geq 1$, we have  $\Ex{\Delta^t} \leq  \tilde{c} n$. 
\end{lem}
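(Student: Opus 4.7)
The plan is to derive this bound as a direct consequence of the estimate $\Ex{\Phi^{t}} \leq c_6 n$ already established in \Cref{lem:exppot}, via a pointwise comparison between $\Delta^{t}$ and $\Phi^{t}$. Since conditions \POne and \WOne have been absorbed into that bound, the present argument does not need to inspect the one-step dynamics of the process at all; it only performs a convexity trade between the exponential and absolute-value potentials.

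First I would use the identity $\sum_{i=1}^{n} y_i^{t} = 0$, which is immediate from $y_i^{t} = x_i^{t} - W^{t}/n$, to rewrite
\[
\Delta^{t} = 2 \sum_{i \colon y_i^{t} > 0} y_i^{t},
\]
so it suffices to control the total positive deviation. I would then split this sum according to whether $y_i^{t} < 2$ or $y_i^{t} \geq 2$: the ``small'' part contributes at most $2n$ deterministically (at most $n$ bins, each contributing less than $2$), while for the ``large'' part the elementary inequality $\alpha z \leq e^{\alpha z}$, valid for all $z \geq 0$, applied bin-wise yields
\[
\sum_{i \colon y_i^{t} \geq 2} y_i^{t} \leq \frac{1}{\alpha} \sum_{i \colon y_i^{t} \geq 2} e^{\alpha y_i^{t}} = \frac{\Phi^{t}}{\alpha},
\]
where $\alpha$ is the constant fixed in the proof of \Cref{lem:exppot}. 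Combining these two contributions gives the deterministic bound $\Delta^{t} \leq 4n + 2\Phi^{t}/\alpha$.

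Taking expectations and invoking \Cref{lem:exppot} then yields $\Ex{\Delta^{t}} \leq 4n + (2 c_6 / \alpha) \cdot n$, so setting $\tilde{c} := 4 + 2 c_6 / \alpha$ completes the proof. There is no genuine obstacle here: the argument exchanges the exponential control over large overloads for linear control, losing only a constant factor $1/\alpha$, and the sum of negative deviations is then free of charge by the identity $\sum_i y_i^{t}=0$. A more hands-on alternative would be a direct drift analysis showing $\Ex{\Delta^{t+1} \mid \mathfrak{F}^{t}} \leq (1 - c/n)\,\Delta^{t} + C$ (exploiting that under \POne the function $|y_i^{t}|\cdot \mathbf{1}_{i \in B_{-}^{t}}$ is non-decreasing in the rank index $i$, so that majorization of $p^{t}$ by the uniform vector forces $\sum_{i \in B_{-}^{t}} p_i^{t} |y_i^{t}| \geq \Delta^{t}/(2n)$), and then invoking an inductive fixed-point argument. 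This route reproduces the same conclusion but is substantially more tedious than the passage through $\Phi^{t}$.
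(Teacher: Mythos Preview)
Your proof is correct and follows essentially the same route as the paper: both arguments compare $\Delta^{t}$ pointwise against $\Phi^{t}$ via the inequality $\alpha z \leq e^{\alpha z}$ (the paper phrases this as $\Phi^{t} \geq \alpha(\Delta^{t}/2 - 2n)$, you as $\Delta^{t} \leq 4n + 2\Phi^{t}/\alpha$), and then invoke \Cref{lem:exppot}, arriving at the identical constant $\tilde{c} = 4 + 2c_6/\alpha$.
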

\begin{proof}
	We will be using the exponential potential $\Phi := \Phi(\alpha)$ with  $\alpha := \min \{ 1/101, 1/20 \cdot c_2/c_1 \}$ where $c_1, c_2 > 0$ are the constants defined in \Cref{lem:filling_good_quantile}. Recall that\[
	\Phi^t = \sum_{i : y_i^t \geq 2} e^{\alpha y_i^t}. 
	\]Hence, using that $e^z \geq 1 + z$ for any $z$,
	\begin{align} \label{eq:phi_geq_delta}
		\Phi^t \geq \sum_{i : y_i^t \geq 2} (1 + \alpha y_i^t) \geq \sum_{i : y_i^t \geq 2} \alpha y_i^t \geq \sum_{i : y_i^t \geq 0} \alpha (y_i^t - 2) \geq \alpha \cdot \Big(\frac{\Delta^t}{2} - 2n \Big).
	\end{align}
	By Lemma \ref{lem:exppot} there exists some constant $c_6 > 0$ such that for any $t \geq 0$, \[
	\Ex{\Phi^t} \leq c_6 \cdot n.
	\]
	
	Rearranging  \eqref{eq:phi_geq_delta} and using this bound on $\Ex{\Phi^t}$ implies that
	\[
	\Ex{\Delta^t} \leq  \frac{2}{\alpha} \cdot \Ex{\Phi^{t}} + 4n \leq \frac{2c_6}{\alpha} \cdot n + 4 n =: \tilde{c} n,\]
	with the constant $\tilde{c}:=\frac{2c_6}{\alpha}+4$, as claimed.
\end{proof}

For the second lemma, recall that $W^t$ is the number of balls allocated up to round $t$. 
\begin{lem}\label{lem:supermartin}
	Let $\tilde{c} > 0$ be the constant from Lemma \ref{eq:exp_delta_bound}. Then, for any allocation process satisfying conditions \POne and \WOne, and any round $m\geq t_0 \geq 0$, we have \[\ex{W^m-W^{t_0}} \leq (\tilde{c}+2)\cdot (m-t_0).\] 
\end{lem}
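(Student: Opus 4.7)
The plan is to establish the per-round bound $\ex{w^t}\le \tilde c+2$ where $w^t:=W^{t+1}-W^t$, and then conclude by linearity of expectation: $\ex{W^m-W^{t_0}} = \sum_{t=t_0}^{m-1}\ex{w^t}\le (\tilde c+2)(m-t_0)$.

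First, I will use condition \WOne to obtain the deterministic bound
\[
w^t \le 2 + (-y_{i^t}^t)^+,
\]
which follows because $\lceil z\rceil \le z+1$: in each round either the sampled bin is overloaded and one ball is placed ($w^t=1$), or it is underloaded and exactly $\lceil -y_{i^t}^t\rceil + 1 \le (-y_{i^t}^t)^+ + 2$ balls are allocated. Taking the conditional expectation over the sample $i^t\in_{p^t}[n]$, with bins relabelled so that $y_1^t\ge\cdots\ge y_n^t$ and $p^t$ majorized by uniform according to \POne, this gives $\ex{w^t\mid\mathfrak{F}^t}\le 2+\sum_i p_i^t(-y_i^t)^+$.

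Next, I want to show $\sum_i p_i^t (-y_i^t)^+ \le \Delta^t/(2n)$. My plan is to rewrite $(-y_i^t)^+ = (y_i^t)^+ - y_i^t$ and handle each piece. Under the chosen labelling, $i\mapsto (y_i^t)^+$ is non-negative and non-increasing, so Lemma~\ref{lem:quasilem} together with $\sum_{i\le k}p_i^t\le k/n$ (from \POne) immediately yields $\sum_i p_i^t (y_i^t)^+ \le \tfrac{1}{n}\sum_i (y_i^t)^+ = \Delta^t/(2n)$. For the second piece, Abel summation using $\sum_{i\le k}p_i^t\le k/n$ and $\sum_i y_i^t=0$ gives $\sum_i p_i^t y_i^t\le 0$. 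The combination is the main technical obstacle: in the principal application of this lemma (the \Packing process in Section~\ref{sec:packing_tight_lb}) the vector $p^t$ is exactly uniform, so $-\sum_i (1/n)y_i^t=0$ and the two pieces combine cleanly; for a general $p^t$ satisfying \POne, controlling $-\sum_i p_i^t y_i^t$ from above needs a bit more care, which I expect to handle by a Schur-convexity argument (as in Claim~\ref{clm:schur}) reducing to the uniform case.

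Finally, taking total expectations in the per-round bound and applying Lemma~\ref{eq:exp_delta_bound} yields
\[
\ex{w^t} \le 2 + \frac{\ex{\Delta^t}}{2n} \le 2 + \frac{\tilde c}{2} \le \tilde c + 2,
\]
so summing over $t\in[t_0,m-1]$ by linearity completes the proof of $\ex{W^m-W^{t_0}}\le (\tilde c+2)(m-t_0)$.
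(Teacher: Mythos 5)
Your per-round plus linearity-of-expectation structure is a clean alternative to the paper's route, which packages the same computation into the supermartingale $\tilde W^t := W^t - W^{t_0} - 2(t-t_0) - \frac{1}{n}\sum_{s=t_0+1}^t\Delta^s$ and then takes expectations; the two are equivalent. Your deterministic bound $w^t\le 2+(-y_{i^t}^t)^+$ is correct and is essentially what the paper also uses, in the form $\lceil -y_i^t\rceil \le -y_i^t + 1$ combined with $|B_-^t|\le n$.

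However, the step $\sum_i p_i^t(-y_i^t)^+\le \Delta^t/(2n)$ cannot be rescued for a general $p^t$ satisfying \POne, and your expectation that a Schur-convexity argument will fix it is misplaced. Under the non-increasing labelling $y_1^t\ge\cdots\ge y_n^t$, the map $i\mapsto(-y_i^t)^+$ is \emph{non-decreasing}, so applying Lemma~\ref{lem:quasilem} to the reversed index (using $\sum_{i>k}p_i^t \ge (n-k)/n$) yields $\sum_i p_i^t (-y_i^t)^+ \ge \tfrac{1}{n}\sum_i(-y_i^t)^+ = \Delta^t/(2n)$, i.e.\ the inequality runs the wrong way. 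Intuitively, a $p^t$ majorized by uniform puts \emph{more} mass on the underloaded bins, which is exactly where $(-y_i^t)^+$ is large, so a biased process places more filling balls per round, not fewer; a pointwise bound $\Ex{w^t\mid\mathfrak F^t}\le 2+\Delta^t/(2n)$ simply does not hold for all \POne-vectors (e.g.\ concentrating all the sampling mass on the single most underloaded bin blows this up).

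What saves you is that the paper's own proof carries the same restriction: it explicitly invokes $p_i=1/n$ immediately after the first display in the proof, and every downstream use of this lemma (the upper bound in Theorem~\ref{thm:sample_efficient}, which indeed adds a hypothesis of uniform sampling, and Theorem~\ref{lem:packing_logn_lb}) is for the uniformly sampling \Packing process. So the lemma as stated is slightly over-general; if you restrict to $p_i^t=1/n$, your target inequality holds with equality, $\Ex{w^t}\le 2+\tilde c/2\le \tilde c + 2$ follows, and your proof is correct and essentially the same argument as the paper's, just organized via linearity instead of a supermartingale.
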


\begin{proof} For any $t_0\geq 0$, we fix $\tilde{W}^{t_0} := 0$, and for any $t > t_0$ we let
	\begin{equation}\label{eq:supermartin}
		\tilde{W}^t := W^t - W^{t_0} - 2 \cdot (t - t_0) - \frac{1}{n} \cdot \sum_{s = t_0 + 1}^{t} \Delta^s.
	\end{equation}
	We shall show that $\tilde{W}^t$ is a supermartingale. Taking expectations over one round for $t > 0$ gives,
	\begin{align*}
		\Ex{\tilde{W}^{t} \,\, \left\vert \,\,  \mathfrak{F}^{t-1} \right.}   &= W^{t-1} - W^{t_0} + \sum_{i : y_i^{t-1} \geq 0} p_i + \sum_{i : y_i^{t-1} < 0} p_i \cdot (1 + \left\lceil -y_i^{t-1} \right\rceil)\\ &\qquad -  2 \cdot (t - t_0) - \frac{1}{n} \cdot \sum_{s = t_0+1}^{t} \Delta^s. \end{align*}
	Now, since $p_i=1/n$ and \[\sum_{i \colon y_i^{t-1} <0} \lceil -y_i^{t-1} \rceil \leq \sum_{i \colon y_i^{t-1} <0} (-y_i^{t-1}+1) \leq \Delta^{t-1}/2 + |B_{-}^{t-1}|,\] we have 
	\begin{align*}\Ex{\tilde{W}^{t} \,\, \left\vert \,\,  \mathfrak{F}^{t-1} \right.}  	&  \leq  W^{t-1} - W^{t_0} + |B_{+}^{t-1}| \cdot \frac{1}{n}  + 2 \cdot |B_{-}^{t-1}| \cdot \frac{1}{n} + \frac{\Delta^{t-1}/2}{n} \\ &\qquad - 2 \cdot (t - t_0) - \frac{1}{n} \cdot \sum_{s = t_0+1}^{t} \Delta^s \\
		&  \leq  W^{t-1} - W^{t_0} + 2 + \frac{\Delta^{t-1}}{n} - 2 \cdot (t - t_0) - \frac{1}{n} \cdot \sum_{s = t_0+1}^{t} \Delta^s \\
		&  = W^{t-1} - W^{t_0} - 2 \cdot (t-t_0-1) - \frac{1}{n} \cdot \sum_{s = t_0+1}^{t-1} \Delta^s \\
		&  = \tilde{W}^{t-1},
	\end{align*} 
	proving $\tilde{W}^t$ is a supermartingale. Thus, since $\tilde{W}^{t_0} = 0$, for any $m\geq t_0$ we have
	\[
	\Ex{\tilde{W}^{m}} \leq \tilde{W}^{t_0} = 0.
	\]
	Finally, recalling that $\Ex{\Delta^t} \leq  \tilde{c} n$ for any $t \geq 0$ by Lemma \ref{eq:exp_delta_bound}, we have
	\[
	\Ex{W^{m} - W^{t_0}} \leq \frac{1}{n} \cdot \Ex{\sum_{s = t_0 + 1}^{m} \Delta^s} + 2 \cdot (m - t_0) \leq (\tilde{c} + 2) \cdot (m - t_0),  
	\]by \eqref{lem:supermartin} for the constant $\tilde{c}>0$ given by Lemma \ref{eq:exp_delta_bound}.\end{proof}

We are now ready to restate and prove the main result in this section. 

\packlower* 

\begin{proof} Recall that  for any $m\geq t_0\geq 0$ we have $\Ex{W^{m} - W^{t_0}} \leq(\tilde{c} + 2) \cdot (m - t_0) $ by Lemma \ref{lem:supermartin}. Since $W^{m} \geq W^{t_0}$ holds deterministically, we can apply Markov's inequality to give
	\begin{align} \label{eq:small_w_t}
		\Pro{W^{m} \leq W^{t_0} + 4 \cdot (\tilde{c} + 2) \cdot (m - t_0)} \geq \frac{3}{4}.
	\end{align}
	
	Recall that $\Ex{\Delta^{t_0}} \leq  \tilde{c} n$ for any $t_0 \geq 0$ by Lemma \ref{eq:exp_delta_bound}. Then, using Markov's inequality
	\begin{align} \label{eq:delta_t0_small}
		\Pro{\Delta^{t_0} \leq 8\tilde{c}n} \geq \frac{7}{8}.
	\end{align}Fix $t_0 := m - \kappa n \log n\geq 0$, for some constant $\kappa > 0$ to be defined shortly, and define the set 
	\[
	\mathcal{B} := \left\{ i \in [n] : y_i^{t_0} \geq - 128\cdot\tilde{c}\right\}.
	\]
	Note that any $i \in \mathcal{B}$ satisfies $x_i^{t_0} \geq \frac{W^{t_0}}{n} - 128 \tilde{c}$.
	Further, if  $\{\Delta^{t_0} \leq 8\cdot\tilde{c}n\}$ holds, then $|\mathcal{B}| \geq (15/16) n$ holds deterministically. By a coupling with the \OneChoice process with $\tilde{m} := \kappa n \log n$ balls and $n$ bins, with probability at least $1 - o(1)$ there is a bin that is sampled at least $\frac{\sqrt{\kappa}}{10} \cdot \log n$ times (e.g.,~\cite{RS98}). By symmetry, with probability $15/16$, this maximally chosen bin is in $\mathcal{B}$. Since each time a bin is sampled by \Packing, it gets at least one ball, thus
	\begin{align} \label{eq:one_choice_argument}
		\Pro{\bigcup_{i \in [n]} \left\lbrace x_i^{m} \geq \frac{W^{t_0}}{n} + \frac{\sqrt{\kappa}}{10} \cdot \log n -128 \tilde{c} \right\rbrace \cap \left\lbrace \Delta^{t_0} \leq 8\tilde{c}n \right\rbrace } &\geq 1 - o(1) - \frac{1}{8} - \frac{1}{16}    \geq \frac{3}{4}.
	\end{align}
	Assuming that the events \[\bigcup_{i \in [n]} \left\lbrace x_i^{m} \geq \frac{W^{t_0}}{n} +  \frac{\sqrt{\kappa}}{10} \cdot \log n - 128\tilde{c} \right\rbrace,\] and \[\left\lbrace W^{m} \leq W^{t_0} + 4 \cdot (\tilde{c} + 2) \cdot (m - t_0) \right\rbrace,\] hold, then by choosing $\kappa := (\frac{1}{100 \cdot (\tilde{c} + 2)})^2$, there exists some $i\in [n]$ such that
	\[
	y_i^{m} \geq \frac{W^{t_0}}{n} + \frac{\sqrt{\kappa}}{10} \cdot \log n -128\tilde{c} - \frac{W^{m}}{n} \geq \frac{\sqrt{\kappa}}{10} \cdot \log n - \kappa \cdot 4 \cdot (\tilde{c} + 2) \cdot \log n -128 \tilde{c} \geq \frac{\sqrt{\kappa}}{20} \cdot \log n.
	\]
	Taking the union bound over \eqref{eq:small_w_t} and \eqref{eq:one_choice_argument},
	\[
	\Pro{\Gap(m) \geq \frac{\sqrt{\kappa}}{20} \cdot \log n} \geq \frac{3}{4} - \frac{1}{4} = \frac{1}{2},\]as claimed.
\end{proof}

\section{Throughput and Sample-Efficiency of Filling Processes}\label{sec:sampleeff}

Recall that $W^t := \sum_{i\in[n]} x_i^t$ is the total number of balls allocated by round $t$. In this section, we will bound the expectation of the throughput $\mu^t$ of a \Filling process at an arbitrary round $t \geq 0$, which was defined as\[
\mu^t = \frac{W^t}{t}.
\]

\sampleefficient*

Note that for processes which make one sample per round, i.e., $S^t = t$, such as the \Packing process, throughput coincides with the sample efficiency $\eta^t$ of the process, i.e., $\mu^t = \eta^t$, where
\[
\eta^t = \frac{W^t}{S^t}.
\]
Therefore, \Cref{thm:sample_efficient} implies that \Packing is more sample-efficient than \OneChoice by a constant factor in expectation. 

\PackingSampleEfficient*

The empirical results of \Cref{fig:packing_sample_efficiency} strongly support this, suggesting that the sample efficiency of \Packing is around $3/2$ on average.

\begin{proof}[Proof of \Cref{thm:sample_efficient}]We start with the following general expression for the expected number of balls allocated in an arbitrary round $t$, 
	\begin{equation}\label{eq:prediff} \Ex{W^{t+1}-W^{t} \;\big| \;\mathfrak{F}^t } = \sum_{i: y_i^t\geq 0} 1 \cdot p_i + \sum_{i: y_i^t<0} \left(1 + \left\lceil -y_i^t\right\rceil  \right)\cdot p_i  = 1 + \sum_{i: y_i^t<0} \left\lceil -y_i^t\right\rceil \cdot p_i.  \end{equation}

	Let $i^*$ be the smallest index such that $y_{i^*}^t<0 $ and observe that the sequence $\left\lceil -y_n^t\right\rceil, \dots,  \left\lceil -y_{i^*}^t\right\rceil$ is non-negative and non-increasing. Additionally, $\sum_{i=1}^k 1/n \geq \sum_{i=1}^k p_i$ for all $k\in [n]$ by condition \POne, and thus $\sum_{i=k}^n 1/n \leq \sum_{i=k}^n p_i$ for any $i^*\leq k\leq n$. Hence, by \Cref{lem:quasilem}, we have\[\sum_{i: y_i^t<0} \left\lceil -y_i^t\right\rceil \cdot p_i = \sum_{i=i^*}^n \left\lceil -y_i^t\right\rceil \cdot p_i \geq \sum_{i=i^*}^n \left\lceil -y_i^t\right\rceil \cdot \frac{1}{n} = \sum_{i: y_i^t<0} \left\lceil -y_i^t\right\rceil \cdot \frac{1}{n}. \]
	Therefore, it follows from \eqref{eq:prediff} that 
	\begin{equation}\label{eq:1diff} \Ex{W^{t+1}-W^{t} \;\big| \;\mathfrak{F}^t } = 1 + \sum_{i: y_i^t<0} \left\lceil -y_i^t\right\rceil \cdot p_i\geq  1 + \sum_{i: y_i^t<0} \left\lceil -y_i^t\right\rceil \cdot \frac{1}{n}.  \end{equation} Since $\Delta^t=\sum_{i=1}^n |y_i^t|  $ and $\sum_{i=1}^n y_i^t=0$, we have $\ex{W^{t+1}-W^{t}\mid \mathfrak{F}^t }\geq   1 + \frac{\Delta^t}{2n}$ by \eqref{eq:1diff}. Thus,
	\begin{equation}\label{eq:delt}\Ex{W^{t+1}-W^{t}\;\big| \;\mathfrak{F}^t   }\cdot \mathbf{1}_{\{\Delta^{t}  \geq n/10\}} \geq \left( 1 + \frac{1}{20}\right)\cdot \mathbf{1}_{\{\Delta^{t}  \geq n/10\}} .\end{equation} Recall that $B_-^t$ is the number of underloaded bins at time $t$. Thus by \eqref{eq:1diff} we have
	\begin{equation}\begin{aligned}\label{eq:Bbound}\Ex{W^{t+1}-W^{t}\;\big| \; \mathfrak{F}^t  } \cdot \mathbf{1}_{\{|B_{-}^{t}| \geq n/20 \}}&\geq  \left( 1 + \frac{n}{20}\cdot \frac{1}{n}\right) \cdot \mathbf{1}_{\{|B_{-}^{t}| \geq n/20 \}} \\&= \left(1+ \frac{1}{20}\right) \cdot \mathbf{1}_{\{|B_{-}^{t}| \geq n/20 \}}.\end{aligned}\end{equation}
	For any $t_0\geq 1$, $t\in[t_0,t_0+n]$, define $A$ to be the (random) set of times $t\in [t_0,t_0+n ]$ where the event $\mathcal{E}_t=\{\Delta^{t} \geq  n/10\} \cup \{|B_{-}^{t}| \geq n/20\}$ holds. By \eqref{eq:delt} and \eqref{eq:Bbound} we have $\Ex{W^{t+1}-W^{t}\;\big| \; \mathfrak{F}^t  }\mathbf{1}_{\mathcal{E}_t}\geq 1 + 1/20 $. Then, since $W^{t+1}-W^t\geq 1$ for any $t$, for any $t_0\geq 1$,  
	\begin{align*} \Ex{W^{t_0+n}-W^{t_0}} &= \Ex{ \sum_{t=t_0}^{t_0+n }\Ex{W^{t+1}-W^{t} \;\big| \; \mathfrak{F}^t}\mathbf{1}_{t\in \mathcal{E}_t}   +\Ex{W^{t+1}-W^{t} \;\big| \; \mathfrak{F}^t}\mathbf{1}_{t\notin \mathcal{E}_t}}\\ &\geq \Ex{| A|\cdot \left(1 + \frac{1}{20}\right)   + (n-| A|)  }.   \end{align*}
	Now, by \Cref{cor:enough_good_quantiles} for any $t_0\geq 1$ we have $\Pro{|A|\geq n/40}=1$ and so
	\begin{equation}\label{eq:bigt}\Ex{W^{t_0+n}-W^{t_0}}\geq \Ex{\frac{n}{40}\cdot \left(1 + \frac{1}{20}\right)   + \frac{39n}{40} }  = n\left( 1 + \frac{1}{800}\right).\end{equation}
	
	Since the bound from \eqref{eq:bigt} holds for any $t_0\geq 1$ the result follows for any $t> n$ (more details given below) however we must first consider the case $2\leq t \leq n$ separately.
	
	Observe that in the first round we place one ball. Then, until $n$ balls have been placed we place two balls if we sample an underloaded bin and this is the most we can place in any round. We have little control how these are placed but certainly for any round $r\leq \lfloor n/3\rfloor$ there are at least $n-1-2\cdot(\lfloor n/3\rfloor-1)\geq \lceil n/3\rceil  $  underloaded (empty) bins when we sample a  bin. It follows from condition \POne that for any $r\leq \lfloor n/3\rfloor$ the number of underloaded bins sampled in the first $r$ rounds (excluding the first) stochastically dominates a $\operatorname{Bin}(r-1, 1/3)$ random variable, which has median at least $\lfloor (r-1)/3\rfloor$. Thus, with probability at least $1/2$, at least $\lfloor (r-1)/3\rfloor$ of the first $r$ rounds contribute two balls. This is not greater than $0$ for $r\leq 6$, however for $2\leq r\leq 6$ at most $1+2\cdot6=13$ bins are occupied. Thus for any $2\leq r\leq 6$ two balls are assigned in $r-1\geq 1$ of the first $r$ rounds (all rounds but the first) with probability at least $1-5\cdot (13/n)\geq 1/2$ by the union bound, since we assume throughout that $n$ is sufficiently large. Thus for any $2\leq r\leq \lfloor n/3\rfloor$ we have 
	\[\Ex{W^{r}}\geq r  + \frac{1}{2}\cdot \max\left\{\lfloor (r-1)/3\rfloor, 1 \right\} \geq r\cdot \left(1+\frac{1}{12}\right).   \] 
	If we assume (pessimistically) that only one ball is allocated at any round $t> \lfloor n/3\rfloor$, then for any $t\leq n$  we have  
	\[\Ex{W^{t}} \geq \Ex{W^{\min\{t, \lfloor n/3\rfloor\}}} + \min\{0, t-\lfloor n/3\rfloor\}  \geq t\cdot \left(1+\frac{1}{50}\right).  \]
	Hence for any $2\leq t\leq  n$ we have $\ex{\mu^t }\geq \Ex{W^{t}}/t > 1+1/50$. Also for any $t> n$ we have \begin{align*}\ex{\mu^t }&\geq \frac{1}{t}\cdot \left( \sum_{i= 1}^{\lfloor (t-1)/n\rfloor} \Ex{W^{i\cdot n+1}-W^{(i-1)\cdot n+1}} + \left(t -\left\lfloor \frac{t-1}{n}\right\rfloor\cdot n   \right)\right) \\ &\geq \frac{1}{t}\cdot \left(t + \frac{1}{800}\left\lfloor \frac{t-1}{n}\right\rfloor\cdot n \right)\geq 1 + \frac{1}{1600},\end{align*} by \eqref{eq:bigt}. Thus taking $c= 1 + \frac{1}{1600}$ gives the lower bound on $\ex{\mu^t }$.
	
	For the upper bound, by \Cref{lem:supermartin}, there exists a constant $\tilde{c}>0$ such that for any $m\geq t_0\geq 0$,
	\[\Ex{W^m-W^{t_0}\; \Big|\; \mathfrak{F}^{t_0}}\leq (\tilde{c}+2) \cdot (m-t_0).\] 
	We now choose $m=t$ and $t_0=0$, and using $\ex{W^0}=0$, we conclude
	\[
	\Ex{W^t} \leq (\tilde{c}+2)\cdot t.
	\]
	Using this, and since for any process satisfying \POne and \WOne, we conclude that
	\[\ex{\mu^t}  =\mathbf{E}\left[\frac{W^t}{t}\right] = \frac{\Ex{W^t}}{t}\leq \frac{(\tilde{c}+2)\cdot t}{t} :=C,\]for the constant $C=\tilde{c}+2$.      
\end{proof}

\section{Experimental Results} \label{sec:experiemental_results}

In this section, we present some empirical results for the \Packing, \TightPacking  and \Memory processes (\Cref{fig:gap_vs_bins_alt} and \Cref{tab:gap_distribution}) and compare their load with that of a $(1 + \beta)$-process with $\beta = 0.5$, a $\Quantile(1/2)$ process, and the \TwoChoice process. 

\colorlet{GA}{black!40!white}
\colorlet{GB}{black!70!white}
\colorlet{GC}{black}
\newcommand{\CA}[1]{\textcolor{GA}{#1}} %
\newcommand{\CB}[1]{\textcolor{GB}{#1}} %
\newcommand{\CC}[1]{\textcolor{GC}{#1}} %
\newcommand{\CI}[2]{\FPeval{\result}{min(30 + 3 * #1, 100)} \colorlet{tmpC}{black!\result!white} {\textcolor{tmpC}{#2}}}

\begin{table}[H]
	\centering	\resizebox{\textwidth}{!}{
		\begin{tabular}{|c|c|c|c|c|c|c|}
			\hline
			$n$ & $(1+\beta)$ for $\beta = 1/2$ & \Packing& \TightPacking & $\Quantile(1/2)$ &  \Memory & \TwoChoice \\ \hline
			$10^3$ &
			\makecell{
				\CI{5}{\textbf{12} : \ 5\% } \\
				\CI{15}{\textbf{13} : 15\% } \\
				\CI{31}{\textbf{14} : 31\% } \\
				\CI{21}{\textbf{15} : 21\% } \\
				\CI{15}{\textbf{16} : 15\% } \\
				\CI{5}{\textbf{17} : \ 5\% } \\
				\CI{4}{\textbf{18} : \ 4\% } \\
				\CI{2}{\textbf{19} : \ 2\% } \\
				\CI{1}{\textbf{20} : \ 1\% } \\
				\CI{1}{\textbf{21} : \ 1\% } } &
			\makecell{
				\CI{3}{\textbf{\ 6} : \ 3\%} \\
				\CI{14}{\textbf{\ 7} : 14\%} \\
				\CI{30}{\textbf{\ 8} : 30\%} \\
				\CI{23}{\textbf{\ 9} : 23\%} \\
				\CI{15}{\textbf{10} : 15\%} \\
				\CI{8}{\textbf{11} : \ 8\%} \\
				\CI{4}{\textbf{12} : \ 4\%} \\
				\CI{1}{\textbf{13} : \ 1\%} \\
				\CI{1}{\textbf{14} : \ 1\%} \\
				\CI{1}{\textbf{15} : \ 1\%} } &
			\makecell{
				\CI{23}{\textbf{5} : 23\%} \\
				\CI{50}{\textbf{6} : 50\%} \\
				\CI{15}{\textbf{7} : 15\%} \\
				\CI{10}{\textbf{8} : 10\%} \\
				\CI{1}{\textbf{9} : \ 1\%} \\
				\CI{1}{\textbf{10} : \ 1\%} } & 
			\makecell{
				\CI{1}{\textbf{\ 3} : \ 1\% } \\ 
				\CI{11}{\textbf{\ 4} : 11\% } \\ 
				\CI{46}{\textbf{\ 5} : 46\% } \\
				\CI{33}{\textbf{\ 6} : 33\% } \\
				\CI{6}{\textbf{\ 7} : \ 6\% } \\
				\CI{2}{\textbf{\ 8} : \ 2\% } \\
				\CI{1}{\textbf{10} : \ 1\% } } &
			\makecell{
				\CI{67}{\textbf{2} : 67\%} \\
				\CI{33}{\textbf{3} : 33\%} } &
			\makecell{
				\CI{93}{\textbf{2} : 93\% } \\
				\CI{7}{\textbf{3} : \ 7\% }} \\ \hline
			$10^4$ &
			\makecell{
				\CI{3}{\textbf{16} : \ 3\% } \\
				\CI{21}{\textbf{17} : 21\% } \\
				\CI{19}{\textbf{18} : 19\% } \\
				\CI{10}{\textbf{19} : 10\% } \\
				\CI{23}{\textbf{20} : 23\% } \\
				\CI{11}{\textbf{21} : 11\% } \\
				\CI{10}{\textbf{22} : 10\% } \\
				\CI{2}{\textbf{23} : \ 2\% } \\
				\CI{1}{\textbf{24} : \ 1\% }} &
			\makecell{
				\CI{2}{\textbf{\ 9} : \ 2\%} \\
				\CI{17}{\textbf{10} : 17\%} \\
				\CI{28}{\textbf{11} : 28\%} \\
				\CI{14}{\textbf{12} : 14\%} \\
				\CI{22}{\textbf{13} : 22\%} \\
				\CI{11}{\textbf{14} : 11\%} \\
				\CI{3}{\textbf{15} : \ 3\%} \\
				\CI{2}{\textbf{16} : \ 2\%} \\
				\CI{1}{\textbf{17} : \ 1\%} } &
			\makecell{
				\CI{3}{\textbf{6} : \ 3\%} \\
				\CI{24}{\textbf{7} : 24\%} \\
				\CI{45}{\textbf{8} : 45\%} \\
				\CI{23}{\textbf{9} : 23\%} \\
				\CI{5}{\textbf{10} : \ 5\%} } & 
			\makecell{
				\CI{14}{\textbf{\ 6} : 14\% } \\
				\CI{42}{\textbf{\ 7} : 42\% } \\
				\CI{25}{\textbf{\ 8} : 25\% } \\
				\CI{15}{\textbf{\ 9} : 15\% } \\
				\CI{2}{\textbf{10} : \ 2\% } \\
				\CI{1}{\textbf{11} : \ 1\% } \\
				\CI{1}{\textbf{12} : \ 1\% } } &
			\makecell{
				\CI{5}{\textbf{2} : \ 5\%} \\
				\CI{95}{\textbf{3} : 95\%} } &
			\makecell{
				\CI{46}{\textbf{2} : 46\% } \\
				\CI{54}{\textbf{3} : 54\% }} \\ \hline
			$10^5$ & 
			\makecell{
				\CI{2}{\textbf{20} : \ 2\% } \\ 
				\CI{7}{\textbf{21} : \ 7\% } \\
				\CI{9}{\textbf{22} : \ 9\% } \\
				\CI{26}{\textbf{23} : 26\% } \\
				\CI{27}{\textbf{24} : 27\% } \\
				\CI{14}{\textbf{25} : 14\% } \\
				\CI{6}{\textbf{26} : \ 6\% } \\
				\CI{3}{\textbf{27} : \ 3\% } \\
				\CI{4}{\textbf{28} : \ 4\% } \\
				\CI{1}{\textbf{29} : \ 1\% } \\
				\CI{1}{\textbf{34} : \ 1\% }} &
			\makecell{
				\CI{2}{\textbf{12} : \ 2\%} \\
				\CI{16}{\textbf{13} : 16\%} \\
				\CI{20}{\textbf{14} : 20\%} \\
				\CI{28}{\textbf{15} : 28\%} \\
				\CI{23}{\textbf{16} : 23\%} \\
				\CI{5}{\textbf{17} : \ 5\%} \\
				\CI{3}{\textbf{18} : \ 3\%} \\
				\CI{1}{\textbf{19} : \ 1\%} \\
				\CI{2}{\textbf{20} : \ 2\%} } &
			\makecell{
				\CI{4}{\textbf{8} : \ 4\%} \\
				\CI{33}{\textbf{9} : 33\%} \\
				\CI{40}{\textbf{10} : 40\%} \\
				\CI{17}{\textbf{11} : 17\%} \\
				\CI{5}{\textbf{12} : \ 5\%} \\
				\CI{1}{\textbf{13} : \ 1\%} } & 
			\makecell{
				\CI{28}{\textbf{\ 8} : 28\% } \\
				\CI{42}{\textbf{\ 9} : 42\% } \\
				\CI{18}{\textbf{10} : 18\% } \\
				\CI{7}{\textbf{11} : \ 7\% } \\
				\CI{3}{\textbf{12} : \ 3\% } \\
				\CI{1}{\textbf{14} : \ 1\% } \\
				\CI{1}{\textbf{15} : \ 1\% }} &
			\makecell{
				\CI{100}{\textbf{3} : 100\%}
			} &
			\makecell{
				\CI{100}{\textbf{3} : 100\% }} \\ \hline
	\end{tabular}}
	\caption{Summary of observed gaps  for $n \in \{ 10^3, 10^4, 10^5\}$ bins and $m=1000 \cdot n$ number of balls, for $100$ repetitions. The observed gaps are in bold and next to that is the $\%$ of runs where this was observed.}
	\label{tab:gap_distribution}
\end{table}

\begin{figure}[ht]
	\centering
	\includegraphics[height=4.4cm]{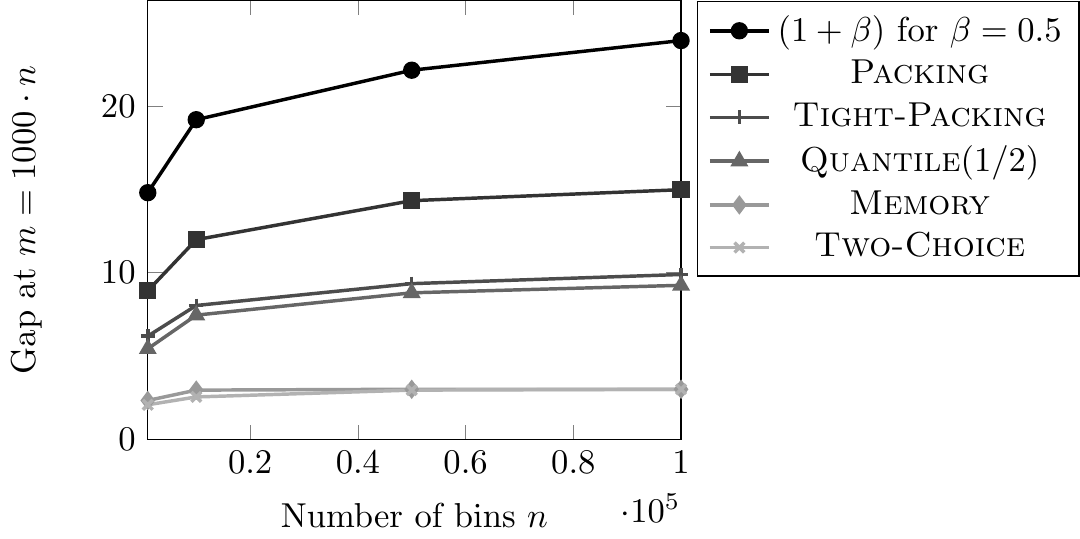}
	\caption{Average Gap vs.~$n \in \{ 10^3, 10^4, 5 \cdot 10^4, 10^5\}$ for the experimental setup of \Cref{tab:gap_distribution}.}
	\label{fig:gap_vs_bins_alt}
\end{figure}

\begin{figure}[H]
	\centering
	\includegraphics[height=4.4cm]{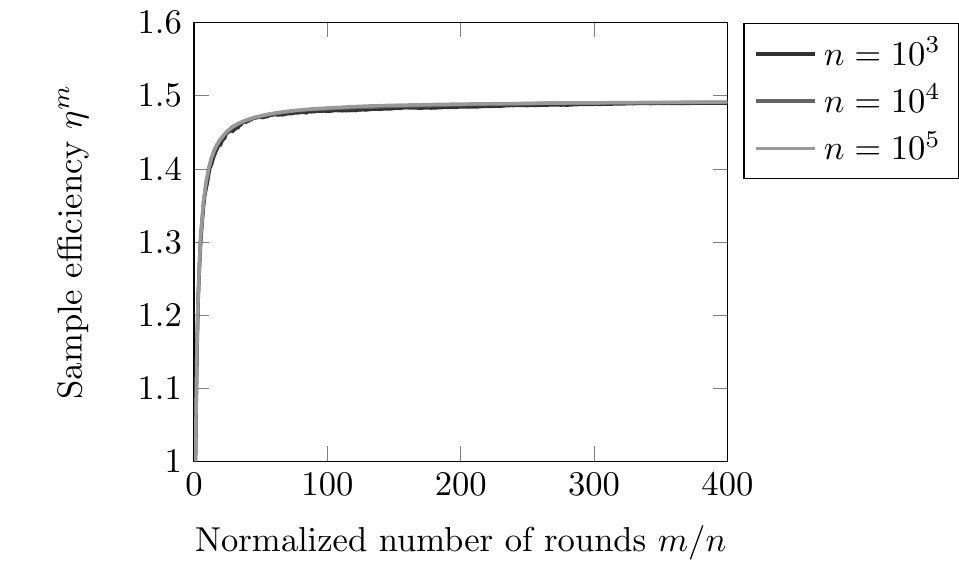}
	\caption{Sample efficiency for the \Packing process versus the number  of rounds over the number of bins $n$ for $n \in \{ 10^3, 10^4, 10^5\}$. The sample efficiency seems to converge to $1.5$.}
	\label{fig:packing_sample_efficiency}
\end{figure}

\section{Conclusions}\label{sec:conclusions}
In this work, we introduced a new class of allocation processes we call \Filling processes. Roughly speaking, these processes have a probability vector majorized by \OneChoice, they allocate one ball when they sample an overloaded bin, but when they sample an underloaded bin they can allocate the number of missing balls to some arbitrary underloaded bins. We proved that any \Filling process achieves an $\Oh(\log n)$ gap at an arbitrary round \Whp\ (\Cref{thm:filling_key}) and, for some constant $c > 0$, allocates at least $1+c$ balls in each round in expectation (\Cref{thm:sample_efficient}). 

Our prototype \Filling process is \Packing, which selects at each round a random bin, and if the bin is overloaded, allocates a single ball; otherwise it ``fills'' the underloaded bin with balls up until it becomes overloaded. For the \Packing process we proved that the general upper bound of $\mathcal{O}(\log n)$ is tight for any sufficiently large $m$ (\Cref{lem:packing_logn_lb}). A consequence of \Cref{thm:sample_efficient} is that, in contrast to other processes with a gap that does not depend on the number of balls such as \TwoChoice, $(1+\beta)$-process and $\Quantile(1/2)$, \Packing is more sample-efficient than \OneChoice. Additionally we showed that, unlike \TwoChoice, \Packing can also handle arbitrarily biased distributions.  

We also prove that our results for \Filling processes can be extended to the \Memory process by Mitzenmacher, Prabhakar and Shah~\cite{MPS02}. Using this extension we prove the first $\mathcal{O}(\log n )$ gap bound for the \Memory process with a polynomial number of balls.

There are several possible extensions to this work. One is to explore \emph{stronger} versions of the conditions on the probability vector which imply $o(\log n)$ gap bounds. For example, it might be interesting to explore a version of \Packing where the bin is sampled using \TwoChoice.

At the opposite end, one might investigate probability vectors with \emph{weaker} guarantees. In \Cref{sec:packing_with_negative_bias}, we showed that for the \Packing process with an arbitrarily $(a, b)$-biased sampling vector (which may even majorize \OneChoice) the gap is \Whp~at most $a \cdot \poly(n)$, i.e., still independent of $m$. This demonstrates the ``power of filling'' in balanced allocations.

\section*{Acknowledgments} 
We thank David Croydon and Martin Krejca for some helpful discussions.

\setlength{\bibsep}{3pt plus 3ex}

\clearpage

\appendix

\section{Auxiliary Inequalities}

We provide an elementary proof of the following lemma for completeness; our proof is similar to that of~\cite[Lemma A.1]{FGS12}. This inequality also appears in \cite[Ch.~XII]{MPFbook} where the authors state that it is a consequence of Abel's transformation (summation by parts).  

\begin{lem}\label{lem:quasilem}Let the real sequences  $(a_k)_{k=1}^n $ and $(b_k)_{k=1}^n $ be non-negative, and $(c_k)_{k=1}^n$ be non-negative and non-increasing. If $\sum_{k=1}^i a_k \leq \sum_{k=1}^i b_k$ holds for all $1\leq i\leq n$ then, \begin{equation} \label{eq:toprove}\sum_{k=1}^n a_k\cdot c_k \leq \sum_{k=1}^n b_k\cdot c_k.\end{equation}
\end{lem}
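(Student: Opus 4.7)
The plan is to prove this via Abel summation (summation by parts), which converts a sum weighted by a monotone sequence into one weighted by its non-negative differences, thereby letting us plug in the hypothesized partial-sum inequalities termwise.

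First, I would introduce the partial sums $A_i := \sum_{k=1}^i a_k$ and $B_i := \sum_{k=1}^i b_k$, with the convention $A_0 = B_0 = 0$, so that $a_k = A_k - A_{k-1}$ and $b_k = B_k - B_{k-1}$. The hypothesis then reads $0 \leq A_i \leq B_i$ for all $1 \leq i \leq n$.

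Next, I would rewrite $\sum_{k=1}^n a_k c_k = \sum_{k=1}^n (A_k - A_{k-1}) c_k$ and reindex to obtain the Abel identity
\[
\sum_{k=1}^n a_k c_k \;=\; A_n c_n \;+\; \sum_{k=1}^{n-1} A_k \,(c_k - c_{k+1}),
\]
and similarly for the $b_k$'s. Since $(c_k)$ is non-increasing, each difference $c_k - c_{k+1}$ is non-negative, and $c_n \geq 0$ by hypothesis. Thus every coefficient multiplying $A_k$ (resp.\ $B_k$) on the right-hand side is non-negative, so the termwise inequality $A_k \leq B_k$ yields
\[
A_n c_n + \sum_{k=1}^{n-1} A_k (c_k - c_{k+1}) \;\leq\; B_n c_n + \sum_{k=1}^{n-1} B_k (c_k - c_{k+1}) \;=\; \sum_{k=1}^n b_k c_k,
\]
which is exactly \eqref{eq:toprove}.

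There is no real obstacle here: the only subtlety is making sure the Abel rearrangement is set up so that each coefficient of $A_k$ and $B_k$ is non-negative, which is precisely what the monotonicity and non-negativity of $(c_k)$ buys us. (The non-negativity of the $a_k$ and $b_k$ themselves is in fact not used in the above argument; it is the non-negativity of $A_k$ together with $c_n \geq 0$ and the monotonicity of $c$ that drives the inequality.)
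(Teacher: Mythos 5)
Your proof is correct, and it takes a genuinely different route from the paper's. The paper proves \Cref{lem:quasilem} by induction on $n$: it handles the case $c_2=0$ separately, and otherwise folds the first two terms into modified sequences $a_1' = (c_1/c_2)a_1 + a_2$ and $b_1' = (c_1/c_2)b_1 + b_2$, checks that the hypotheses carry over to the shortened sequences, and invokes the inductive hypothesis on $n-1$ terms. Your proof is the direct Abel summation (summation by parts) argument; the paper itself remarks that the lemma is ``a consequence of Abel's transformation'' as stated in \cite[Ch.~XII]{MPFbook} but chooses to give the inductive argument instead. Your approach avoids both the induction and the case distinction on whether $c_2=0$, and it makes visibly transparent which hypotheses are doing the work: the Abel identity
\[
\sum_{k=1}^n a_k c_k = A_n c_n + \sum_{k=1}^{n-1} A_k (c_k - c_{k+1})
\]
puts every $A_k$ against a non-negative coefficient (because $c$ is non-increasing and $c_n \geq 0$), so the majorization $A_k \leq B_k$ can be applied termwise. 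One small correction to your closing parenthetical: in fact even the non-negativity of the partial sums $A_k$ is not used, only $A_k \leq B_k$; what the non-negativity of $(a_k)$ and $(b_k)$ is really for in the paper is the specific application (sorting by $h(i)$), not this lemma itself.
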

\begin{proof}
	We shall prove \eqref{eq:toprove} holds by induction on $n\geq 1$. The base case $n=1$ follows immediately from the fact that $a_1\leq b_1$ and $c_1\geq 0$. Thus we assume $\sum_{k=1}^{n-1} a_k\cdot c_k \leq \sum_{k=1}^{n-1} b_k\cdot c_k$ holds for all sequences $(a_k)_{k=1}^{n-1}, (b_k)_{k=1}^{n-1}$ and $ (c_k)_{k=1}^{n-1}$ satisfying the conditions of the lemma.
	
	For the inductive step, suppose we are given sequences $(a_k)_{k=1}^{n}$,  $(b_k)_{k=1}^{n}$ and $(c_k)_{k=1}^{n}$ satisfying the conditions of the lemma. If $c_2=0$ then, since $(c_k)_{k=1}^n$ is non-increasing and non-negative, $c_k=0$ for all $k\geq 2$. Thus as $a_1\leq b_1$ and $c_1\geq 0$ by the precondition of the lemma, we conclude 
	\[ \sum_{k=1}^n a_k\cdot c_k = a_1 \cdot c_1 \leq b_1\cdot c_1 =\sum_{k=1}^n b_k\cdot c_k. \] We now treat the case $c_2>0 $. Define the non-negative sequences $(a_k')_{k=1}^{n-1}$ and $(b_k')_{k=1}^{n-1}$ 
	as follows:
	\begin{itemize}
		\item $a'_{1} = \frac{c_1}{c_{2}} \cdot a_{1} + a_2$ and $a'_{k} = a_{k+1}$ for $2\leq k \leq n-1$ ,
		\item $b'_{1} = \frac{c_1}{c_{2}} \cdot b_{1} + b_2$ and $b'_{k} = b_{k+1}$ for $2\leq k \leq n-1$,
	\end{itemize} Then as the inequalities $c_1\geq c_2 $, $a_1 \leq b_1$ and $\sum_{i=1}^n a_k \leq \sum_{i=1}^n b_k$ hold by assumption, we have  
	\[ \sum_{k=1}^{n-1}a_k' = \left(\frac{c_{1}}{c_{2}} - 1  \right)a_1 + \sum_{k=1}^{n}a_k  \leq \left(\frac{c_{1}}{c_{2}} - 1  \right)b_1 +  \sum_{k=1}^{n}b_k   = \sum_{k=1}^{n-1}b_k'. \] Thus if we also let $(c_k')_{k=1}^{n-1} = (c_{k+1})_{k=1}^{n-1} $, which is positive and non-increasing, then 
	\[\sum_{k=1}^{n-1} a_k'\cdot c_k'  \leq \sum_{k=1}^{n-1} b_k'\cdot c_k',\]by the inductive hypothesis. However \[\sum_{k=1}^{n-1} a_k'\cdot c_k' = \left(\frac{c_1}{c_{2}} \cdot a_{1} + a_2 \right)c_2 +  \sum_{k=2}^{n-1} a_{k+1}\cdot c_{k+1} = \sum_{k=1}^{n} a_k\cdot c_k, \] and likewise $\sum_{k=1}^{n-1} b_k'\cdot c_k' = \sum_{k=1}^{n} b_k\cdot c_k$. The result follows.
\end{proof}

Again, for completeness, we define Schur-convexity (see \cite{MRBook}) and state two basic results:

\begin{defi}
	A function $f : \mathbb{R}^n \to \mathbb{R}$ is Schur-convex if for any non-decreasing $x, y \in \mathbb{R}^n$, if $x$ majorizes $y$ then $f(x) \geq f(y)$. A function $f$ is Schur-concave if $-f$ is Schur-convex.
\end{defi}

\begin{lem} \label{lem:sum_of_convex_is_schur_convex}
	Let $g : \mathbb{R} \to \mathbb{R}$ be a convex (resp. concave) function. Then, the function $g(x_1, \ldots, x_n) := \sum_{i = 1}^n g(x_i)$ is Schur-convex (resp. Schur-concave). 
\end{lem}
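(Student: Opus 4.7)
My plan is to prove the statement directly from the classical characterization of majorization via so-called \emph{T-transforms} (also known as Robin Hood transfers). Recall that a T-transform on a vector $x \in \mathbb{R}^n$ picks two indices $i \neq j$ with $x_i \geq x_j$ and a parameter $\lambda \in [0,1]$, then replaces $x_i$ by $\lambda x_i + (1-\lambda) x_j$ and $x_j$ by $(1-\lambda) x_i + \lambda x_j$, leaving all other coordinates unchanged. A standard theorem (see e.g.\ \cite{MRBook}) states that $x$ majorizes $y$ if and only if $y$ can be obtained from $x$ by a finite sequence of T-transforms (possibly combined with a permutation of coordinates).

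With this in hand, the argument splits into two easy steps. First, since $g(x_1,\dots,x_n) = \sum_{i=1}^n g(x_i)$ is symmetric in its arguments, permutations do not affect the value, so it suffices to handle a single T-transform. Second, for a T-transform acting on coordinates $i$ and $j$, only the two summands $g(x_i)$ and $g(x_j)$ change, and I would show the key inequality
\[
g(x_i) + g(x_j) \geq g\bigl(\lambda x_i + (1-\lambda) x_j\bigr) + g\bigl((1-\lambda) x_i + \lambda x_j\bigr),
\]
which is an immediate consequence of the convexity of $g$: applying convexity to each term on the right yields the upper bounds $\lambda g(x_i) + (1-\lambda)g(x_j)$ and $(1-\lambda)g(x_i) + \lambda g(x_j)$, and summing these gives exactly $g(x_i)+g(x_j)$. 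Iterating over the finite sequence of T-transforms that witnesses the majorization then gives $\sum g(x_i) \geq \sum g(y_i)$, establishing Schur-convexity of the sum. The concave case follows by replacing $g$ with $-g$ and applying what has already been proved.

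The only nontrivial ingredient is the T-transform characterization of majorization, which I would simply cite from \cite{MRBook}. Once that is invoked, no real obstacle remains: the whole content of the lemma reduces to the two-variable convexity inequality above, which is essentially definitional. An alternative route would be to verify the Schur-Ostrowski criterion (checking that $(x_i - x_j)(\partial_i f - \partial_j f) \geq 0$) in the differentiable case and then approximate, but the T-transform argument avoids differentiability assumptions on $g$ and seems cleaner.
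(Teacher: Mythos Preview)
Your argument is correct: the T-transform (Robin Hood) characterization of majorization reduces the problem to the two-coordinate inequality, which follows immediately from the definition of convexity exactly as you wrote, and the concave case is obtained by passing to $-g$. There is nothing to fix.

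As for comparison: the paper does not actually supply a proof of this lemma. It is stated in the appendix as a standard auxiliary fact (with \cite{MRBook} in the vicinity) and then invoked in the proof of the subsequent claim. So your proposal is not an alternative to the paper's argument but rather fills in a proof the authors chose to omit. Your choice of the T-transform route is the textbook one from \cite{MRBook} and is arguably the cleanest, since it avoids any differentiability assumption that the Schur--Ostrowski criterion would need.
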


\begin{lem}\label{clm:schur}
	For any $\alpha > 0$, for any $\beta \in \mathbb{R}$ and any $\Delta \in \mathbb{R}$, consider the function \[f(x_1,x_2,\ldots,x_k) = \sum_{j=1}^{k} \exp\left(- \alpha x_j \right),\] where $\sum_{j=1}^{k} x_j \geq  \Delta$ and $x_{j} \geq \beta $ for all $1 \leq j \leq k$. Then,
	\[
	f(x_1,x_2,\ldots,x_k) \leq (k-1) \cdot \exp\left(- \alpha \cdot \beta \right) + 1 \cdot \exp\left(- \alpha \cdot \left( \Delta - (k-1) \cdot \beta \right) \right).
	\]
\end{lem}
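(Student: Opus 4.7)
The plan is to exploit Schur-convexity of $f$ via \Cref{lem:sum_of_convex_is_schur_convex}, since $g(x) = e^{-\alpha x}$ is convex for $\alpha > 0$, making $f(x_1,\ldots,x_k) = \sum_j g(x_j)$ Schur-convex. I would then reduce to a canonical ``most spread-out'' vector on the constraint set and compute.

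First I would reduce to the case $\sum_j x_j = \Delta$. Indeed, each coordinate $x_j \mapsto e^{-\alpha x_j}$ is decreasing, so if $\sum_j x_j > \Delta$ we may decrease coordinates (keeping each $\geq \beta$) until either the sum equals $\max(\Delta, k\beta)$. This only increases $f$, so it suffices to bound $f$ in either the case where $\sum_j x_j = \Delta \geq k\beta$, or $x_j = \beta$ for all $j$ (when $\Delta < k\beta$). The latter case is immediate: $f = k e^{-\alpha\beta}$ while the RHS equals $(k-1)e^{-\alpha\beta} + e^{-\alpha(\Delta - (k-1)\beta)} \geq (k-1)e^{-\alpha\beta} + e^{-\alpha\beta} = ke^{-\alpha\beta}$, since $\Delta - (k-1)\beta \leq \beta$.

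For the main case $\sum_j x_j = \Delta$ with $\Delta \geq k\beta$, I would show that the vector $y := (\Delta - (k-1)\beta, \beta, \beta, \ldots, \beta)$ majorizes $x := (x_1,\ldots,x_k)$ (both sorted non-increasingly). For $m=1$: since the remaining $k-1$ coordinates of $x$ are each $\geq \beta$, the largest coordinate satisfies $x_{[1]} = \Delta - \sum_{j\geq 2} x_{[j]} \leq \Delta - (k-1)\beta = y_{[1]}$. For $2 \leq m \leq k-1$: the $k-m$ smallest coordinates of $x$ sum to at least $(k-m)\beta$, hence $\sum_{j=1}^m x_{[j]} \leq \Delta - (k-m)\beta = \sum_{j=1}^m y_{[j]}$. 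Equality holds at $m = k$. Thus $y$ majorizes $x$, and Schur-convexity of $f$ yields
\[
f(x) \leq f(y) = (k-1)\exp(-\alpha\beta) + \exp(-\alpha(\Delta - (k-1)\beta)),
\]
as required.

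The entire argument is essentially routine; there is no real obstacle, but the one place requiring a small amount of care is the initial reduction, to be sure we do not accidentally violate the lower bound $x_j \geq \beta$ while shrinking coordinates, and to handle the ``degenerate'' regime $\Delta < k\beta$ separately so that the vector $y$ above (whose last coordinate would drop below $\beta$) is not used outside its natural range.
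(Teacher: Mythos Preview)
Your proof is correct and follows essentially the same approach as the paper: invoke \Cref{lem:sum_of_convex_is_schur_convex} to get Schur-convexity of $f$ (the paper writes ``concave'' and ``Schur-concave'', which is a slip --- $e^{-\alpha z}$ is convex), then identify the extremal ``most spread-out'' vector $(\Delta-(k-1)\beta,\beta,\ldots,\beta)$. You simply flesh out the details the paper leaves implicit, in particular the reduction to $\sum_j x_j = \Delta$ and the explicit verification of majorization, and you handle the degenerate regime $\Delta < k\beta$ that the paper does not mention.
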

\begin{proof}
	Note that by \Cref{lem:sum_of_convex_is_schur_convex}, it follows that $f(x_1, \ldots, x_n)= \sum_{i=1}^n g(x_i)$ is Schur-concave, since $g(z) = e^{-\alpha z}$ is concave for $\alpha > 0$. As a consequence, the function attains its maximum if the values $(x_1,x_2,\ldots,x_k)$ are as ``spread out'' as possible, i.e., if any prefix sum of the values ordered non-increasingly is as large as possible.
\end{proof}

\begin{lem} \label{lem:geometric_arithmetic}
	Consider any sequence $(z_i)_{i \in \mathbb{N}}$ such that, for some $a > 0$ and $b > 0$, for every $i \geq 1$,
	\[
	z_i \leq z_{i-1} \cdot a + b.
	\]
	Then for every $i \in \mathbb{N}$, 
	\[
	z_i \leq z_0 \cdot a^i + b \cdot \sum_{j = 0}^{i-1} a^j.
	\]
	Further, if $a < 1$, then
	\[
	z_i 
	\leq z_0 \cdot a^i + \frac{b}{1 - a}.
	\]
\end{lem}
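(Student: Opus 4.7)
The plan is to prove both bounds by straightforward induction on $i$, since the hypothesis $z_i \leq z_{i-1}\cdot a + b$ is exactly the linear recurrence whose closed-form solution is the claimed expression.

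For the first bound, I would induct on $i$. The base case $i=0$ reads $z_0 \leq z_0 \cdot a^0 + b\cdot 0 = z_0$, which holds trivially (the empty sum $\sum_{j=0}^{-1}a^j$ is interpreted as $0$). For the inductive step, assume the bound holds at index $i-1$, so that
\[
z_{i-1} \leq z_0 \cdot a^{i-1} + b \cdot \sum_{j=0}^{i-2} a^j.
\]
Applying the hypothesis of the lemma and then the inductive hypothesis yields
\[
z_i \leq z_{i-1}\cdot a + b \leq \left(z_0 \cdot a^{i-1} + b\cdot \sum_{j=0}^{i-2} a^j \right)\cdot a + b = z_0 \cdot a^i + b\cdot \sum_{j=1}^{i-1} a^j + b = z_0 \cdot a^i + b \cdot \sum_{j=0}^{i-1} a^j,
\]
after re-indexing the sum and absorbing the extra $b = b\cdot a^0$ term. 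This completes the induction.

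For the second bound, assuming $a<1$, I simply use the geometric series estimate $\sum_{j=0}^{i-1} a^j \leq \sum_{j=0}^{\infty} a^j = \frac{1}{1-a}$ (valid since $a>0$), and substitute into the first bound to obtain $z_i \leq z_0 \cdot a^i + \frac{b}{1-a}$.

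The main obstacle here is essentially nonexistent — this is a routine discrete Gronwall-type inequality and the proof is mechanical. The only mild care needed is in handling the base case (empty sum convention) and in the re-indexing of the geometric sum in the inductive step; no nontrivial machinery is required.
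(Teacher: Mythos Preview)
Your proof is correct and follows essentially the same approach as the paper's: induction on $i$ with the trivial base case $z_0 \leq z_0$, an inductive step that multiplies the hypothesis by $a>0$ and adds $b$, and then the geometric series bound $\sum_{j=0}^{i-1} a^j \leq \frac{1}{1-a}$ for the second claim. The only cosmetic difference is that the paper indexes the induction as $i \to i+1$ rather than $i-1 \to i$.
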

\begin{proof}
	We will prove the first claim by induction. For $i = 0$, $z_0 \leq z_0$. Assume the induction hypothesis holds for some $i \geq 0$, then since $a > 0$,
	\[
	z_{i+1} \leq z_{i} \cdot a + b \leq \Big(z_0 \cdot a^i + b \cdot \sum_{j = 0}^{i-1} a^j \Big) \cdot a + b = z_0 \cdot a^{i+1} +b \cdot \sum_{j = 0}^i a^j.
	\]
	Hence, the first claim follows. The second part of the claim is immediate, since for $a \in (0,1)$, $\sum_{j=0}^{\infty} a^j = \frac{1}{1-a}$.
\end{proof}

\section{Counterexample for the Exponential Potential Function}\label{sec:potentialgoesup}

In this section, we present a configuration for which the exponential potential function $\Phi$ (such as the one defined in \Cref{sec:analysis_filling}), increases by a multiplicative factor in expectation, in a round $t$ where the ``good event'' $\mathcal{G}^t$ does not hold.

\begin{clm}\label{clm:counterexample}
	For any constant $\alpha > 0$ and for sufficiently large $n$, consider the (normalized) load configuration,
	\[
	y^t = (\sqrt{n}, \underbrace{0, \ldots, 0}_{n-\sqrt{n}-1\text{ bins}}, \underbrace{-1, \ldots, -1}_{\sqrt{n}\text{ bins}}).
	\]
	Then, for the \Packing process, the potential function $\Phi^t := \sum_{i : y_i^t \geq 0} e^{\alpha y_i^t}$ will increase in expectation, i.e.,
	\[
	\ex{\Phi^{t+1} \mid \mathfrak{F}^t} \geq \Phi^t \cdot \Big(1 + 0.1 \cdot \frac{\alpha^2}{n}\Big).
	\]
\end{clm}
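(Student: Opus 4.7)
The plan is to compute $\ex{\Phi^{t+1} \mid \mathfrak{F}^t}$ directly by conditioning on which of the three load classes is sampled. Write $A := e^{\alpha\sqrt n}$, so that $\Phi^t = A + (n - \sqrt n - 1)$. Under Packing, the sampled bin $i$ is uniform, and three cases arise: (1) with probability $p_1 = (n-\sqrt n -1)/n$ a bin with $y_i^t = 0$ is sampled and receives one ball, (2) with probability $p_2 = 1/n$ the heavy bin with $y_i^t = \sqrt n$ is sampled and receives one ball, (3) with probability $p_3 = \sqrt n/n$ an underloaded bin with $y_i^t = -1$ is sampled and receives $\lceil 1\rceil + 1 = 2$ balls. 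In each case, I would write down the new normalized loads (noting that the average shifts by $1/n$ or $2/n$) and list precisely which bins remain at $y \geq 0$ so as to evaluate $\Phi^{t+1}$.

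After this bookkeeping one obtains
\[
\ex{\Phi^{t+1}\mid\mathfrak{F}^t} = p_1\bigl[e^{\alpha(1-1/n)} + A e^{-\alpha/n}\bigr] + p_2 \cdot A e^{\alpha(1-1/n)} + p_3\bigl[e^{\alpha(1-2/n)} + A e^{-2\alpha/n}\bigr].
\]
The next step is to split this into an ``$A$-part'' and a ``constant-part'':
\[
\ex{\Phi^{t+1}\mid\mathfrak{F}^t} = A\cdot \bigl(p_1 e^{-\alpha/n} + p_2 e^{\alpha(1-1/n)} + p_3 e^{-2\alpha/n}\bigr) + \bigl(p_1 e^{\alpha(1-1/n)} + p_3 e^{\alpha(1-2/n)}\bigr).
\]
Using $p_1 + p_2 + p_3 = 1$ and expanding $e^{\pm \alpha/n} = 1 \pm \alpha/n + \alpha^2/(2n^2) + O(\alpha^3/n^3)$ shows that the bracket multiplying $A$ equals $1 + \alpha^2/(2n) + O(1/n^{3/2})$ once one substitutes $p_2 = 1/n$ (so that $p_2(e^\alpha-1) = (e^\alpha-1)/n = \alpha/n + \alpha^2/(2n)+\cdots$), which cancels the $-\alpha/n$ coming from $p_1 e^{-\alpha/n}$. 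The constant-part is at most $e^\alpha$, since $p_1 + p_3 \leq 1$.

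Combining these estimates,
\[
\ex{\Phi^{t+1}\mid\mathfrak{F}^t} \geq A\Bigl(1 + \tfrac{\alpha^2}{2n} - O(1/n^{3/2})\Bigr) + 0 \geq \Phi^t\Bigl(1 + \tfrac{0.1\alpha^2}{n}\Bigr) + \bigl(A \cdot \tfrac{0.4\alpha^2}{n} - (n-\sqrt n - 1)\bigr).
\]
Since $A = e^{\alpha\sqrt n}$ dominates any polynomial in $n$, for fixed $\alpha > 0$ and $n$ large we have $A \cdot 0.4\alpha^2/n \gg n$, so the correction term is nonnegative, yielding the claim. The main (modest) obstacle is keeping track of the lower-order terms in the Taylor expansion carefully enough to verify that the $\alpha/n$ first-order contributions cancel and leave a genuine $+\alpha^2/(2n)$ drift; once this is done, the fact that $A$ dwarfs $n$ absorbs all remaining errors.
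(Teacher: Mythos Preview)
Your proposal is correct and follows essentially the same route as the paper. Both arguments isolate the contribution of the heavy bin (your ``$A$-part'' is exactly the paper's $\ex{\Phi_1^{t+1}\mid\mathfrak{F}^t}$), verify via Taylor expansion that its multiplicative factor is $1 + c\alpha^2/n$ for some $c>0.1$ (the paper uses the explicit bound $e^z \geq 1+z+0.3z^2$ to get $c=0.2$, while you extract $c=1/2$ asymptotically), and then absorb the $O(n)$ remainder of $\Phi^t$ using $e^{\alpha\sqrt n}\gg n$. The only cosmetic difference is that you track the other bins' contributions to $\Phi^{t+1}$ explicitly before discarding them, whereas the paper simply uses $\ex{\Phi^{t+1}}\geq\ex{\Phi_1^{t+1}}$.
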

\begin{proof}
	Consider the contribution of bin $i=1$, with $y_1^t = \sqrt{n}$.
	\begin{align*}	&\ex{\Phi_1^{t+1} \mid \mathfrak{F}^t}\\ 
		&\quad  = e^{\alpha \sqrt{n}} \cdot \Big( 1 + \frac{1}{n} \cdot (e^{\alpha - \alpha/n} - 1) + \frac{n - \sqrt{n} - 1}{n} \cdot (e^{-\alpha/n} -1) + \frac{\sqrt{n}}{n} \cdot (e^{-2\alpha/n} -1) \Big).\end{align*} 
	Now using a Taylor estimate $e^z \geq 1 + z + 0.3 z^2$ for $z \geq -1.5$, 
	\begin{align*}
		\ex{\Phi_1^{t+1} \mid \mathfrak{F}^t}
		& \geq e^{\alpha \sqrt{n}} \cdot \Big( 1 + \frac{1}{n} \cdot \Big(\alpha - \frac{\alpha}{n} + 0.3 \cdot \Big( \alpha - \frac{\alpha}{n}\Big)^2\Big)  \\
		& \qquad + \frac{n - \sqrt{n} - 1}{n} \cdot \Big(-\frac{\alpha}{n} + 0.3 \cdot \frac{\alpha^2}{n^2}\Big) + \frac{\sqrt{n}}{n} \cdot \Big(-\frac{2\alpha}{n} + 1.2 \cdot \frac{\alpha^2}{n^2}\Big) \Big) \\
		& = e^{\alpha \sqrt{n}} \cdot \Big( 1 + \frac{\alpha + 0.3 \cdot \alpha^2}{n} -\frac{\alpha}{n} + o(n^{-1}) \Big) \\ &= e^{\alpha \sqrt{n}} \cdot \Big( 1 + 0.3 \cdot \frac{\alpha^2}{n} + o(n^{-1}) \Big) \\
		& \geq e^{\alpha \sqrt{n}} \cdot \Big( 1 + 0.2 \cdot \frac{\alpha^2}{n} \Big).
	\end{align*}
	At round $t$, the contribution of the rest of the bins is at most $n$, i.e., $\sum_{i > 1, y_i^t \geq 0} \Phi^t \leq n$. Note that since $\alpha$ is a constant for sufficiently large $n$, we have $n \cdot (1 + 0.1 \cdot \frac{\alpha^2}{n})  < 0.1 \cdot \alpha^2 \cdot e^{\alpha \sqrt{n}}$. Hence,
	\begin{align*}
		\ex{\Phi^{t+1} \mid \mathfrak{F}^t} 
		& \geq e^{\alpha \sqrt{n}} \cdot \Big( 1 + 0.2 \cdot \frac{\alpha^2}{n} \Big)
		= e^{\alpha \sqrt{n}} \cdot \Big( 1 + 0.1 \cdot \frac{\alpha^2}{n} \Big) + 0.1 \cdot \frac{\alpha^2}{n} \cdot e^{\alpha \sqrt{n}} \\ 
		& \geq \Phi_1^t \cdot \Big( 1 + 0.1 \cdot \frac{\alpha^2}{n} \Big) + n \cdot (1 + 0.1 \cdot \frac{\alpha^2}{n}) \\
		& \geq \Phi_1^t \cdot \Big( 1 + 0.1 \cdot \frac{\alpha^2}{n} \Big) + \left(\sum_{i > 1, y_i^t \geq 0} \Phi^t \right) \cdot \Big( 1 + 0.1 \cdot \frac{\alpha^2}{n} \Big) \\
		& = \Phi^t \cdot \Big( 1 + 0.1 \cdot \frac{\alpha^2}{n} \Big),
	\end{align*}as claimed.
\end{proof}

\end{document}